\def\calP{\mathcal{P}}
\def\bbR{\mathbb{R}}
\newtheorem{observation}{Observation}
\def\ftest0{{\em FTEST0}}
\def\dftest0{{\em DFTEST0}}
\def\msearch{{\em MSEARCH}}
\def\calA{\mathcal{A}}
\def\calM{\mathcal{M}}
\def\ftestnew{{\em FTEST1}}
\def\dftestnew{{\em DFTEST1}}
\begin{document}

\title{An $O(n\log n)$-Time Algorithm for the $k$-Center Problem in Trees\thanks{A preliminary version of this paper will appear in the Proceedings of the 34th International Symposium on Computational Geometry (SoCG 2018).}
}

\author{Haitao Wang\inst{1}\and Jingru Zhang\inst{2}}

\institute{Department of Computer Science\\
Utah State University, Logan, UT 84322, USA\\
\email{haitao.wang@usu.edu} \and
Department of Computer Science\\ 
Marshall University, Huntington, WV 25755, USA\\
\email{jingru.zhang@marshall.edu}
}

\maketitle

\pagestyle{plain}
\pagenumbering{arabic}
\setcounter{page}{1}

\begin{abstract}
We consider a classical $k$-center problem in trees. Let $T$ be a tree
of $n$ vertices and every vertex has a nonnegative weight. The problem
is to find $k$ centers on the edges of $T$ such that the maximum
weighted distance from all vertices to their closest centers is
minimized. Megiddo and Tamir (SIAM J. Comput., 1983) gave an algorithm
that can solve the problem in $O(n\log^2 n)$ time by using Cole's parametric search.
Since then it has been open for over three decades whether the
problem can be solved in $O(n\log n)$ time. In this paper, we present
an $O(n\log n)$ time algorithm for the problem and thus settle the
open problem affirmatively.
\end{abstract}


\section{Introduction}
\label{sec:Gintro}

In this paper, we study a classical $k$-center problem in trees. Let
$T$ be a tree of $n$ vertices. Each edge $e(u,v)$ connecting two
vertices $u$ and $v$ has a positive length $d(u,v)$, and we consider
the edge as a line segment of length $d(u,v)$ so that we can talk
about ``points'' on the edge. For any two points $p$ and $q$ of $T$,
there is a unique path in $T$ from $p$ to $q$, denoted by $\pi(p,q)$,
and by slightly abusing the notation, we use $d(p,q)$ to denote the
length of $\pi(p,q)$. Each vertex $v$ of $T$ is associated with a weight $w(v)\geq
0$. The {\em $k$-center problem} is to compute a set $Q$ of $k$ points
on $T$, called {\em centers}, such that the maximum weighted distance
from all vertices of $T$ to their closest centers is minimized,
or formally, the value $\max_{v\in V(T)}\min_{q\in Q}\{w(v)\cdot
d(v,q)\}$ is minimized, where $V(T)$ is the vertex set of $T$. Note
that each center can be in the interior of an edge of $T$.

Kariv and Hakimi~\cite{ref:KarivAn79} first gave an $O(n^2\log n)$
time algorithm for the problem. Jeger and Kariv~\cite{ref:JegerAl85}
proposed an $O(kn\log n)$ time algorithm. Megiddo and
Tamir~\cite{ref:MegiddoNe83} solved the problem in $O(n\log^2
n\log\log n)$ time, and the running time of their algorithm can be
reduced to
$O(n\log^2 n)$ by applying Cole's parametric
search~\cite{ref:ColeSl87}. Some progress has been made very recently
by Banik et al.~\cite{ref:BanikCo16} for small values of $k$, where an $O(n\log n+k\log^2
n\log(n/k))$-time algorithm and another $O(n\log n+k^2\log^2
(n/k))$-time algorithm were given.

Since Megiddo and Tamir's work~\cite{ref:MegiddoNe83}, it has been
open whether the problem can be solved in $O(n\log n)$ time. In this
paper, we settle this three-decade long open problem affirmatively by
presenting an $O(n\log n)$-time algorithm. Note that the previous
$O(n\log^2 n)$-time algorithm~\cite{ref:ColeSl87,ref:MegiddoNe83} and
the first algorithm in~\cite{ref:BanikCo16} both rely on Cole's
parametric search, which involves a large constant in the time complexity due to the AKS sorting
network~\cite{ref:AjtaiAn83}. Our algorithm, however, avoids Cole's
parametric search.

If each center is required to be located at a vertex of $T$, then we call it the {\em discrete} case. The previously best-known algorithm for this case runs in $O(n\log^2 n)$ time~\cite{ref:MegiddoAn81}. Our techniques also solve the discrete case in $O(n\log n)$ time.

\subsection{Related Work}

Many variations of the $k$-center problem have been studied.
If $k=1$, then the problem is solvable in $O(n)$
time~\cite{ref:MegiddoLi83}. If $T$ is a path, the $k$-center
problem was already solved in $O(n\log n)$
time~\cite{ref:ChenEf15,ref:ColeSl87,ref:MegiddoNe83}, and
Bhattacharya and Shi~\cite{ref:BhattacharyaOp07} also gave an
algorithm whose running time is linear in $n$ but exponential in $k$.

For the unweighted case where the vertices of $T$ have the same weight,
an $O(n^2\log n)$-time algorithm was given
in~\cite{ref:ChandrasekaranPo82} for the $k$-center problem. Later,
Megiddo et al.~\cite{ref:MegiddoAn81} solved the problem in $O(n\log^2
n)$ time, and the algorithm was improved to $O(n\log n)$
time~\cite{ref:FredericksonFi83}. Finally,
Frederickson~\cite{ref:FredericksonPa91} solved the problem in $O(n)$
time. The above four papers also solve the discrete case and the following problem
version in the same running times:
All points of $T$ are considered as demand points and the
centers are required to be at vertices of $T$. Further, if all points of
$T$ are demand points and centers can be any points of $T$,  Megiddo
and Tamir solved the problem in $O(n\log^3 n)$
time~\cite{ref:MegiddoNe83}, and the running time can be reduced to
$O(n\log^2 n)$ by applying Cole's parametric
search~\cite{ref:ColeSl87}.

As related problems, Frederickson~\cite{ref:FredericksonOp91} presented $O(n)$-time algorithms for the following tree partitioning problems: remove $k$ edges from $T$ such that the maximum (resp., minimum) total weight of all connected subtrees is minimized (resp., maximized).

Finding $k$ centers in a general graph is NP-hard~\cite{ref:KarivAn79}. The geometric version of the problem in the plane is also NP-hard~\cite{ref:MegiddoOn84}, i.e., finding $k$ centers for $n$ demanding points in the plane.
Some special cases, however, are solvable in polynomial time. For example, if $k=1$, then the problem can be solved in $O(n)$ time~\cite{ref:MegiddoLi83}, and if $k=2$, it can be solved in $O(n\log^2 n\log^2 \log n)$ time~\cite{ref:ChanMo99} (also refer to~\cite{ref:AgarwalAn08} for a faster randomized algorithm). If we require all centers to be on a given line, then the problem of finding $k$ centers can be solved in polynomial time~\cite{ref:BrassTh11,ref:KarmakarSo13,ref:WangLi16}. Recently, problems on uncertain data have been studied extensively and some $k$-center problem variations on uncertain data were also considered, e.g., \cite{ref:CormodeAp08,ref:HuangSt17,ref:WangCo17,ref:WangCo17WADS,ref:WangOn15,ref:WangA16}.

\subsection{Our Approach}
\label{sec:approach}

We discuss our approach for the non-discrete problem, and that for the discrete case is similar (and even simpler).
Let $\lambda^*$ be the optimal objective value, i.e., $\lambda^*=\max_{v\in V(T)}\min_{q\in Q}\{w(v)\cdot d(v,q)\}$ for an optimal solution $Q$. A {\em feasibility test} is to determine whether $\lambda\geq \lambda^*$ for a given value $\lambda$, and if yes, we call $\lambda$ a {\em feasible value}. Given any $\lambda$, the feasibility test can be done in $O(n)$ time~\cite{ref:KarivAn79}.

Our algorithm follows an algorithmic scheme in~\cite{ref:FredericksonPa91} for the unweighted case, which is similar to that in~\cite{ref:FredericksonOp91} for the tree partition problems. However, a big difference is that three schemes were proposed in~\cite{ref:FredericksonOp91,ref:FredericksonPa91} to gradually solve the problems in $O(n)$ time, while our approach only follows the first scheme and this significantly simplifies the algorithm. 
One reason the first scheme is sufficient to us is that our algorithm runs in $O(n\log n)$ time, which has a logarithmic factor more than the feasibility test algorithm. In contrast, most efforts of the last two schemes of~\cite{ref:FredericksonOp91,ref:FredericksonPa91} are to reduce the running time of the algorithms to $O(n)$, which is the same as their corresponding feasibility test algorithms.

More specifically, our algorithm consists of two phases. The first phase will gather information so that each feasibility test can be done faster in sub-linear time. By using the faster feasibility test algorithm, the second phase computes the optimal objective value $\lambda^*$. As in~\cite{ref:FredericksonPa91}, we also use a {stem-partition} of the tree $T$.
In addition to a matrix searching algorithm~\cite{ref:FredericksonGe84}, we utilize some other techniques, such as { the 2D sublist LP queries}~\cite{ref:ChenAp13} and line arrangements searching~\cite{ref:ChenA13}, etc.

\paragraph{Remark.} It might be tempting to see whether the techniques of~\cite{ref:FredericksonOp91,ref:FredericksonPa91} can be adapted to solving the problem in $O(n)$ time. Unfortunately, we found several obstacles that prevent us from doing so. For example, a key ingredient of the techniques in~\cite{ref:FredericksonOp91,ref:FredericksonPa91} is to build sorted matrices implicitly in $O(n)$ time so that each matrix element can be obtained in $O(1)$ time. In our problem, however, since the vertices of the tree $T$ have weights, it is elusive how to achieve the same goal. Indeed, this is also one main difficulty to solve the problem even in $O(n\log n)$ time (and thus makes the problem open for such a long time). As will be seen later, our $O(n\log n)$ time algorithm circumvents the difficulty by combining several techniques. Based on our study, although we do not have a proof, we suspect that $\Omega(n\log n)$ is a lower bound of the problem.
\vspace{0.1in}

The rest of the paper is organized as follows. In Section~\ref{sec:pre}, we review some previous techniques that will be used later. In Section~\ref{sec:stem}, we describe our techniques for dealing with a so-called ``stem''. We finally solve the $k$-center problem on $T$ in Section~\ref{sec:tree}. By slightly modifying the techniques, we solve the discrete case in Section~\ref{sec:discrete}.

\section{Preliminaries}
\label{sec:pre}

In this section we review some techniques that will be used later in our algorithm.

\subsection{The Feasibility Test \ftest0}
\label{sec:test}
Given any value $\lambda$, the feasibility test is to determine whether $\lambda$ is feasible, i.e., whether $\lambda\geq\lambda^*$. We say that a vertex $v$ of $T$ is {\em covered} (under $\lambda$) by a center $q$ if $w(v)\cdot d(v,q)\leq \lambda$. Note that $\lambda$ is feasible if and only if we can place $k$ centers in $T$ such that all vertices are covered. In the following we describe a linear-time feasibility test algorithm, which is essentially the same as the one in~\cite{ref:KarivAn79} although our description is much simpler.

We pick a vertex of $T$ as the root, denoted by $\gamma$. For each vertex $v$, we use $T(v)$ to denote the subtree of $T$ rooted at $v$. Following a post-order traversal on $T$, we place centers in a bottom-up and greedy manner.
For each vertex $v$, we maintain two values $sup(v)$ and $dem(v)$, where $sup(v)$ is the distance from $v$ to the closest center that has been placed in $T(v)$, and $dem(v)$ is the maximum distance from $v$ such that if we place a center $q$
within such a distance from $v$ then all uncovered vertices of $T(v)$ can be covered by $q$. We also maintain a variable $count$ to record the number of centers that have been placed so far. Refer to Algorithm~\ref{algo:ftest0} for the pseudocode.

Initially, $count=0$, and for each vertex $v$, $sup(v)=\infty$ and $dem(v)=\frac{\lambda}{w(v)}$. Following a post-order traversal on $T$, suppose vertex $v$ is being visited. For each child $u$ of $v$, we update $sup(v)$ and $dem(v)$ as follows. If $sup(u)\leq dem(u)$, then we can use the center of $T(u)$ closest to $u$ to cover the uncovered vertices of $T(u)$, and thus we reset $sup(v)=\min\{sup(v),sup(u)+d(u,v)\}$.
Note that since $u$ connects $v$ by an edge, $d(v,u)$ is the length of the edge.
Otherwise, if $dem(u)<d(u,v)$, then we place a center on the edge $e(u,v)$ at distance $dem(u)$ from $u$, so we update $count=count+1$ and $sup(v)=\min\{sup(v),d(u,v)-dem(u)\}$. Otherwise (i.e., $dem(u)\geq d(u,v)$), we update $dem(v)=\min\{dem(v),dem(u)-d(u,v)\}$.

After the root $\gamma$ is visited, if $sup(\gamma)> dem(\gamma)$, then we place a center at $\gamma$ and update $count=count+1$. Finally, $\lambda$ is feasible if and only if $count\leq k$. The algorithm runs in $O(n)$ time. We use \ftest0\ to refer to the algorithm.

\paragraph{Remark.} The
algorithm \ftest0\ actually partitions $T$ into at most $k$ disjoint connected subtrees such that the vertices in
each subtree is covered by the same center that is located in the subtree. We will make use of this observation later.
\vspace{0.1in}

To solve the $k$-center problem, the key is to compute $\lambda^*$, after which we can find $k$ centers by applying \ftest0\ with $\lambda=\lambda^*$.

\begin{algorithm}[h]
\caption{The feasibility test algorithm \ftest0}
\label{algo:ftest0}
\KwIn{The tree $T$ with root $\gamma$ and a value $\lambda$}
\KwOut{Determine whether $\lambda$ is feasible}
\BlankLine
$count\leftarrow 0$\;
\For{each vertex $v$}
{
  $sup(v)\leftarrow \infty$, $dem(v)\leftarrow \frac{\lambda}{w(v)}$\;
}
\For{\em each vertex $v$ in the post-order traversal of $T$}
{
    \For{\em each child $u$ of $v$}
	{
     \eIf{$sup(u)\leq dem(u)$}
	{
		$sup(v)=\min\{sup(v),sup(u)+d(u,v)\}$\;
    }
	{
		\eIf{$dem(u)< d(u,v)$}
		{
           $count++$\tcc*{place a center on the edge $e(u,v)$ at distance $dem(u)$ from $u$}
		   $sup(v)=\min\{sup(v),d(u,v)-dem(u)\}$\;
\label{ln:ftest0}
	        }
            {
               $dem(v)=\min\{dem(v),dem(u)-d(u,v)\}$\;
            }
		
		}
}
}
\If{$sup(\gamma)>dem(\gamma)$}
{
        $count++$\tcc*{place a center at the root $\gamma$}
}
Return true if and only if $count\leq k$\;
\end{algorithm}

\subsection{A Matrix Searching Algorithm}
We review an algorithm \msearch, which was proposed in~\cite{ref:FredericksonGe84} and was widely used, e.g.,~\cite{ref:FredericksonOp91,ref:FredericksonPa91,ref:FredericksonFi83}.
A matrix is {\em sorted} if elements in every row and every column are in nonincreasing order. Given a set of sorted matrices, a searching range $(\lambda_1,\lambda_2)$ such that $\lambda_2$ is feasible and $\lambda_1$ is not, and a stopping count $c$, \msearch\ will produce a sequence of values one at a time for feasibility tests, and after each test, some elements in the matrices will be discarded. Suppose a value $\lambda$ is produced. If $\lambda\not\in (\lambda_1,\lambda_2)$, we do not need to test $\lambda$. If $\lambda$ is feasible, then $\lambda_2$ is updated to $\lambda$; otherwise, $\lambda_1$ is updated to $\lambda$. \msearch\ will stop once the number of remaining elements in all matrices is at most $c$. Lemma~\ref{lem:msearch} is proved in~\cite{ref:FredericksonGe84} and we slightly change the statement to accommodate our need.

\begin{lemma}\label{lem:msearch} {\em \cite{ref:FredericksonGe84,ref:FredericksonOp91,ref:FredericksonPa91,ref:FredericksonFi83}}
Let $\calM$ be a set of $N$ sorted matrices $\{M_1, M_2, \ldots, M_N\}$ such that $M_j$ is of dimension $m_j\times n_j$ with $m_j\leq n_j$, and $\sum_{j=1}^{N}m_j=m$. Let $c\geq 0$. The number of feasibility tests needed by \msearch\ to discard
all but at most $c$ of the elements is $O(\max\{\log\max_j\{n_j\}, \log(\frac{m}{c+1})\})$,
and the total time of \msearch\ exclusive of feasibility tests is $O(\kappa\cdot \sum_{j=1}^{N}m_j\log (\frac{2n_j}{m_j}))$, where $O(\kappa)$ is the time for evaluating each matrix element (i.e., the number of matrix elements that need to be evaluated is $O(\sum_{j=1}^{N}m_j\log (\frac{2n_j}{m_j}))$).
\end{lemma}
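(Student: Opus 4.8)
The plan is to prove the two claims of the lemma --- the $O(\max\{\log\max_j n_j,\ \log(m/(c+1))\})$ bound on the number of feasibility tests and the $O(\kappa\sum_{j}m_j\log(2n_j/m_j))$ bound on the remaining work --- by a single parallel search over all $N$ matrices driven by weighted medians. First I would isolate the structural fact that makes matrix search possible: because each $M_j$ is sorted (nonincreasing along rows and columns) and the feasibility predicate is monotone in $\lambda$, a single test of a value $\mu$ at some cell partitions every matrix into a top-left region now known to be feasible (all entries $\ge\lambda_2$) and a bottom-right region known to be infeasible (all entries $\le\lambda_1$), separated by a monotone staircase. Hence one test lets us discard an entire corner submatrix of each $M_j$, and only entries in the current open range $(\lambda_1,\lambda_2)$ survive as candidates.

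Next I would reduce to square submatrices: cut the columns of each $M_j$ into $\lceil n_j/m_j\rceil$ blocks of $m_j$ columns, producing square pieces of side at most $m_j$. The algorithm then maintains a multiset of square submatrices and proceeds in rounds. In each round I would extract $O(1)$ representative entries (e.g.\ a center/diagonal entry) from every surviving submatrix, compute a \emph{weighted} median $\mu$ of these representatives --- weighting each submatrix by its number of surviving entries --- and feed $\mu$ to the feasibility test, updating $(\lambda_1,\lambda_2)$ as in \msearch. Since $\mu$ is a weighted median, submatrices holding at least half the total weight have their representative on the side the test decides, and for each such submatrix we may discard a constant fraction (a corner quadrant) and re-split the residual L-shape into $O(1)$ smaller squares. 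Thus each round uses $O(1)$ tests and shrinks the total candidate weight by a constant factor.

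For the test-count bound I would split the analysis into two regimes, which is exactly why the bound is a maximum of two logarithms. The first regime shrinks the column dimension of each matrix down to its row dimension; a matrix with $n_j$ columns undergoes $O(\log(n_j/m_j))$ halvings, so over all matrices this costs $O(\log\max_j n_j)$ tests. The second regime drives the number of surviving candidate entries from $O(m)$ down to at most $c$, costing $O(\log(m/(c+1)))$ tests by the constant-factor shrinkage per round. For the work bound I would charge evaluated cells to matrices: the rectangular-to-square reduction of $M_j$ touches $O(m_j\log(n_j/m_j))$ cells (a per-row binary search across the $\approx n_j/m_j$ blocks), and the ensuing balanced square-recursion adds only $O(m_j)$ further cells whose per-level totals decrease geometrically, giving $O(m_j\log(2n_j/m_j))$ evaluated cells for $M_j$ and $O(\sum_{j}m_j\log(2n_j/m_j))$ overall; multiplying by the $O(\kappa)$ cost per evaluation and absorbing the linear-time weighted-median selections (dominated by the evaluated-cell count) yields the stated total.

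I expect the main obstacle to be the \emph{simultaneous} reduction: choosing one test value per round that guarantees measurable progress across matrices of very different dimensions. This is what forces the weighted-median choice together with an amortized potential argument bounding the total candidate weight, and it is also what makes the clean separation into the two logarithmic regimes delicate --- in particular, ensuring the evaluated-cell total telescopes to $\sum_{j}m_j\log(2n_j/m_j)$ rather than the weaker $\sum_{j}m_j\log n_j$. The remaining issues --- non-square boundary blocks, and ties where a cell equals a tested value or equals $\lambda^*$ --- are routine once the staircase invariant is maintained, and I would handle them by keeping, for each matrix, the pruned region encoded as an explicit monotone staircase so that every discard and re-split takes time proportional to the cells actually inspected.
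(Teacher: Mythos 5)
The paper does not prove this lemma at all: it is imported verbatim (with a slightly adapted statement) from Frederickson and Johnson~\cite{ref:FredericksonGe84}, so there is no in-paper proof to compare against. Your sketch is essentially a faithful reconstruction of that cited proof --- cutting each $m_j\times n_j$ matrix into $\lceil n_j/m_j\rceil$ squares, iteratively quadrant-splitting, testing weighted medians of representatives so each test discards a constant fraction of the surviving weight, and splitting the analysis into the $O(\log\max_j n_j)$ and $O(\log(m/(c+1)))$ regimes --- and it correctly identifies the delicate point, namely that the evaluated-cell count must telescope to $O(\sum_j m_j\log(2n_j/m_j))$ rather than $O(\sum_j m_j\log n_j)$. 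The one place you deviate in mechanism (charging the rectangular-to-square phase to per-row binary searches across blocks, rather than Frederickson--Johnson's synchronous side-halving with staircase amortization) still yields the stated totals, so the proposal is sound and takes the same route as the source the paper relies on.
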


\subsection{The 2D Sublist LP Queries}

Let $H=\{h_1,h_2,\ldots,h_m\}$ be a set of $m$ upper half-planes in the plane. Given two indices $i$ and $j$ with $1\leq i\leq j\leq m$, a {\em 2D sublist LP query} asks for the lowest point in the common intersection of $h_i,h_{i+1},\ldots,h_j$. The {\em line-constrained} version of the query is: Given a vertical line $l$ and two indices $i$ and $j$ with $1\leq i\leq j\leq m$, the query asks for the lowest point on $l$ in the  common intersection of $h_i,h_{i+1},\ldots,h_j$.
Lemma~\ref{lem:sublist} was proved in~\cite{ref:ChenAp13} (i.e., Lemma 8 and the discussion after it; the query algorithm for the line-constrained version is used as a procedure in the proof of Lemma 8).

\begin{lemma}\label{lem:sublist}{\em \cite{ref:ChenAp13}}
We can build a data structure for $H$ in $O(m\log m)$ time such that each 2D sublist LP query can be answered in $O(\log^2 m)$ time, and each line-constrained query can be answered in $O(\log m)$ time.
\end{lemma}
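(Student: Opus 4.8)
The plan is to reduce both queries to computations on upper envelopes and to answer them with a balanced binary tree built over the index order. Writing each upper half-plane as $h_k=\{(x,y):y\ge a_kx+b_k\}$, the common intersection $\bigcap_{k=i}^{j}h_k$ is exactly $\{(x,y):y\ge \max_{i\le k\le j}(a_kx+b_k)\}$, whose lower boundary is the upper envelope $E_{[i,j]}$ of the lines $y=a_kx+b_k$ with $i\le k\le j$ (a pointwise maximum of lines, hence an $x$-monotone convex chain). Thus a line-constrained query at the vertical line $x=x_0$ asks for the single value $E_{[i,j]}(x_0)=\max_{i\le k\le j}(a_kx_0+b_k)$, while a 2D sublist LP query asks for the lowest point $\min_{x}E_{[i,j]}(x)$ of that convex chain.

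I would build a balanced binary tree $\mathcal{T}$ whose leaves are $1,\dots,m$ in index order; each node $u$ spans a contiguous index range $I_u$, and at $u$ I store the upper envelope of the lines indexed by $I_u$ as an $x$-sorted array of its breakpoints. Each node's envelope is obtained by merging the two children's envelopes, and since the pointwise max of two $x$-monotone convex chains is computed by a coordinated sweep in time linear in their total size, the preprocessing costs $O(\sum_u |I_u|)=O(m\log m)$ time and space (the node ranges at each of the $O(\log m)$ levels partition $\{1,\dots,m\}$). Any query range $[i,j]$ then decomposes into $O(\log m)$ canonical nodes hanging off the two root-to-leaf boundary paths, and the answer depends only on the $O(\log m)$ envelopes stored there.

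For a line-constrained query I must evaluate each of the $O(\log m)$ canonical envelopes at $x_0$ (a predecessor search among its breakpoints) and return the maximum. A naive search is $O(\log m)$ per node, giving $O(\log^2 m)$; to reach $O(\log m)$ I would install tree fractional cascading (Chazelle--Guibas) across the envelope arrays, so that after one $O(\log m)$ search at the root the located position is transferred in $O(1)$ to each node encountered along the two boundary paths, hence to each canonical node. Taking the maximum over the $O(\log m)$ located values---together with the tight line realizing it at each canonical node---answers the query in $O(\log m)$.

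For the unconstrained query I must minimize $F(x)=\max_{t}E_t(x)$ over the $O(\log m)$ canonical envelopes $E_t$, where $F$ is convex and piecewise linear. Here the line-constrained query serves as a decision oracle: at any $x_0$ it returns $F(x_0)$ and, from the tight lines and their slopes, the one-sided derivatives $F'_-(x_0)=\min_t s_t^{-}$ and $F'_+(x_0)=\max_t s_t^{+}$ taken over the envelopes tight at $x_0$, so in $O(\log m)$ time I learn whether $x^*<x_0$, $x^*>x_0$, or $x^*=x_0$. To locate $x^*$ without explicitly listing the $O(m)$ breakpoints, I would run a Megiddo / Frederickson--Johnson style simultaneous search over the $O(\log m)$ sorted envelope arrays: each round picks one representative (median) breakpoint per array, computes their weighted median $x_m$, calls the oracle once ($O(\log m)$), and discards a constant fraction of the surviving breakpoints; after $O(\log m)$ rounds $x^*$ is pinned to an interval free of breakpoints, on which $F$ is linear and $x^*$ follows from a constant-size computation, for $O(\log^2 m)$ total. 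The most delicate part will be exactly this last step: organizing the implicit multi-list selection so that only $O(\log m)$ oracle calls are issued, and correctly reading off the subgradient sign of $F$ from the tight constraints the line-constrained query reports. Getting these right is what keeps the unconstrained query at $O(\log^2 m)$ rather than incurring an extra logarithmic factor.
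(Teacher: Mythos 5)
Your construction---a balanced tree over the index order storing upper envelopes at nodes, fractional cascading to answer the line-constrained (vertical-line) query in $O(\log m)$, and a Megiddo-style prune-and-search over the $O(\log m)$ canonical envelopes using that query as the decision oracle---is essentially the data structure of the cited reference, which the paper imports as a black box and only sketches (``a binary search algorithm [that] computes the highest intersection between a vertical line and the bounding lines'' in the spirit of Megiddo's 2D LP), so your blind reconstruction matches the intended proof. One small slip: once all envelope breakpoints have been pruned from the candidate interval, each $E_t$ is linear there but $F=\max_t E_t$ need not be (it can still bend where two envelopes cross); however, minimizing the upper envelope of $O(\log m)$ lines over an interval is a two-dimensional LP with $O(\log m)$ constraints, solvable in $O(\log m)$ time, so your $O(\log^2 m)$ query bound is unaffected.
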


\paragraph{Remark.} With $O(m\log m)$ preprocessing time, any $\text{poly}(\log m)$-time algorithms for both query problems would be sufficient for our purpose, where $\text{poly}(\cdot)$ is any polynomial function.

\subsection{Line Arrangement Searching}

Let $L$ be a set of $m$ lines in the plane. Denote by $\calA(L)$ the arrangement of the lines of $L$, and let $y(v)$ denote the $y$-coordinate of each vertex $v$ of $\calA(L)$. Let $v_1(L)$ be the lowest vertex of $\calA(L)$ whose $y$-coordinate is a feasible value, and let $v_2(L)$ be the highest vertex of $\calA(L)$ whose $y$-coordinate is smaller than that of $v_1(L)$. By their definitions, $y(v_2(L))<\lambda^*\leq y(v_1(L))$, and $\calA(L)$ does not have a vertex $v$ with $y(v_2(L))<y(v)< y(v_1(L))$. Lemma~\ref{lem:arrangement} was proved in~\cite{ref:ChenA13}.

\begin{lemma}\label{lem:arrangement}{\em \cite{ref:ChenA13}}
Both vertices  $v_1(L)$ and  $v_2(L)$ can be computed in $O((m+\tau)\log m)$ time, where $\tau$ is the time for a feasibility test.
\end{lemma}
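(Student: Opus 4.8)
The plan is to recast the statement as a predecessor/successor search over an implicitly represented sorted set. Every vertex of $\calA(L)$ is the intersection of two lines of $L$, so there are $\Theta(m^2)$ vertices, and their $y$-coordinates form a multiset $Y$ of size $\Theta(m^2)$. Since a feasibility test decides, for a query value $\lambda$, whether $\lambda\ge\lambda^*$, and feasibility is monotone in $\lambda$, the two vertices we seek are exactly the successor and predecessor of $\lambda^*$ in $Y$: the vertex $v_1(L)$ realizes the smallest value of $Y$ that is feasible, and $v_2(L)$ realizes the largest value of $Y$ that is infeasible. So I would locate $\lambda^*$ among the sorted values of $Y$, using feasibility tests in place of comparisons against the unknown threshold $\lambda^*$. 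The whole difficulty is that $|Y|=\Theta(m^2)$, so $Y$ must never be materialized or sorted explicitly.

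The key subroutine I would build is a counting oracle that, for any value $\lambda$, returns the rank of $\lambda$ in $Y$, i.e.\ the number of vertices of $\calA(L)$ whose $y$-coordinate is at most $\lambda$, in $O(m\log m)$ time. This follows from a standard inversion-counting argument: along the horizontal line $y=\lambda$ the lines of $L$ appear in some left-to-right order $\pi_\lambda$, whereas as $y\to-\infty$ they appear in a fixed order $\pi_{-\infty}$ determined by their slopes; two lines cross below $y=\lambda$ precisely when their relative order differs in $\pi_{-\infty}$ and $\pi_\lambda$, so the count equals the number of inversions between these two permutations and can be computed by a merge sort in $O(m\log m)$ time (horizontal and degenerate lines are handled separately). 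Combining this oracle with feasibility tests already yields a correct but slightly too slow algorithm: maintain an interval $(\lambda_1,\lambda_2)$ with $\lambda_1$ infeasible and $\lambda_2$ feasible, repeatedly select the median of the vertices lying in the interval, feasibility-test it, and shrink the interval to the surviving half. After $O(\log m)$ rounds the interval contains $O(1)$ vertices and $v_1(L),v_2(L)$ are read off; the cost is $O(\log m)$ feasibility tests plus $O(\log m)$ median selections, i.e.\ $O(m\log^2 m + \tau\log m)$.

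The hard part will be shaving the extra logarithmic factor to reach $O((m+\tau)\log m)$, that is, performing each round in only $O(m)$ time besides its single feasibility test while never touching the $\Theta(m^2)$ vertices of the current slab individually. The route I would take is to replace the exact median by an approximate one obtained from random sampling: draw $O(m)$ random intersection points whose $y$-coordinate lies in the current interval, take the sample median $\mu$ in $O(m)$ time, feasibility-test $\mu$, and shrink accordingly. A random pivot shrinks the number of slab vertices by a constant fraction with constant probability, so $O(\log m)$ rounds suffice in expectation, giving $O((m+\tau)\log m)$. The obstacle this must overcome --- and the technical heart of the cited construction --- is sampling efficiently from the intersections inside a narrow slab: naive rejection sampling degrades as the slab thins, so one must exploit the ordering of the lines together with the counting oracle to keep the per-round cost at $O(m)$ and the splits balanced, and then derandomize to obtain a worst-case bound. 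Finally I would invoke the counting oracle once more to verify that the terminal slab contains no interior vertex and to extract $v_1(L)$ and $v_2(L)$, taking care of ties among equal $y$-coordinates.
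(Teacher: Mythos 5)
Your proposal reconstructs essentially the route taken by the paper's cited source: the paper gives no in-text proof of this lemma but attributes it to the line arrangement searching technique of Chen and Wang~\cite{ref:ChenA13}, which, as the paper's remark notes, works precisely by modifying the slope selection algorithms~\cite{ref:BronnimannOp98,ref:KatzOp93} --- i.e., an inversion-counting oracle for the number of arrangement vertices below a horizontal line, interval shrinking driven by $O(\log m)$ feasibility tests in place of rank comparisons, and sampling/derandomization machinery to keep the per-round overhead low enough for the $O((m+\tau)\log m)$ bound. Your sketch matches this approach, including correctly identifying the slab-sampling efficiency and its derandomization as the technical heart that the cited works supply.
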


\paragraph{Remark.}
Alternatively, we can use Cole's parametric search~\cite{ref:ColeSl87} to compute the two vertices. First, we sort the lines of $L$ by their intersections with the horizontal line $y=\lambda^*$, and this can be done in $O((m+\tau)\log m)$ time by Cole's parametric search~\cite{ref:ColeSl87}. Then, $v_1(L)$ and  $v_2(L)$ can be found in additional $O(m)$ time because each of them is an intersection of two adjacent lines in the above sorted order. The line arrangement searching technique, which modified the slope selection algorithms~\cite{ref:BronnimannOp98,ref:KatzOp93}, avoids Cole's parametric search~\cite{ref:ColeSl87}.
\vspace{0.1in}

In the following, we often talk about some problems in the plane $\bbR^2$, and
if the context is clear, for any point $p\in \bbR^2$, we use $x(p)$ and $y(p)$ to denote its $x$- and $y$-coordinates, respectively.

\section{The Algorithms for Stems}
\label{sec:stem}

In this section, we first define {\em stems}, which are similar in spirit to those proposed in \cite{ref:FredericksonPa91} for the unweighted case. Then, we will present two algorithms for solving the $k$-center problem on a stem, and both techniques will be used later for solving the problem in the tree $T$.

Let $\widehat{P}$ be a path of $m$ vertices, denoted by $v_1,v_2,\ldots,v_m$, sorted from left to right. For each vertex $v_i$, other than its incident edges in $\widehat{P}$, $v_i$ has at most two additional edges connecting two vertices $u_i$ and $w_i$ that are not in $\widehat{P}$. Either vertex may not exist. Let $P$ denote the union of $\widehat{P}$ and the above additional edges (e.g., see Fig.~\ref{fig:stem}). For any two points $p$ and $q$ on $P$, we still use $\pi(p,q)$ to denote the unique path between $p$ and $q$ in $P$, and use $d(p,q)$ to denote the length of the path.
With respect to a range $(\lambda_1,\lambda_2)$, we call $P$ a {\em stem} if the following holds:
For each $i\in [1,m]$, if $u_i$ exists, then $w(u_i)\cdot d(u_i,v_i)\leq \lambda_1$; if $w_i$ exists, then $w(w_i)\cdot d(w_i,v_i)\geq \lambda_2$.

\begin{figure}[t]
\begin{minipage}[t]{\linewidth}
\begin{center}
\includegraphics[totalheight=1.0in]{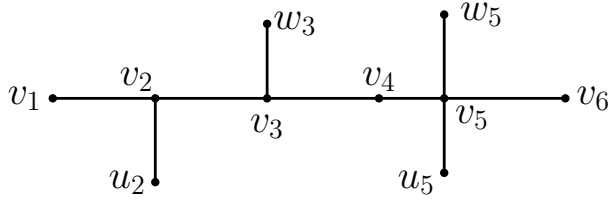}
\caption{\footnotesize Illustrating a stem.}
\label{fig:stem}
\end{center}
\end{minipage}
\vspace*{-0.15in}
\end{figure}

Following the terminology in \cite{ref:FredericksonPa91}, we call $e(v_i,u_i)$ a {\em thorn} and $e(v_i,w_i)$ a {\em twig}. Each $u_i$ is called a {\em thorn vertex} and each $w_i$ is called a {\em twig vertex}. $\widehat{P}$ is called the {\em backbone} of $P$, and each vertex of $\widehat{P}$ is called a {\em backbone vertex}.
We define $m$ as the {\em length} of $P$.
The total number of vertices of $P$ is at most $3m$.

\paragraph{Remark.}
Our algorithm in Section~\ref{sec:tree} will produce stems $P$ as defined above, where $\widehat{P}$ is a path in $T$ and all vertices of $P$ are also vertices of $T$. However, each thorn $e(u_i,v_i)$ may not be an original edge of $T$, but it corresponds to the path between $u_i$ and $v_i$ in $T$ in the sense that the length of $e(u_i,v_i)$ is equal to the distance between $u_i$ and $v_i$ in $T$. This is also the case for each twig $e(w_i,v_i)$.
Our algorithm in Section~\ref{sec:tree} will maintain a range $(\lambda_1,\lambda_2)$ such that $\lambda_1$ is not feasible and $\lambda_2$ is feasible, i.e., $\lambda^*\in (\lambda_1,\lambda_2]$. Since any feasibility test will be made to a value $\lambda\in (\lambda_1,\lambda_2)$, the above definitions on thorns and twigs imply the following: For each thorn vertex $u_i$, we can place a center on the backbone to cover it (under $\lambda$), and for each twig vertex $w_i$, we need to place a center on  the edge $e(w_i,v_i)\setminus\{v_i\}$ to cover it.
\vspace{0.1in}

In the sequel we give two different techniques for solving the $k$-center problem on the stem $P$. In fact, in our algorithm for the $k$-center problem on $T$ in Section~\ref{sec:tree}, we use these techniques to process a stem $P$, rather than directly solve the $k$-center problem on $P$. Let $\lambda^*$ temporarily refer to the optimal objective value of the $k$-center problem on $P$ in the rest of this section, and we assume $\lambda^*\in (\lambda_1,\lambda_2]$.

%

\subsection{The First Algorithm}
\label{sec:first}

This algorithm is motivated by the following easy observation: there exist two vertices $v$ and $v'$ in $P$ such that a center $q$ is located in the path $\pi(v,v')$ and $w(v)\cdot d(q,v)=w(v')\cdot d(q,v')=\lambda^*$ (since otherwise we could move $q$ on $P$ to achieve such a situation).

We assume that all backbone vertices of $P$ are in the $x$-axis of an $xy$-coordinate system $\bbR^2$ where $v_1$ is at the origin and each $v_i$ has $x$-coordinate $d(v_1,v_i)$. Each $v_i$ defines two lines $l^+(v_i)$ and $l^-(v_i)$ both containing $v_i$ and with slopes $w(v_i)$ and $-w(v_i)$, respectively  (e.g., see Fig.~\ref{fig:lines}).
Each thorn $u_i$ also defines two lines $l^+(u_i)$ and $l^-(u_i)$ as follows. Define $u_i^l$ (resp., $u_i^r$) to be the point in $\bbR^2$ on the $x$-axis with $x$-coordinate $d(v_1,v_i)-d(u_i,v_i)$ (resp., $d(v_1,v_i)+d(u_i,v_i)$). Hence, $u_i^l$ (resp., $u_i^r$) is to the left (resp., right) of $v_i$ with distance $d(u_i,v_i)$ from $v_i$. Define $l^+(u_i)$ to be the line through $u_i^l$ with slope $w(u_i)$ and $l^-(u_i)$
to be the line through $u_i^r$ with slope $-w(u_i)$. Note that $l^+(u_i)$ and $l^-(u_i)$ intersect at the point whose $x$-coordinate is the same as that of $v_i$ and whose $y$-coordinate is equal to $w(u_i)\cdot d(u_i,v_i)$. For each twig vertex $w_i$, we define points $w_i^l$ and $w_i^r$, and lines $l^+(w_i)$ and $l^-(w_i)$, in the same way as those for $u_i$.

Consider a point $q$ on the backbone of $P$ to the right side of $v_i$. It can be verified that the weighted distance $w(v_i)\cdot d(v_i,q)$ from $v_i$ to $q$ is exactly equal to the $y$-coordinate of the intersection between $l^+(v_i)$ and the vertical line through $q$. If $q$ is on the left side of $v_i$, we have a similar observation for $l^-(v_i)$.  This is also true for $u_i$ and $w_i$.

\begin{figure}[t]
\begin{minipage}[t]{\linewidth}
\begin{center}
\includegraphics[totalheight=1.3in]{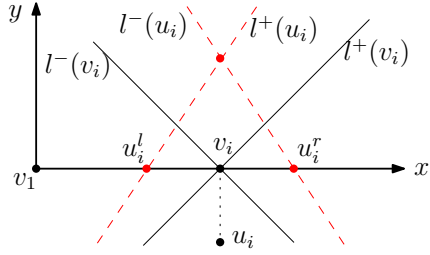}
\caption{\footnotesize Illustrating the definitions of the lines defined by a backbone vertex $v_i$ and its thorn vertex $u_i$.}
\label{fig:lines}
\end{center}
\end{minipage}
\vspace*{-0.15in}
\end{figure}

Let $L$ denote the set of the lines in $\bbR^2$ defined by all vertices of $P$. Note that $|L|\leq 6m$. Based on the above observation, $\lambda^*$ is equal to the $y$-coordinate of a vertex of the line arrangement $\calA(L)$ of $L$. More precisely, $\lambda^*$ is equal to the $y$-coordinate of the vertex $v_1(L)$, as defined in Section~\ref{sec:pre}. By Lemma~\ref{lem:arrangement}, we can compute $\lambda^*$ in $O((m+\tau)\log m)$ time.

\subsection{The Second Algorithm}
\label{sec:second}


This algorithm relies on the algorithm \msearch. We first form a set of sorted matrices.

For each $i\in [1,m]$, we define the two lines $l_i^+(v_i)$ and $l_i^-(v_i)$ in $\bbR^2$ as above in Section~\ref{sec:first}. If $u_i$ exists, then we also define $l_i^+(u_i)$ and $l_i^-(u_i)$ as before; otherwise, both $l_i^+(u_i)$ and $l_i^-(u_i)$ refer to the $x$-axis.
Let $h_{4(i-1)+j}$, $1\leq j\leq 4$, denote respectively the four upper half-planes bounded by the above four lines (their index order is arbitrary). In this way, we have a set $H=\{h_1,h_2\ldots,h_{4m}\}$ of $4m$ upper half-planes.

For any $i$ and $j$ with $1\leq i\leq j\leq m$, we define $\alpha(i,j)$ as the $y$-coordinate of the lowest point in the common intersection of the upper half-planes of $H$ from $h_{4(i-1)+1}$ to $h_{4j}$, i.e., all upper half-planes defined by $u_t$ and $v_t$ for $t\in [i, j]$. Observe that if we use one center to cover all backbone and thorn vertices $u_t$ and $v_t$ for $t\in [i,j]$, then $\alpha(i,j)$ is equal to the optimal objective value of this one-center problem.

We define a matrix $M$ of dimension $m\times m$: For any $i$ and $j$ in $[1,m]$, if $i+j\leq m+1$, then $M[i,j]=\alpha[i,m+1-j]$; otherwise, $M[i,j]=0$.

For each twig $w_i$, we define two arrays $A^r_i$ and $A^l_i$ of at most $m$ elements each as follows. Let $h^+(w_i)$ and $h^-(w_i)$ denote respectively the upper half-planes bounded by the lines $l^+(w_i)$ and $l^-(w_i)$ defined in Section~\ref{sec:first}.
The array $A^r_i$ is defined on the vertices of $P$ on the right side of $v_i$, as follows. For each $j\in [1,m-i+1]$, if we use a single center to cover $w_i$ and all vertices $u_t$ and $v_t$ for $t\in [i,m+1-j]$,
then $A_i^r[j]$ is defined to be the optimal objective value of this one-center problem, which is equal to the $y$-coordinate of the lowest point in the common intersection of $h^+(w_i)$ and the upper half-planes of $H$ from $h_{4(i-1)+1}$ to $h_{4(m+1-j)}$. Symmetrically, array $A^l_i$ is defined on the left side of $v_i$. Specifically, for each $j\in [1,i]$, if we use one center to cover $w_i$ and all vertices $u_t$ and $v_t$ for $t\in [j,i]$,
then $A^l[j]$ is defined to be the optimal objective value, which is equal to the $y$-coordinate of the lowest point in the common intersection of $h^-(w_i)$ and the upper half-planes of $H$ from $h_{4(j-1)+1}$ to $h_{4i}$.

Let $\calM$ be the set of the matrices $M$ and $A_i^r$ and $A_i^l$ for all $1\leq i\leq m$. The following lemma implies that we can apply \msearch\ on $\calM$ to compute $\lambda^*$.

\begin{lemma}\label{lem:sorted}
Each matrix of $\calM$ is sorted, and $\lambda^*$ is an element of a matrix in $\calM$.
\end{lemma}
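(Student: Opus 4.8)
=== PROOF PROPOSAL ===

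\noindent\textbf{Plan.}
The lemma has two assertions: every matrix in $\calM$ is sorted (nonincreasing along rows and columns), and $\lambda^*$ appears as an entry of one of these matrices. I would handle the two assertions separately, and within the sorted-ness claim treat the square matrix $M$ and the arrays $A_i^r, A_i^l$ in turn, since the arrays are one-dimensional and their monotonicity is the easier case. The central quantity throughout is $\alpha(i,j)$, the $y$-coordinate of the lowest point in the common intersection of the upper half-planes indexed by the vertices $u_t,v_t$ for $t\in[i,j]$; equivalently, the optimal one-center value for covering exactly those vertices. Everything reduces to understanding how $\alpha(i,j)$ behaves as the index interval $[i,j]$ grows or shrinks.

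\noindent\textbf{Key step: monotonicity of $\alpha$.}
The essential observation is a containment/monotonicity fact: if $[i,j]\subseteq[i',j']$, then the set of half-planes defining $\alpha(i',j')$ \emph{contains} the set defining $\alpha(i,j)$, so the common intersection can only shrink, and the lowest point of a smaller region is at least as high; hence $\alpha(i,j)\le\alpha(i',j')$. In particular, for fixed $i$, $\alpha(i,j)$ is nondecreasing in $j$, and for fixed $j$, $\alpha(i,j)$ is nonincreasing in $i$ (enlarging the interval by decreasing $i$ adds half-planes). I would state and prove this once as the workhorse fact. From it, sorted-ness of $M$ follows by unwinding the index reindexing $M[i,j]=\alpha(i,m+1-j)$ on the region $i+j\le m+1$: moving right (increasing $j$) decreases $m+1-j$, which by the monotonicity shrinks the covered interval $[i,m+1-j]$ and so does \emph{not} increase $\alpha$; moving down (increasing $i$) shrinks $[i,m+1-j]$ from the left, again not increasing $\alpha$. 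So both directions are nonincreasing, which is exactly ``sorted.'' The boundary entries where $i+j>m+1$ are set to $0$, and I must check these zeros sit consistently below the genuine $\alpha$-values along rows and columns so the full matrix stays sorted; since $\alpha\ge 0$ (all lines pass through or above the $x$-axis, giving a nonnegative lowest point) and the $0$-region is a lower-right triangle, this is a direct case check. For $A_i^r[j]$ and $A_i^l[j]$ the argument is the same: each adds one fixed half-plane $h^\pm(w_i)$ to an $\alpha$-type intersection over an interval that shrinks as $j$ grows, so both arrays are nonincreasing, i.e. sorted as $1\times(\cdot)$ matrices.

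\noindent\textbf{Key step: $\lambda^*$ is realized as an entry.}
For the second assertion I would invoke the structural remark after \ftest0: at the optimum the stem is partitioned into connected pieces, each covered by a single center located inside it. Because $\lambda^*\in(\lambda_1,\lambda_2]$, each thorn vertex is coverable from the backbone and each twig vertex forces its own center on its twig edge, so every piece of the partition consists of a contiguous run of backbone vertices together with their thorns, and at most one twig. The one-center optimum for such a piece is then exactly an $\alpha(i,j)$ value (a run $[i,j]$ of backbone-plus-thorn vertices) or, if the piece contains a twig $w_i$, an $A_i^r[\cdot]$ or $A_i^l[\cdot]$ value. The tightness observation opening Section~\ref{sec:first} — that at optimum some center is pinned by two vertices at weighted distance exactly $\lambda^*$ — guarantees that the piece achieving the bottleneck has its one-center value equal to $\lambda^*$, and that value is precisely the corresponding matrix/array entry. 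I would finish by arguing the interval endpoints fall in the valid (nonzero) range of $M$ or the arrays, so $\lambda^*$ genuinely equals a stored entry rather than landing in the padded-$0$ region.

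\noindent\textbf{Main obstacle.}
The monotonicity fact is routine once phrased via half-plane containment, so the real work is the second assertion: carefully arguing that the optimal partition's bottleneck piece has the \emph{contiguous, thorn-plus-at-most-one-twig} structure that the matrices are built to encode, and that its one-center value coincides with a stored entry (and not a padded zero). The delicate point is ruling out that $\lambda^*$ is forced by a configuration the matrices do not represent — e.g. a piece spanning a twig interior in a way not captured by $A_i^r$/$A_i^l$ — which is where the range assumption $\lambda^*\in(\lambda_1,\lambda_2]$ and the stem definition (thorns coverable from the backbone, twigs needing a dedicated center) must be used in full.
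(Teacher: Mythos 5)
Your proposal follows essentially the same route as the paper's proof: sortedness via the half-plane containment/monotonicity of $\alpha(i,j)$ (including the same nonnegativity check against the zero-padded entries of $M$), and membership of $\lambda^*$ via the partition of the stem induced by running \ftest0\ at $\lambda=\lambda^*$, a center pinned by two tight vertices $v,v'$, and the case split on whether a tight vertex is a twig vertex (twigs forcing dedicated centers because $w(w_i)\cdot d(w_i,v_i)\geq\lambda_2\geq\lambda^*$). The step you flag as the main obstacle---that the bottleneck piece really has the contiguous backbone-plus-thorns structure the matrices encode---is exactly what the paper establishes, via a short contradiction: an uncovered thorn $u_i$ on $\pi(v,v')$ would be covered by a center $q'$ off its thorn edge, forcing $v_i$ to lie in two distinct subtrees of the disjoint \ftest0\ partition.
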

\begin{proof}
We first show that all matrices of $\calM$ are sorted.

Consider the matrix $M$. Consider two elements $M[i,j_1]$ and $M[i,j_2]$ in the same row with $j_1<j_2$. Our goal is to show that $M[i,j_1]\geq M[i,j_2]$.

\begin{enumerate}
\item
If $j_1>m+1-i$, then both $M[i,j_1]$ and $M[i,j_2]$ are zero. Thus, $M[i,j_1]\geq M[i,j_2]$ trivially holds.

\item
If $j_1\leq m+1-i< j_2$, then $M[i,j_2]=0$ and $M[i,j_1]=\alpha(i,m+1-j_1)$. By our way of defining upper half-planes of $H$, one can verify that $\alpha(i,m+1-j_1)\geq 0$. Therefore, $M[i,j_1]\geq M[i,j_2]$.

\item
If $j_2\leq m+1-i$, then $M[i,j_1]=\alpha(i,m+1-j_1)$ and $M[i,j_2]=\alpha(i,m+1-j_2)$. Let $H_1$ (resp., $H_2$) be the set of the upper half-planes of $H$ from $h_{4(i-1)+1}$ to $h_{4(m+1-j_1)}$ (resp.,  $h_{4(m+1-j_2)}$). Since $j_1<j_2$, $H_2$ is a subset of $H_1$, and thus the lowest point in the common intersection of the upper half-planes of $H_2$ is not higher than that of $H_1$. Hence, $\alpha(i,m+1-j_1)\geq \alpha(i,m+1-j_2)$ and thus   $M[i,j_1]\geq M[i,j_2]$.
\end{enumerate}

The above proves $M[i,j_1]\geq M[i,j_2]$. Therefore, all elements in each row are sorted in nonincreasing order.
By the similar approach we can show that all elements in each column are also sorted in nonincreasing order. We omit the details. Hence, $M$ is a sorted matrix.

Now consider an array $A_i^l$. Consider any two elements $A_i^l[j_1]$ and $A_i^l[j_2]$ with $j_1<j_2$. Our goal is to show that $A_i^l[j_1]\geq A_i^l[j_2]$.
The argument is similar as the above third case.
Let $H_1$ (resp., $H_2$) be the set of $h^-(w_i)$ and the upper half-planes of $H$ from $h_{4(j_1-1)+1}$ (resp., $h_{4(j_2-1)+1}$) to $h_{4i}$. Since $j_1<j_2$, $H_2$ is a subset of $H_1$ and the lowest point in the common intersection of the upper half-planes of $H_2$ is not higher than that of $H_1$. Hence,  $A_i^l[j_1]\geq A_i^l[j_2]$.

We can show that $A_i^r$ is also sorted in a similar way. We omit the details.

The above proves that every matrix of $\calM$ is sorted. In the following, we show that $\lambda^*$ must be an element of one of these matrices.

Imagine that we apply our feasibility test algorithm \ftest0\ on $\lambda=\lambda^*$ and the stem $P$ by considering $P$ as a tree with root $v_m$. Then, the algorithm will compute at most $k$ centers in $P$. The algorithm actually partitions $P$ into at most $k$ disjoint connected subtrees such that the vertices in each subtree is covered by the same center that is located in the subtree. Further, there must be a subtree $P_1$ that has a center $q$ and two vertices $v'$ and $v$ such that $w(v)\cdot d(v,q)=w(v')\cdot d(v',q)=\lambda^*$, since otherwise we could adjust the positions of the centers so that the maximum weighted distance from all vertices of $P$ to their closest centers would be strictly smaller than $\lambda^*$.
Since $P_1$ is connected and both $v$ and $v'$ are in $P_1$, the path $\pi(v,v')$ is also in $P_1$.

Depending on whether one of $v$ and $v'$ is a twig vertex, there are two cases.

If neither vertex is a twig vertex, then we claim that all thorn vertices connecting to the backbone vertices of $\pi(v,v')$ are covered by the center $q$. Indeed, suppose $v_i$ is a backbone vertex in $\pi(v,v')$ and $v_i$ connects to a thorn vertex $u_i$. Assume to the contrary that $u_i$ is not covered by $q$. Recall that by the definition of thorns, $w(u_i)\cdot d(u_i,v_i)\leq \lambda_1$, and since $\lambda_1<\lambda^*$, we have $w(u_i)\cdot d(u_i,v_i)< \lambda^*$. According to \ftest0, $u_i$ is covered by a center $q'$ that is not on $e(u_i,v_i)$. Hence, $u_i$ and $q'$ is in a connected subtree, denoted by $P_2$, in the partition of $P$ induced by \ftest0. Clearly, $v_i$ is in $\pi(u_i,q')$. Since $P_2$ is connected and both $q'$ and $u_i$ are in $P_2$, every vertex of $\pi(u_i,q')$ is in $P_2$. Because $q'$ is not on $e(u_i,v_i)$, $v_i$ must be in $\pi(u_i,q')$ and thus is in $P_2$. However, since $v_i$ is in $\pi(v,v')$, $v_i$ is also in $P_1$. This incurs contradiction since $P_1\cap P_2=\emptyset$. This proves the claim.

If $v$ is a backbone vertex, then let $i$ be its index, i.e., $v=v_i$; otherwise, $v$ is a thorn vertex and let $i$ be the index such that $v$ connects the backbone vertex $v_i$. Similarly, define $j$ for $v'$. Without loss of generality, assume $i\leq j$.
The above claim implies that $\lambda^*$ is equal to the $y$-coordinate of the lowest point in the common intersection of all upper half-planes defined by the backbone vertices $v_t$ and thorn vertices $u_t$ for all $t\in [i,j]$, and thus, $\lambda^*=\alpha(i,j)$, which is equal to $M[i,m+1-j]$. Therefore, $\lambda^*$ is in the matrix $M$.

Next, we consider the case where at least one of $v$ and $v'$ is a twig vertex.
For each twig vertex $w_i$ of $P$, by definition, $w(w_i)\cdot d(w_i,v_i)\geq \lambda_2$, and since $\lambda^*\leq \lambda_2$, the twig $e(w_i,v_i)$ must contain a center. Because both $v$ and $v'$ are covered by $q$, only one of them is a twig vertex (since otherwise we would need two centers to cover them since each twig must contain a center). Without loss of generality, we assume that $v$ is a twig vertex, say, $w_i$. If $v'$ is a backbone vertex, then let $j$ be its index; otherwise, $v'$ is a thorn vertex and let $j$ be the index such that $v'$ connects the backbone vertex $v_j$. Without loss of generality, we assume that $i\leq j$.

By the same argument as the above, all thorn vertices $u_t$ with $t\in [i,j]$ are covered by $q$. This implies that $\lambda^*$ is the $y$-coordinate of the lowest point in the common intersection of $h^+(w_i)$ and all upper half-planes defined by the backbone vertices $v_t$ and thorn vertices $u_t$ for all $t\in [i,j]$. Thus,  $\lambda^*=A_i^r[m+1-j]$. Therefore, $\lambda^*$ is in the array $A_i^r$.

This proves that $\lambda^*$ must be in a matrix of $\calM$. The lemma thus follows.
\qed
\end{proof}

Note that $\calM$ consists of a matrix $M$ of dimension $m\times m$ and $2m$ arrays of lengths at most $m$. With the help of the 2D sublist LP query data structure in Lemma~\ref{lem:sublist}, the following lemma shows that the matrices of $\calM$ can be implicitly formed in $O(m\log m)$ time.

\begin{lemma}\label{lem:matrixform}
With $O(m\log m)$ time preprocessing, each matrix element of $\calM$ can be evaluated in $O(\log^2 m)$ time.
\end{lemma}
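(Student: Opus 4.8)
The plan is to prove Lemma~\ref{lem:matrixform} by reducing every element of $\calM$ to a single 2D sublist LP query (or at most a constant number of them) on a suitably prepared collection of half-planes, and then invoking Lemma~\ref{lem:sublist} for the preprocessing and query bounds. The key observation is that the four half-planes $h_{4(i-1)+1},\ldots,h_{4i}$ associated with index $i$ form a contiguous block indexed by $i$, so a range of consecutive blocks corresponds to a contiguous sublist of $H$. First I would set up the data structure of Lemma~\ref{lem:sublist} directly on the ordered list $H=\{h_1,\ldots,h_{4m}\}$; since $|H|=4m$, this takes $O(m\log m)$ preprocessing time, which matches the claimed bound.

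Next I would handle the three kinds of elements separately. For the matrix $M$: by definition $M[i,j]=0$ when $i+j>m+1$, which is returned in $O(1)$ time; otherwise $M[i,j]=\alpha(i,m+1-j)$, which is exactly the $y$-coordinate of the lowest point in the common intersection of the contiguous sublist $h_{4(i-1)+1},\ldots,h_{4(m+1-j)}$ of $H$. This is a single 2D sublist LP query and costs $O(\log^2 m)$ by Lemma~\ref{lem:sublist}. For the arrays $A_i^r$ and $A_i^l$, the only complication is the extra half-plane $h^+(w_i)$ (resp.\ $h^-(w_i)$) coming from the twig vertex, which is not part of the contiguous block structure of $H$. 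The cleanest fix is to augment $H$ at preprocessing time: for each twig vertex $w_i$ I would insert $h^+(w_i)$ and $h^-(w_i)$ into the list so that they sit immediately adjacent to the block for index $i$, so that the relevant intersection is again a single contiguous sublist. Because there are at most $m$ twigs, this only enlarges $H$ by $O(m)$ half-planes and leaves the $O(m\log m)$ preprocessing and $O(\log^2 m)$ query bounds intact.

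The main obstacle I anticipate is precisely this adjacency issue: a sublist LP query as defined in Lemma~\ref{lem:sublist} operates on a \emph{contiguous} index range $[i,j]$, whereas $A_i^r[j]$ asks for the intersection of a contiguous block of $H$ together with one ``floating'' half-plane $h^+(w_i)$ that is keyed to the left endpoint $i$ but combined with blocks up to $m+1-j$. I would resolve this by building a separate, per-twig augmented list for each $i$, or equivalently by placing $h^+(w_i)$ at a fixed sentinel position and querying a range whose endpoints bracket both the sentinel and the block $[i,m+1-j]$; since the block endpoints and the sentinel are all known at query time as functions of $i$ and $j$, the query index pair is computable in $O(1)$, and the contiguity is preserved by the construction. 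An alternative, if maintaining contiguity proves awkward, is to answer $A_i^r[j]$ with a constant number of queries (one for the block and a combination step with the single half-plane $h^+(w_i)$), which still yields $O(\log^2 m)$ per element.

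In summary, the proof proceeds by: (i) observing the block structure of $H$ that makes every $\alpha(\cdot,\cdot)$ a contiguous sublist LP query; (ii) building the data structure of Lemma~\ref{lem:sublist} on an $O(m)$-sized augmented list in $O(m\log m)$ time; and (iii) checking that each of $M[i,j]$, $A_i^r[j]$, $A_i^l[j]$ translates into $O(1)$ such queries with index endpoints computable in $O(1)$ from $(i,j)$, giving the stated $O(\log^2 m)$ evaluation time.
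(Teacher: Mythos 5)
Your handling of the matrix $M$ is correct and identical to the paper's: one sublist LP query per element on the data structure of Lemma~\ref{lem:sublist} built over $H$. The gap is in the arrays. Your primary fix --- inserting $h^+(w_{i'})$ and $h^-(w_{i'})$ adjacent to block $i'$ for \emph{every} twig into one global augmented list --- destroys the semantics of every query spanning more than one block: the contiguous range running from block $i$ to block $m+1-j$ now also contains the twig half-planes of all interior indices $t$ with $i< t\leq m+1-j$, but by definition $\alpha(i,m+1-j)$ and $A_i^r[j]$ must \emph{exclude} those (each twig vertex is covered by a center on its own twig, never by the backbone center, so including its half-planes inflates the value). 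This pollutes the queries for $M$ as well as for the arrays, and it would invalidate Lemma~\ref{lem:sorted}'s guarantee that $\lambda^*$ occurs among the matrix elements, which is what the whole \msearch\ phase rests on. Your two escape hatches do not repair this: a separate augmented list per twig costs $O(m)$ structures of size $\Theta(m)$ each, i.e.\ $O(m^2\log m)$ preprocessing, blowing the stated bound; and the fixed-sentinel variant has exactly the same pollution problem, since a range bracketing both the sentinel and the block sweeps in everything between them.

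Your fallback --- one sublist query for the block plus a ``combination step'' with the single half-plane --- is the right instinct but is not realizable with the two query types that Lemma~\ref{lem:sublist} exposes as a black box. If the block's lowest point $p'$ already lies in $h^+(w_i)$ you are done, but otherwise the answer lies on the \emph{non-vertical} bounding line of $h^+(w_i)$, namely the lowest point of that line inside the block's common intersection; neither the global lowest-point query nor the vertical-line-constrained query returns such a point, and no $O(1)$ combination of them does. The paper's proof resolves exactly this issue by opening the black box: the sublist query algorithm of~\cite{ref:ChenAp13} is a Megiddo-style binary search which, in each iteration, computes the highest intersection of a vertical line with the bounding lines of the sublist; one injects the extra constraint by additionally intersecting that vertical line with the bounding line of $h^+(w_i)$ and updating the highest intersection, at $O(1)$ extra cost per iteration, so the query time remains $O(\log^2 m)$. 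Some argument of this kind (modifying the internal search, rather than composing black-box queries or re-indexing $H$) is needed; as written, your proposal does not correctly evaluate $A_i^r[j]$ and $A_i^l[j]$ within the claimed bounds.
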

\begin{proof}
We build a 2D sublist LP query data structure of Lemma~\ref{lem:sublist} on the upper half-planes of $H$ in $O(m\log m)$ time. Then, each element of $M$ can be computed in $O(\log^2 m)$ time by a 2D sublist LP query.

Now consider an array $A_i^l$. Given any index $j$, to compute $A_i^l[j]$, recall that $A_i^l[j]$ is equal to the $y$-coordinate of the lowest point $p^*$ of the common intersection of the upper half-plane $h^+(w_i)$ and those in $H'$, where $H'$ is the set of the upper half-planes of $H$ from $h_{4(j-1)+1}$ to $h_{4i}$. The lowest point $p'$ of the common intersection of the upper half-planes of $H'$ can be computed in $O(\log^2 m)$ time by a 2D sublist LP query with query indices $4(j-1)+1$ and $4i$.
Computing $p^*$ can also be done in $O(\log^2 m)$ time by slightly modifying the query algorithm for computing $p'$. We briefly discuss it below and the interested reader should refer to~\cite{ref:ChenAp13}  for details (the proof of Lemma 8 and the discussion after the lemma).

The query algorithm for computing $p'$ is similar in spirit to the linear-time algorithm for the 2D linear programming problem in~\cite{ref:MegiddoLi83}. It is a binary search algorithm. In each iteration, the algorithm computes the highest intersection $p''$ between a vertical line $l$ and the bounding lines of the half-planes of $H'$, and based on the local information at the intersection, the algorithm will determine which side to proceed for the search. For computing $p^*$, we need to incorporate the additional half-plane $h^+(w_i)$. To this end, in each iteration of the binary search, after we compute the highest intersection $p''$, we compare it with the intersection of $l$ and the bounding line of $h^+(w_i)$ and update the highest intersection if needed. This costs only constant extra time for each iteration. Therefore, the total running time for computing $p^*$ is still $O(\log^2 m)$.

Computing the elements of arrays $A_i^r$ can be done similarly.
The lemma thus follows.
\qed
\end{proof}

By applying algorithm \msearch\ on $\calM$ with stopping count $c=0$ and $\kappa=O(\log^2 m)$, according to Lemma~\ref{lem:msearch}, \msearch\ produces $O(\log m)$ values for feasibility tests, and the total time exclusive of feasibility tests is $O(m\log^3 m)$ because we need to evaluate $O(m\log m)$ matrix elements of $\calM$. Hence, the total time for computing $\lambda^*$ is $O(m\log^3 m + \tau\cdot \log m)$.

\paragraph{Remark.}
Clearly, the first algorithm is better than the second one. However, later when
we use the techniques of the second algorithm, $m$ is often bounded by $O(\log^2 n)$ and thus $\log^3 m=O(\log n)$. In fact, we use the techniques of the second algorithm mainly because we need to set the stopping count $c$ to some non-zero value.

\section{Solving the $k$-Center Problem on $T$}
\label{sec:tree}

In this section, we present our algorithm for solving the $k$-center problem on $T$. We will focus on computing the optimal objective value $\lambda^*$.

Frederickson~\cite{ref:FredericksonOp91} proposed a {\em path-partition} of $T$, which is a partition of the edges of $T$ into paths where a vertex $v$ is an endpoint of a path if and only if the degree of $v$ in $T$ is not equal to $2$ (e.g., see Fig.~\ref{fig:pathpartition}). A path in a partition-partition of $T$ is called a {\em leaf-path} if it contains a leaf of $T$.

\begin{figure}[t]
\begin{minipage}[t]{\linewidth}
\begin{center}
\includegraphics[totalheight=1.3in]{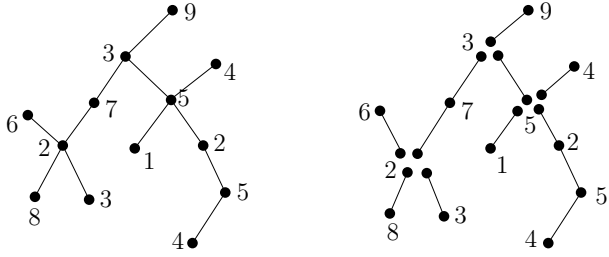}
\caption{\footnotesize Left: the tree $T$ where the numbers are the weights of the vertices. Right: the path partition of $T$.}
\label{fig:pathpartition}
\end{center}
\end{minipage}
\vspace*{-0.15in}
\end{figure}

As in~\cite{ref:FredericksonPa91}, we generalize the path-partition to stem-partition as follows. During the course of our algorithm, a range $(\lambda_1,\lambda_2]$ that contains $\lambda^*$ will be maintained and $T$ will be modified by removing some edges and adding some thorns and twigs. At any point in our algorithm, let $T'$ be $T$ with all thorns and twigs removed. A {\em stem} of $T$ is a path in the path-partition of $T'$, along with all thorns and twigs that connect to vertices in the path. A {\em stem-partition} of $T$ is to partition $T$ into stems according to a path-partition of $T'$. A stem in a stem-partition of $T$ is called a {\em leaf-stem} if it contains a leaf of $T$ that is a backbone vertex of the stem.

Our algorithm follows the first algorithmic scheme in~\cite{ref:FredericksonPa91}. There are two main phases: Phase 1 and Phase 2. Let $r=\log^2 n$. Phase 1 gathers information so that the feasibility test can be made in sublinear $O(\frac{n}{r}\log^3 r)$ time. Phase 2 computes $\lambda^*$ by using the faster feasibility test. If $T$ has more than $2n/r$ leaves, then there is an additional phase, called Phase 0, which reduces the problem to a tree with at most $2n/r$ leaves. (Phase 0 is part of Phase 1 in~\cite{ref:FredericksonPa91}, and we separates it from Phase 1 to make it clearer.) In the following, we consider the general case where $T$ has more than  $2n/r$ leaves. Algorithm~\ref{algo:kcenter} gives the pseudocode of the overall algorithm.

\subsection{The Preprocessing and Computing the Vertex Ranks}
\label{sec:preprocess}

We first perform some preprocessing. Recall that $\gamma$ is the root of $T$. We compute the distances $d(v,\gamma)$ for all vertices $v$ in $O(n)$ time. Then, if $u$ is an ancestor of $v$, $d(u,v)=d(\gamma,v)-d(\gamma,u)$, which can be computed in $O(1)$ time. In the following, whenever we need to compute a distance $d(u,v)$, it is always the case that one of $u$ and $v$ is an ancestor of the other, and thus $d(u,v)$ can be obtained in $O(1)$ time.

Next, we compute a ``rank'' $rank(v)$ for each vertex $v$ of $T$. These ranks will facilitate our algorithm later.
For each vertex $v$, we define a point $p(v)$ on the $x$-axis with $x$-coordinate equal to $d(\gamma,v)$ in an $xy$-coordinate system $\bbR^2$, and define $l(v)$ as the line through $p(v)$ with slope equal to $-w(v)$. Let $L$ be the set of these $n$ lines.
Consider the line arrangement $\calA(L)$ of $L$. Let $v_1(L)$ and $v_2(L)$ be the vertices as defined in Section~\ref{sec:pre}. By Lemma~\ref{lem:arrangement}, both vertices can be computed in $O(n\log n)$ time. Let $l$ be a horizontal line strictly between $v_1(L)$ and $v_2(L)$. We sort all lines of $L$ by their intersections with $l$ from left to right, and for each vertex $v$, we define $rank(v)=i$ if there are $i-1$ lines before $l(v)$ in the above order.
By the definitions of $v_1(L)$ and $v_2(L)$, the above order of $L$ is also an order of $L$ sorted by their intersections with the horizontal line $y=\lambda^*$.

\subsection{Phase 0}
\label{sec:phase0}

Recall that $T$ has more than $2n/r$ leaves. In this section, we reduce the problem to the problem of placing centers in a tree with at most $2n/r$ leaves. Our algorithm will maintain a range $(\lambda_1,\lambda_2]$ that contains $\lambda^*$.
Initially, $\lambda_1=y(v_2(L))$, the $y$-coordinate of $v_2(L)$, which is already computed in the preprocessing, and $\lambda_2=y(v_1(L))$. We form a stem-partition of $T$, which is actually a path-partition since there are no thorns and twigs initially, and this can be done in $O(n)$ time.

Recall that $r=\log^2 n$.
While there are more than $2n/r$ leaves in $T$, we do the following.

Recall that the length of a stem is defined as the number of backbone vertices.
Let $S$ be the set of all leaf-stems of $T$ whose lengths are at most $r$. Let $n'$ be the number of all backbone vertices on the leaf-stems of $S$. For each leaf-stem of $S$, we form matrices by Lemma~\ref{lem:matrixform}. Let $\calM$ denote the collection of matrices for all leaf-stems of $S$. We call \msearch\ on $\calM$, with stopping count $c=n'/(2r)$, by using the feasibility test algorithm \ftest0. After \msearch\ stops, we have an updated range $(\lambda_1,\lambda_2)$ and matrix elements of $\calM$ in $(\lambda_1,\lambda_2)$ are called {\em active} values. Since $c=n'/(2r)$, at most $n'/(2r)$ active values of $\calM$ remain, and thus at most $n'/(2r)$ leaf-stems of $S$ have active values.

For each leaf-stem $P\in S$ without active values, we perform the following {\em post-processing procedure}. The backbone vertex of $P$ closest to the root is called the {\em top vertex}. We place centers on $P$, subtract their number from $k$, and replace $P$ by either a thorn or a twig connected to the top vertex ($P$ is thus removed from $T$ except the top vertex), such that solving the $k$-center problem on the modified $T$ also solves the problem on the original $T$.
The post-processing procedure can be implemented in $O(m)$ time, where $m$ is the length of $P$. The details are given below.

\paragraph{The post-processing procedure on $P$.}
Let $z$ be the top vertex of $P$. We run the feasibility test algorithm \ftest0\ on $P$ with $z$ as the root and $\lambda=\lambda'$ that is an arbitrary value in $(\lambda_1,\lambda_2)$. After $z$ is finally processed, depending on whether $sup(z)\leq dem(z)$, we do the following.

If $sup(z)\leq dem(z)$, then let $q$ be the last center that has been placed.
In this case, all vertices of $P$ are covered and $z$ is covered by $q$.
According to algorithm \ftest0\ and as discussed in the proof of Lemma~\ref{lem:sorted}, $q$ covers a connected subtree of vertices, and let $V(q)$ denote the set of these vertices excluding $z$. Note that $V(q)$ can be easily identified during \ftest0. Let $k'$ be the number of centers excluding $q$ that have been placed on $P$. Since $\lambda'\in (\lambda_1,\lambda_2)$ and the matrices formed based on $P$ do not have any active values, we have the following {\em key observation:} if we run \ftest0\ with any $\lambda\in (\lambda_1,\lambda_2)$,
the algorithm will also cover all vertices of $P\setminus (V(q)\cup \{z\})$ with $k'$ centers and cover vertices of $V(q)\cup \{z\}$ with one center.
Indeed, this is true because the way we form matrices for $P$ is consistent with \ftest0, as discussed in the proof of Lemma~\ref{lem:sorted}. 
In this case, we replace $P$ by attaching a twig $e(u,z)$ to $z$ with length equal to $d(u,z)$, where $u$ is a vertex of $V(q)$ with the following property: For any $\lambda\in (\lambda_1,\lambda_2)$, if we place a center $q'$ on the path $\pi(u,z)$ at distance $\lambda/w(u)$ from $u$, then $q'$ will cover all vertices of $V(q)$ under $\lambda$, i.e., $u$ ``dominates'' all other vertices of $V(q)$ and thus it is sufficient to keep $u$ (since $\lambda_2$ is feasible, any subsequent feasibility test in the algorithm will use $\lambda\in (\lambda_1,\lambda_2)$). The following lemma shows that $u$ is the vertex of $V(q)$ with the largest rank.

%
%

\begin{lemma}\label{lem:60}
Let $u$ be the vertex of $V(q)$ with the largest rank. For any $\lambda\in (\lambda_1,\lambda_2)$, the following holds.
\begin{enumerate}
\item
$\frac{\lambda}{w(u)}\leq d(u,z)$.
\item
If $q'$ is the point on the path $\pi(u,z)$ with distance $\frac{\lambda}{w(u)}$ from $u$, then $q'$ covers all vertices of $V(q)$ under $\lambda$, i.e., $w(v)\cdot d(v,q')\leq \lambda$ for all $v\in V(q)$.
\end{enumerate}
\end{lemma}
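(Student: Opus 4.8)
The plan is to establish both parts by exploiting what the key observation and the definition of $rank$ tell us about the vertices of $V(q)$. Recall that $V(q)$ is the set of vertices covered by the single center $q$ in \ftest0\ run with some $\lambda'\in(\lambda_1,\lambda_2)$, excluding the top vertex $z$, and the key observation guarantees that under \emph{every} $\lambda\in(\lambda_1,\lambda_2)$ these same vertices are covered by one center. Crucially, $V(q)$ consists entirely of backbone and thorn vertices: no twig vertex can be in $V(q)$, since by the stem definition each twig $e(w_i,v_i)$ must itself contain a center under any $\lambda\in(\lambda_1,\lambda_2)$, so a twig vertex cannot be covered by a center lying off its own twig together with other vertices. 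Thus for every $v\in V(q)$, the point $q$ (and hence the optimal single center for $V(q)\cup\{z\}$) lies on the backbone, and each $v\in V(q)$ is covered from a point on the backbone that is an ancestor-or-descendant along the path to $z$.

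For the geometric heart, I would translate covering into the line picture of Section~\ref{sec:first}. Every vertex $v\in V(q)$ is covered by the center $q'$ placed on $\pi(u,z)$ exactly when the $y$-coordinate of the intersection of the appropriate line $l^{\pm}(v)$ with the vertical line through $q'$ is at most $\lambda$. The rank ordering is the key tool here: by its construction in Section~\ref{sec:preprocess}, ordering vertices by $rank$ is precisely ordering their lines $l(v)$ (slope $-w(v)$ through $p(v)$) by their intersection with the horizontal line $y=\lambda^*$, and more generally by their intersection with $y=\lambda$ for any $\lambda\in(\lambda_1,\lambda_2)$, since the arrangement $\calA(L)$ has no vertex with $y$-coordinate in $(y(v_2(L)),y(v_1(L)))\supseteq(\lambda_1,\lambda_2)$, so the left-to-right order of the lines is invariant throughout that open strip. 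Having the largest rank therefore means that, at height $\lambda$, $u$'s line is the rightmost — i.e., for a center placed to the left of $V(q)$ (toward $z$), $u$ is the vertex whose covering constraint is most binding, the one that determines how far right the center must reach.

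The two parts then follow. For Part~1, I would argue that $\lambda/w(u)\le d(u,z)$ because $u$ must actually be coverable from a point on $\pi(u,z)$: placing a center at distance $\lambda/w(u)$ from $u$ toward $z$ is exactly the point where $u$'s weighted distance equals $\lambda$, and since \ftest0\ covers $u$ together with $z$ by a single center on the backbone under every $\lambda\in(\lambda_1,\lambda_2)$, this placement point cannot overshoot $z$; formally, the intersection of $l^{-}(u)$ (or the relevant branch) with the vertical line through $z$ must sit at height at least $\lambda$ for $z$ to remain on the correct side, which rearranges to $\lambda/w(u)\le d(u,z)$. For Part~2, I would show that once $q'$ covers $u$ at distance exactly $\lambda/w(u)$, it covers every other $v\in V(q)$: because $u$ has the largest rank, its line lies rightmost at height $\lambda$, so any $v\in V(q)$ has $w(v)\cdot d(v,q')\le w(u)\cdot d(u,q')=\lambda$, as the rightmost-line vertex yields the largest weighted distance from any center positioned at or left of $q'$ on $\pi(u,z)$. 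The main obstacle I anticipate is handling the geometry carefully when $V(q)$ contains vertices on \emph{both} sides of $q'$ along the backbone (some to the left of $q'$, some to the right), and verifying that the rank comparison, which is stated in terms of the single line $l(v)$, correctly dominates covering constraints arising from both $l^{+}(v)$ and $l^{-}(v)$; I would resolve this by using that $q$ lies between $z$ and $u$, so relative to the target position $q'$ the binding constraints for the far vertices are exactly the ones captured by the rank order, while nearer vertices are covered a fortiori.
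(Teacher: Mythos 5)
Your proposal gets the $V_1$-type rank argument right, but it fails at three concrete points. First, your opening structural claim that $V(q)$ contains no twig vertex is false: each twig must indeed contain a center, but that center can itself be the last center $q$, covering its twig vertex $w_i$ \emph{and} other vertices of $P$ including $z$; so $V(q)$ can contain a twig vertex, and in fact $u$ itself can be one (in which case $q'$, at distance $\lambda/w(u)<d(w_i,v_i)$ from $u$, lies on the twig). The paper's proof has a dedicated case for $q'$ on a twig, handled via the key observation rather than via ranks, and your assumption that $q$ and the optimal single center lie on the backbone is unjustified. Second, and more seriously, your central mechanism for Part 2 --- ``any $v\in V(q)$ has $w(v)\cdot d(v,q')\le w(u)\cdot d(u,q')=\lambda$ because $u$'s line is rightmost at height $\lambda$'' --- is sound only for vertices $v$ in the subtree rooted at $q'$: the height of $l(v)$ at the vertical line $x=d(\gamma,q')$ equals $w(v)\cdot d(v,q')$ only when $q'$ is an ancestor of $v$. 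For $v$ outside that subtree (the paper's $V_2(q)$, e.g.\ a thorn vertex attached between $q'$ and $z$), the path $\pi(v,q')$ goes through the LCA, the line comparison at height $\lambda$ bounds nothing, and your proposed fix (``nearer vertices are covered a fortiori'') points the wrong way. The paper's actual argument for $V_2(q)$ is different in kind: by the key observation a single center covers all of $V(q)$ under $\lambda$, and since $q'$ is the point of $\pi(u,v)$ closest to $v$ that still covers $u$, moving toward $v$ would lose $u$; hence $q'$ must already cover $v$.

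Third, your derivation of Part 1 does not follow. Knowing that \ftest0\ covers $u$ and $z$ by a single center $c$ inside $P$ only yields $d(u,c)\le \lambda/w(u)$; it does not prevent $\lambda/w(u)>d(u,z)$ (take $w(u)$ small and $c$ near $z$), so your ``the placement point cannot overshoot $z$'' step has no justification. The paper avoids this circularity by first defining $q'$ on $\pi(u,\gamma)$ (extended past $\gamma$ by a dummy edge if necessary), proving Part 2 for that $q'$, and only then deducing Part 1 by contradiction: if $\lambda/w(u)>d(u,z)$, then $q'$ lies outside $P$, and by Part 2 a center outside $P$ would cover all of $V(q)$ --- contradicting the key observation that \ftest0, whose combinatorial behavior is identical for every $\lambda\in(\lambda_1,\lambda_2)$ because no matrix value of $P$ is active, places a center \emph{inside} $P$ to cover $V(q)\cup\{z\}$. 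So Part 1 must be obtained as a consequence of Part 2 plus the greedy/forced-placement property of \ftest0, not from the coverability of $u$ and $z$ alone; as written, your proposal fails on twig vertices, on all of $V_2(q)$, and on Part 1.
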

\begin{proof}
Before we prove $\frac{\lambda}{w(u)}\leq d(u,z)$, let $q'$ be the point on $\pi(u,\gamma)$ with distance $\frac{\lambda}{w(u)}$ from $u$ (if  $\frac{\lambda}{w(u)}>d(u,\gamma)$, then we add a dummy edge $e^*$ extended from the root $\gamma$ long enough so that $q'$ is on $e^*$). Later we will show that $\frac{\lambda}{w(u)}\leq d(u,z)$, which also proves that $q'$ is on $\pi(u,z)$.

We first show that $q'$ covers all vertices of $V(q)$.
Consider any vertex $v\in V(q)$. Our goal is to prove $w(v)\cdot d(v,q')\leq \lambda$. If $v=u$, this trivially holds. In the following, we assume $v\neq u$.

Note that $q'$ may be on a twig of $P$.
If $q'$ is on a twig $e(u,v)$ of $P$, then this means that $q'$ is on $P$ and $\frac{\lambda}{w(u)}\leq d(u,z)$ holds.
In this case, if we run \ftest0\ with $\lambda$, then $q'$ will be a center placed by \ftest0\ to cover $u$. On the other hand, according to the above key observation, \ftest0\ with $\lambda$ will use one center to cover all vertices of $V(q)$. Hence, $q'$ covers all vertices of $V(q)$ and thus covers $v$.
In the following, we assume that $q'$ is not on a twig of $P$. Note that by the definition of thorns, $q'$ cannot be in any thorn. Thus $q'$ must be either on the backbone of $P$ or outside $P$ in $\pi(z,\gamma)\cup e^*$. We define $V_1(q)$ to be the set of vertices of $V(q)$ in the subtree rooted at $q'$ and let $V_2(q)=V(q)\setminus V_1(q)$. Depending on whether $v$ is in $V_1(q)$ or $V_2(q)$, there are two cases.

\paragraph{The case $v\in V_1(q)$.}
Recall that in Section~\ref{sec:preprocess} each vertex $v'$ defines a line $l(v')$ in $\bbR^2$. We consider the two lines $l(v)$ and $l(u)$. Let $p_v$ and $p_u$ denote the intersections of the horizontal line $y=\lambda$ with $l(v)$ and $l(u)$, respectively (e.g., see Fig.~\ref{fig:rank}).
Note that the point $q'$ corresponding to the point $(x(p_u),0)$ in the $x$-axis in the sense that $d(\gamma,q')=x(p_u)$. Since $rank(v)<rank(u)$ and $\lambda\in (\lambda_1,\lambda_2)$, by the definition of ranks, it holds that $x(p_v)< x(p_u)$. Because the slope of $l(v)$ is not positive, $y(p_v')\leq \lambda$, where $p_v'$ is the intersection of $l(v)$ with the vertical line through $p_u$. On the other hand, since $v$ is in $V_1(q)$, $y(p_v')$ is exactly equal to $w(v)\cdot d(v,q')$. Therefore, we obtain $w(v)\cdot d(v,q')\leq \lambda$.

\begin{figure}[t]
\begin{minipage}[t]{\linewidth}
\begin{center}
\includegraphics[totalheight=1.3in]{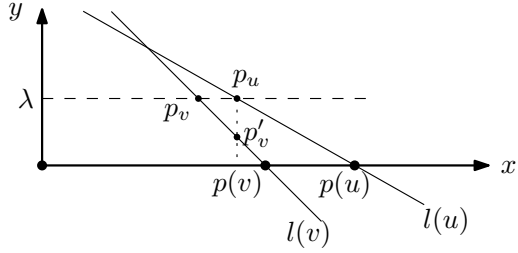}
\caption{\footnotesize Illustrating the proof of the case $v\in V_1(q)$. Note that $p(v)$ and $p(u)$ are the points defined respectively by $v$ and $u$ in Section~\ref{sec:preprocess}.}
\label{fig:rank}
\end{center}
\end{minipage}
\vspace*{-0.15in}
\end{figure}

\paragraph{The case $v\in V_2(q)$.}
In this case, $V_2(q)\neq \emptyset$. By the definition of $V_2(q)$, $q'$ must be in $\pi(u,v)$. According to the above key observation, we can use one center to cover all vertices of $V(q)$ (under $\lambda$), and in particular, we can use one center to cover both $u$ and $v$. By the definition of $q'$, $q'$ is the closest point to $v$ on $\pi(u,v)$ that can cover $u$. Hence, $q'$ must be able to cover $v$. Therefore, we obtain $w(v)\cdot d(v,q')\leq \lambda$.
\vspace{0.1in}

The above proves $w(v)\cdot d(v,q')\leq \lambda$.

Finally, we argue that $\frac{\lambda}{w(u)}\leq d(u,z)$. Assume to the contrary that this is not true. Then, $q'$ is outside $P$. This means that we can place a center outside $P$ to cover all vertices of $V(q)$ under $\lambda$. But this contradicts the above key observation that \ftest0\ for $\lambda'$ will place a center in $P$ to cover the vertices of $V(q)$. The lemma thus follows.
\qed
\end{proof}

Due to the preprocessing in Section~\ref{sec:preprocess}, we can find $u$ from $V(q)$ in $O(m)$ time. This finishes our post-processing procedure for the case $sup(z)\leq dem(z)$.
Since $\frac{\lambda}{w(u)}\leq d(u,z)$ for any $\lambda\in (\lambda_1,\lambda_2)$,  we have $w(u)\cdot d(u,z)\geq \lambda_2$, and thus, $e(u,z)$ is indeed a twig.

Next, we consider the other case $sup(z)>dem(z)$. In this case, $P$ has some vertices other than $z$ that are not covered yet, and we would need to place a center at $z$ to cover them. Let $V$ be the set of all uncovered vertices other than $z$, and $V$ can be identified during \ftest0. In this case, we replace $P$ by attaching a thorn $e(u,z)$ to $z$  with length equal to $d(u,z)$, where $u$ is a vertex of $V$ with the following property: For any $\lambda\in (\lambda_1,\lambda_2)$, if there is a center $q$ outside $P$ covering $u$ {\em through} $z$ (by ``through'', we mean that $\pi(q,u)$ contains $z$) under distance $\lambda$, then $q$ also covers all other vertices of $V$ (intuitively $u$ ``dominates'' all other vertices of $V$). Since later we will place centers outside $P$ to cover the vertices of $V$ through $z$ under some $\lambda\in (\lambda_1,\lambda_2)$, it is sufficient to maintain $u$. The following lemma shows that $u$ is the vertex of $V$ with the largest rank.

\begin{lemma}\label{lem:70}
Let $u$ be the vertex of $V$ with the largest rank. Then, for any center $q$ outside $P$ that covers $u$ through $z$ under any distance $\lambda\in (\lambda_1,\lambda_2)$, $q$ also covers all other vertices of $V$.
\end{lemma}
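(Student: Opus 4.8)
The plan is to reduce the covering condition to a single comparison between $u$ and an arbitrary $v\in V$, and then to recognize that comparison as precisely the rank inequality guaranteed by the preprocessing in Section~\ref{sec:preprocess}; the argument is the mirror image of the case $v\in V_1(q)$ in the proof of Lemma~\ref{lem:60}. First I would record the geometry of the situation. Since $z$ is the top vertex of $P$, every vertex of $V$ is a proper descendant of $z$, and $P\setminus\{z\}$ is attached to the rest of $T$ only through $z$. Hence for any $q$ lying outside $P$ and any $v\in V$ the path $\pi(q,v)$ passes through $z$, so $d(q,v)=d(q,z)+d(z,v)$. Writing $D=d(q,z)$, the hypothesis that $q$ covers $u$ under $\lambda$ becomes $w(u)\cdot(D+d(z,u))\le \lambda$, i.e. $D\le \frac{\lambda}{w(u)}-d(z,u)$ (note $w(u)>0$, since any weight-$0$ vertex has $dem=\infty$ and is never placed in $V$). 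Likewise, proving that $q$ covers an arbitrary $v\in V$ amounts to establishing $D\le \frac{\lambda}{w(v)}-d(z,v)$.

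Thus it suffices to show $\frac{\lambda}{w(u)}-d(z,u)\le \frac{\lambda}{w(v)}-d(z,v)$ for every $v\in V$, for then $D\le \frac{\lambda}{w(u)}-d(z,u)\le \frac{\lambda}{w(v)}-d(z,v)$ and rearranging gives $w(v)\cdot d(q,v)\le\lambda$. This remaining inequality is exactly the rank comparison. Recall that $l(v)$ passes through $(d(\gamma,v),0)$ with slope $-w(v)$, so its intersection with the horizontal line $y=\lambda$ has $x$-coordinate $d(\gamma,v)-\frac{\lambda}{w(v)}$. Because $u$ has the largest rank in $V$ and $\lambda\in(\lambda_1,\lambda_2)\subseteq(y(v_2(L)),y(v_1(L)))$ --- a strip containing no vertex of $\calA(L)$, so the left-to-right order of the lines is constant there and agrees with $rank(\cdot)$ --- we get $d(\gamma,v)-\frac{\lambda}{w(v)}\le d(\gamma,u)-\frac{\lambda}{w(u)}$. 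Substituting $d(\gamma,v)=d(\gamma,z)+d(z,v)$ and $d(\gamma,u)=d(\gamma,z)+d(z,u)$ and cancelling the common term $d(\gamma,z)$ yields $d(z,v)-\frac{\lambda}{w(v)}\le d(z,u)-\frac{\lambda}{w(u)}$, which is exactly the desired inequality.

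The argument is essentially routine once these two observations are in place, so the only points that need care are precisely the two I would isolate at the outset: that $d(q,v)=d(q,z)+d(z,v)$ holds simultaneously for all $v\in V$ (which is where ``covers through $z$'' and the fact that $z$ is the unique attachment point of $P$ are both used), and that the precomputed $rank(\cdot)$ order is valid for the particular $\lambda$ in question (which is where the nesting $(\lambda_1,\lambda_2)\subseteq(y(v_2(L)),y(v_1(L)))$ maintained by the algorithm is needed). I do not expect a genuine obstacle here; the main subtlety is simply keeping the direction of the rank inequality consistent with the greedy placement, which is exactly why selecting the maximum-rank vertex of $V$ is the correct choice of $u$.
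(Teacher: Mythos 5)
Your proposal is correct and takes essentially the same route as the paper: the paper omits the details and simply points to the case $v\in V_1(q)$ in the proof of Lemma~\ref{lem:60}, and your argument---comparing the intersections of $l(u)$ and $l(v)$ with the horizontal line $y=\lambda$, which is valid since $(\lambda_1,\lambda_2)$ lies in the vertex-free strip of $\calA(L)$ so the left-to-right order agrees with $rank(\cdot)$---is exactly that rank-comparison argument written out algebraically. Your explicit reduction $d(q,v)=d(q,z)+d(z,v)$ for all $v\in V$ (using that $z$ is the unique attachment point of $P$) is a clean way to make the omitted step rigorous for an arbitrary center $q$ outside $P$, rather than only for a center on $\pi(u,\gamma)$ as in Lemma~\ref{lem:60}.
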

\begin{proof}
Let $v$ be any vertex of $V$ other than $u$. Our goal is to prove that $d(q,v)\cdot w(v)\leq \lambda$. The proof is similar to that for the case $v\in V_1(q)$ of Lemma~\ref{lem:60} and we omit the details.
\qed
\end{proof}

Since a center at $z$ would cover $u$, it holds that $w(u)\cdot d(u,z)\leq \lambda$ for any $\lambda\in (\lambda_1,\lambda_2)$, which implies that $w(u)\cdot d(u,z)\leq \lambda_1$. Thus, $e(u,z)$ is indeed a thorn.

The above replaces $P$ by attaching to $z$ either a thorn or a twig. We perform the following additional processing.

Suppose $z$ is attached by a thorn $e(z,u)$. If $z$ already has another thorn $e(z,u')$, then we discard one of $u'$ and $u$ whose rank is smaller, because any center that covers the remaining vertex will cover the discarded one as well (the proof is similar to those in Lemma~\ref{lem:60} and \ref{lem:70} and we omit it). This makes sure that $z$ has at most one thorn.

Suppose $z$ is attached by a twig $e(z,u)$. If $z$ already has another twig $e(z,u')$, then we can discard one of $u$ and $u'$ whose rank is {\em larger} (and subtract $1$ from $k$). The reason is the following. Without loss of generality, assume $rank(u)<rank(u')$. Since both $e(z,u)$ and $e(z,u')$ are twigs, if we apply \ftest0\ on any $\lambda\in (\lambda_1,\lambda_2)$, then the algorithm will place a center $q$ on $e(z,u)$ with distance $\lambda/w(u)$ from $u$ and place a center $q'$ on $e(z,u')$ with distance $\lambda/w(u')$ from $u'$.
As $rank(u)<rank(u')$, we have the following lemma.

\begin{lemma}\label{lem:80}
$d(q,z)\leq d(q',z)$.
\end{lemma}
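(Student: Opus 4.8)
The plan is to set up the coordinate picture from Section~\ref{sec:preprocess} and reduce the claim $d(q,z)\le d(q',z)$ to a comparison of the two centers' $x$-coordinates in $\bbR^2$. Recall that $z$ is the top vertex, so both twigs $e(z,u)$ and $e(z,u')$ hang off $z$, and both $u$ and $u'$ lie in the subtree of $T$ rooted at $z$ (i.e., ``below'' $z$ in the direction away from $\gamma$). Since $z$ is their common ancestor, we have $d(u,z)=d(\gamma,u)-d(\gamma,z)$ and likewise for $u'$, so $d(q,z)=d(u,z)-\lambda/w(u)$ and $d(q',z)=d(u',z)-\lambda/w(u')$. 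Thus proving $d(q,z)\le d(q',z)$ is equivalent to proving
\begin{equation*}
d(\gamma,u)-\frac{\lambda}{w(u)}\ \le\ d(\gamma,u')-\frac{\lambda}{w(u')}.
\end{equation*}

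The key step is to recognize the left-hand side as the $x$-coordinate of the point where the line $l(u)$ (through $p(u)=(d(\gamma,u),0)$ with slope $-w(u)$) meets the horizontal line $y=\lambda$, and the right-hand side as the analogous intersection for $l(u')$. Indeed, $l(u)$ has equation $y=-w(u)\bigl(x-d(\gamma,u)\bigr)$, so at height $y=\lambda$ we get $x=d(\gamma,u)-\lambda/w(u)$, and symmetrically for $u'$. Denote these intersection abscissae by $x(p_u)$ and $x(p_{u'})$. Then the target inequality is exactly $x(p_u)\le x(p_{u'})$.

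Now I invoke the rank definition. Since $\lambda\in(\lambda_1,\lambda_2)$ lies strictly between $y(v_2(L))$ and $y(v_1(L))$, the horizontal line $y=\lambda$ meets the lines of $L$ in the same left-to-right order as the line $l$ used to define the ranks (this is precisely the property established in Section~\ref{sec:preprocess}: there is no vertex of $\calA(L)$ with $y$-coordinate strictly between $y(v_2(L))$ and $y(v_1(L))$, so no two lines of $L$ swap order across that horizontal slab). Hence for any $\lambda$ in the range, the intersection of $l(u)$ with $y=\lambda$ lies to the left of that of $l(u')$ if and only if $rank(u)<rank(u')$. Since by assumption $rank(u)<rank(u')$, we get $x(p_u)\le x(p_{u'})$, which is the desired inequality and completes the proof.

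The only subtlety I anticipate is bookkeeping the sign/direction conventions so that ``smaller rank'' translates to ``smaller $x$-coordinate at height $\lambda$'' and that this in turn yields $d(q,z)\le d(q',z)$ rather than the reverse; this hinges on both $u$ and $u'$ being descendants of the common ancestor $z$ so that distances along the twigs subtract cleanly as $d(\gamma,\cdot)-d(\gamma,z)$. This mirrors exactly the argument used for the case $v\in V_1(q)$ in the proof of Lemma~\ref{lem:60}, where the same rank-ordering-equals-$x$-ordering fact was used, so the main work is just translating that geometric comparison into the distance comparison $d(q,z)\le d(q',z)$.
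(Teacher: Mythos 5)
Your proposal is correct and follows essentially the same route as the paper's proof: both reduce $d(q,z)\leq d(q',z)$ to comparing the $x$-coordinates of the intersections of $l(u)$ and $l(u')$ with the horizontal line $y=\lambda$ (the paper phrases this as $x(p_u)=d(\gamma,q)\leq d(\gamma,q')=x(p_{u'})$ and subtracts $d(\gamma,z)$), and both derive that comparison from $rank(u)<rank(u')$. Your added justification that the left-to-right order of the lines is invariant throughout the slab between $y(v_2(L))$ and $y(v_1(L))$ --- since $\calA(L)$ has no vertex strictly inside it --- is exactly the fact the paper leaves implicit in its appeal to ``the definition of ranks.''
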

\begin{proof}
The analysis is similar to those in Lemma~\ref{lem:60} and \ref{lem:70}.
Consider the lines $l(u)$ and $l(u')$ in $\bbR^2$ defined by $u$ and $u'$, respectively, as discussed in Section~\ref{sec:preprocess}. Let $p_u$ and $p_{u'}$ be the intersections of the horizontal line $y=\lambda$ with $l(u)$ and $l(u')$, respectively (e.g., see Fig.~\ref{fig:rank1}). Since $rank(u)<rank(u')$ and $\lambda\in (\lambda_1,\lambda_2)$, $x(p_u)\leq x(p_{u'})$. Note that
$x(p_u)=d(\gamma,q)$ and 
$x(p_{u'})=d(\gamma,q')$. Since $x(p_u)\leq x(p_{u'})$, we have $d(\gamma,q)\leq d(\gamma,q')$.

\begin{figure}[t]
\begin{minipage}[t]{\linewidth}
\begin{center}
\includegraphics[totalheight=1.3in]{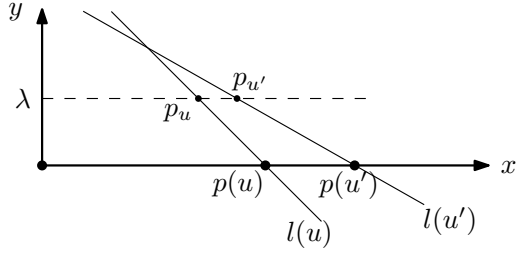}
\caption{\footnotesize Illustrating the proof of Lemma~\ref{lem:80}. Note that $p(u)$ and $p(u')$ are the points defined respectively by $u$ and $u'$ in Section~\ref{sec:preprocess}.}
\label{fig:rank1}
\end{center}
\end{minipage}
\vspace*{-0.15in}
\end{figure}

On the other hand, due to that $q\in e(z,u)$ and $q'\in e(z,u')$, $d(\gamma,z)\leq d(\gamma,q)$ and $d(\gamma,z)\leq d(\gamma,q')$. Thus, $d(q,z)=d(\gamma,q)-d(\gamma,z)$ and $d(q',z)=d(\gamma,q')-d(\gamma,z)$. Because $d(\gamma,q)\leq d(\gamma,q')$, we obtain $d(q,z)\leq d(q',z)$. The lemma thus follows.
\qed
\end{proof}

Lemma~\ref{lem:80} tells that any vertex that is covered by $q'$ in the subsequent algorithm will also be covered by $q$. Thus, it is sufficient to maintain the twig $e(z,u)$. Since we need to place a center at $e(z,u')$, we subtract $1$ from $k$ after removing $e(z,u')$.  Hence, $z$ has at most one twig.

This finishes the post-processing procedure for $P$. Due to the preprocessing in Section~\ref{sec:preprocess}, the running time of the procedure is $O(m)$.

\vspace{0.15in}

Let $T$ be the modified tree after the post-processing on each stem $P$ without active values. If $T$ still has more than $2n/r$ leaves, then we repeat the above. The algorithm stops once $T$ has at most $2n/r$ leaves. This finishes Phase 0. The following lemma gives the time analysis, excluding the preprocessing in Section~\ref{sec:preprocess}.

\begin{lemma}\label{lem:90}
Phase 0 runs in $O(n(\log\log n)^3)$ time.
\end{lemma}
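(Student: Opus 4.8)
The plan is to bound the running time of Phase 0 by analyzing one iteration of the while-loop and then summing over all iterations, using the fact that each iteration reduces the number of leaves by a constant factor. First I would account for the work done per iteration. Let $n'$ be the number of backbone vertices on the leaf-stems in $S$ during a given iteration. Forming the matrices for all leaf-stems of $S$ via Lemma~\ref{lem:matrixform} costs $O(n'\log r)$ preprocessing time, since each leaf-stem has length at most $r$ and the total backbone count is $n'$. The call to \msearch\ on the collection $\calM$ dominates: by Lemma~\ref{lem:msearch} the number of feasibility tests is $O(\log r)$ (since each matrix dimension is at most $r$ and $\frac{m}{c+1}=O(r)$), and the work exclusive of feasibility tests is $O(\kappa \cdot n'\log r)=O(n'\log^3 r)$ because $\kappa=O(\log^2 r)$ by Lemma~\ref{lem:matrixform}. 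Each feasibility test uses \ftest0, which runs in $O(n)$ time. The post-processing on each stem without active values takes $O(m)$ time, summing to $O(n')$ over the iteration.

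The key quantitative step is to control the total number of iterations and the total feasibility-test cost. The essential structural observation (inherited from the scheme of~\cite{ref:FredericksonPa91}) is that at the start of each iteration, the number of short leaf-stems $|S|$ is a constant fraction of the current leaf count, and after the post-processing each leaf-stem without active values is collapsed to a single thorn or twig; since \msearch\ leaves at most $c=n'/(2r)$ active values, at most $n'/(2r)$ leaf-stems survive, so a constant fraction of the leaves of $T$ disappear each iteration. This forces the number of iterations to be $O(\log r)=O(\log\log n)$, because the leaf count starts below $n$ and must reach $2n/r=2n/\log^2 n$, i.e.\ it drops by a $\log^2 n$ factor, which takes $O(\log\log n)$ halvings. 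Moreover $\sum n'$ over all iterations is $O(n)$, since the backbone vertices processed across iterations are charged to distinct vertices being removed from $T$.

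Putting the pieces together, the \msearch-overhead and matrix-forming cost summed over all iterations is $O(\log^3 r \cdot \sum n')=O(n\log^3 r)=O(n(\log\log n)^3)$, using $r=\log^2 n$ so that $\log r=\Theta(\log\log n)$. The feasibility-test cost is the product of the number of tests and the cost per test: $O(\log\log n)$ iterations times $O(\log r)=O(\log\log n)$ tests per iteration times $O(n)$ per test gives $O(n(\log\log n)^2)$, which is dominated by the previous term. Hence Phase 0 runs in $O(n(\log\log n)^3)$ time, excluding the preprocessing of Section~\ref{sec:preprocess}.

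The main obstacle I expect is the bookkeeping that justifies $\sum n'=O(n)$ together with the constant-fraction leaf reduction per iteration, so that both the number of iterations is $O(\log\log n)$ and the aggregate backbone work telescopes to $O(n)$; this is exactly the place where the ``short leaf-stem'' definition (length at most $r$) and the stopping count $c=n'/(2r)$ interact, and it requires arguing that the leaf-stems that are \emph{not} short (length exceeding $r$) are few enough not to obstruct the constant-fraction decrease. The rest of the time bound is a routine substitution of $r=\log^2 n$ into the per-iteration costs supplied by Lemmas~\ref{lem:msearch} and~\ref{lem:matrixform}.
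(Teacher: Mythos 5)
Your proposal is correct and takes essentially the same approach as the paper's proof: identical per-iteration accounting via Lemmas~\ref{lem:matrixform} and~\ref{lem:msearch}, the same $O(\log r)$ iteration bound from constant-fraction leaf-stem reduction, and the same charging argument for $\sum n'=O(n)$ (at least $n'/2$ backbone vertices are deleted per iteration, since the at most $n'/(2r)$ surviving stems each have length at most $r$). The details you flag as obstacles are resolved in the paper exactly as you anticipate: at most $n/r$ leaf-stems exceed length $r$, so $|S|\geq m/2$ and at least $m/4$ stems are removed per iteration, with removals possibly creating new leaves accounted for by a $7/8$-factor decrease in the leaf-stem count.
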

\begin{proof}
We first argue that the number of iterations of the while loop is $O(\log r)$. The analysis is very similar to those in~\cite{ref:FredericksonOp91,ref:FredericksonPa91}, and we include it here for completeness.

We consider an iteration of the while loop.
Suppose the number of leaf-stems in $T$, denoted by $m$, is at least $2n/r$.
Then, at most $n/r$ leaf-stems are of length larger than $r$. Hence, at least half of the leaf-stems are of length at most $r$. Thus, $|S|\geq m/2$. Recall that $n'$ is the total number of backbone vertices in all leaf-stems of $S$. Because at most $n'/(2r)$ leaf-stems have active values after \msearch, at least $|S|-n'/(2r)\geq m/2-n'/(2r)\geq m/2-n/(2r)\geq m/2-m/4=m/4$ leaf-stems will be removed. Note that removing two such leaf-stems may make an interior vertex become a new leaf in the modified tree. Hence, the tree resulting at the end of each iteration will have at most $7/8$ of the leaf-stems of the tree at the beginning of the iteration. Therefore, the number of iterations of the while loop needed to reduce the number of leaf-stems to at most $2n/r$ is $O(\log r)$.

We proceed to analyze the running time of Phase 0.
In each iteration of the while loop, we call \msearch\ on the matrices
for all leaf-stems of $S$. Since the length of each stem $P$ of $S$ is
at most $r$, there are $O(r)$ matrices formed for $P$. We perform the
preprocessing of Lemma~\ref{lem:matrixform} on the matrices, so that
each matrix element can be evaluated in $O(\log^2 r)$ time. The total
time of the preprocessing on stems of $S$ is $O(n'\log r)$.
Since $\calM$ has $O(n')$ matrices and the stopping account $c$ is
$n'/(2r)$, each call to \msearch\ produces $O(\log r)$ values
for feasibility tests in $O(n'\log^3 r)$ time (i.e., $O(n'\log r)$
matrix elements will be evaluated).
For each leaf-stem without active values, the post-processing time for it is $O(r)$. Hence, the total post-processing time in each iteration is $O(n')$.

Since there are $O(\log r)$ iterations, the total number of
feasibility tests is $O(\log^2 r)$, and thus the overall time for all
feasibility tests in Phase 0 is $O(n\log^2 r)$. On the other hand,
after each iteration, at most $n'/(2r)$ leaf-stems of $S$ have active
values and other leaf-stems of $S$ will be deleted. Since the length
of each leaf-stem of $S$ is at most $r$, the leaf-stems with active
values have at most $n'/2$ backbone vertices, and thus at least $n'/2$
backbone vertices will be deleted in each iteration. Therefore, the
total sum of such $n'$ in all iterations is $O(n)$. Hence, the total
time for the preprocessing of Lemma~\ref{lem:matrixform} is $O(n\log
r)$, the total time for \msearch\ is
$O(n\log^3 r)$, and the total post-processing time for leaf-stems
without active values is $O(n)$.

In summary, the overall time of Phase 0 (excluding the preprocessing in Section~\ref{sec:preprocess}) is $O(n\log^3 r)$, which is $O(n(\log\log n)^3)$ since $r=\log^2 n$.
\qed
\end{proof}

\subsection{Phase 1}

We assume that the tree $T$ now has at most $2n/r$ leaves and we want to place $k$ centers in $T$ to cover all vertices. Note that $T$ may have some thorns and twigs.
The main purpose of this phase is to gather information so that each feasibility test can be done in sublinear time, and specifically, $O(n/r\log^3 r)$ time. Recall that we have a range $(\lambda_1,\lambda_2]$ that contains $\lambda^*$.

We first form a stem-partition for $T$.
Then, we further partition the stems into substems, each of length at most $r$, such that the lowest backbone vertex $v$ in a substem is the highest backbone vertex in the next lower substem (if $v$ has a thorn or/and a twig, then they are included in the upper substem). So this results in a partition of edges. Let $S$ be the set of all substems. Let $T_c$ be the tree in which each node represents a substem of $S$ and node $\mu$ in $T_c$ is the parent of node $\nu$ if the highest backbone vertex of the substem for $\nu$ is the lowest backbone vertex of the substem for $\mu$, and we call $T_c$ the {\em stem tree}.
As in~\cite{ref:FredericksonOp91,ref:FredericksonPa91}, since $T$ has at most $2n/r$ leaves, $|S|=O(n/r)$ and the number of nodes of $T_c$ is $O(n/r)$.

For each substem $P\in S$, we compute the set $L_P$ of lines as in Section~\ref{sec:first}. Let $L$ be the set of all the lines for all substems of $S$. We define the lines of $L$ in the same $xy$-coordinate system $\bbR^2$. Clearly, $|L|=O(n)$. Consider the line arrangement $\calA(L)$. Define vertices $v_1(L)$ and $v_2(L)$ of $\calA(L)$ as in Section~\ref{sec:pre}. With Lemma~\ref{lem:arrangement} and \ftest0, both vertices can be computed in $O(n\log n)$ time. We update $\lambda_1=\max\{\lambda_1,y(v_2(L))\}$ and $\lambda_2=\min\{\lambda_2,y(v_1(L))\}$. Hence, we still have $\lambda^*\in (\lambda_1,\lambda_2]$. We again call the values in $(\lambda_1,\lambda_2)$ {\em active} values.

For each substem $P\in S$, observe that each element of the matrices formed based on $P$ in Section~\ref{sec:second} is equal to the $y$-coordinate of the intersection of two lines of $L_P$, and thus is equal to the $y$-coordinate of a vertex of $\calA(L)$. By the definitions of $v_1(L)$ and $v_2(L)$, no matrix element of $P$ is active.
%
%
%

In the future algorithm, we will only need to test feasibilities for values $\lambda\in (\lambda_1,\lambda_2)$.
In what follows, we compute a data structure on each substem $P$ of $S$, so that it will help make the feasibility test faster. We will prove the following lemma and use \ftestnew\ to denote the feasibility test algorithm in the lemma.

\begin{lemma}
After $O(n\log n)$ time preprocessing, each feasibility test can be
done in $O(n/r\cdot \log^3 r)$ time.
\end{lemma}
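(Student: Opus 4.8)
The plan is to realize the faster test \ftestnew\ as a faithful simulation of \ftest0\ that walks the stem tree $T_c$ in post-order but spends only $O(\log^3 r)$ time per substem; since $T_c$ has $O(n/r)$ nodes this gives the claimed $O(\frac{n}{r}\log^3 r)$ query time. For each substem $P$ of length $m\le r$ I would build, during preprocessing, a data structure $D_P$ that transforms an \emph{incoming state} at the bottom backbone vertex of $P$, together with the query value $\lambda\in(\lambda_1,\lambda_2)$, into (i) the number of centers that \ftest0\ would place inside $P$ and (ii) the \emph{outgoing state} at the top backbone vertex of $P$. At query time \ftestnew\ first merges, in $O(1)$ time per child, the outgoing states of the children of $P$ in $T_c$ into a single incoming state, then invokes $D_P$; summing the placed centers over all substems and comparing with $k$ decides feasibility. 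As shown below each $D_P$ can be built in $O(m\log^3 m)$ time, so the extra preprocessing is $\sum_P O(m\log^3 m)=O(n\log^3 r)=O(n(\log\log n)^3)$, which is dominated by the $O(n\log n)$ already spent on the line-arrangement computation of Phase~1; hence total preprocessing is $O(n\log n)$.

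The design of $D_P$ rests on two invariants over $(\lambda_1,\lambda_2)$. First, because no matrix element formed for $P$ in Section~\ref{sec:second} is active, for every range $[i,j]$ of consecutive vertices the predicate ``a single center covers $v_i,\dots,v_j$ together with their thorns (and possibly an endpoint twig) under $\lambda$'' is \emph{independent} of which $\lambda\in(\lambda_1,\lambda_2)$ we pick, since the associated one-center value $\alpha(i,j)$ (or $A^r_i[\cdot],A^l_i[\cdot]$) lies either at most $\lambda_1$ or at least $\lambda_2$. Consequently the combinatorial pattern by which the greedy bottom-up rule of \ftest0\ groups backbone-and-thorn vertices under common centers is the same for all $\lambda$ in the range. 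Second, with this pattern fixed, every quantity that \ftest0\ manipulates inside $P$ — each $sup$, each $dem=\lambda/w(\cdot)$, and each center position — is an \emph{affine} function of $\lambda$, so I would store these as affine functions and evaluate them in $O(1)$ time per query. Thorns are folded into the initial demand of their backbone vertex; each twig contributes one forced center, counted by a precomputed prefix sum along the backbone, and injects a supply at its backbone vertex governed by the (also non-active) twig-inclusive values $A^r_i,A^l_i$.

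The transformation itself I would realize by splitting the backbone at the lowest vertex above which the processing no longer depends on the incoming state: the incoming state affects the greedy only up to the first center placement that ``resets'' the computation to a fresh demand, and above that reset vertex the bottom-up processing of $P$ is identical for every incoming state and can be precomputed. In preprocessing I would compute, for each backbone index $t$, the next reset index $\mathrm{next}(t)$ reached when a fresh demand starts at $v_t$, by binary searching for the largest $j$ with $\alpha(t,j)\le\lambda$; by non-activeness this $j$ is the same for all $\lambda\in(\lambda_1,\lambda_2)$, and each evaluation of $\alpha$ costs $O(\log^2 m)$ by Lemma~\ref{lem:matrixform} (built on the 2D sublist LP structure of Lemma~\ref{lem:sublist}), so one index costs $O(\log^3 m)$. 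Then, scanning indices from the top downward along the $\mathrm{next}(\cdot)$ chain, I would tabulate for each $t$ both the number of centers placed from $v_t$ to the top and the affine-in-$\lambda$ outgoing state. At query time, given the incoming state and $\lambda$, I would locate the reset vertex $v_t$ by one binary search over the precomputed coverage data ($O(\log^3 m)$ time, handling the incoming supply and incoming demand modes), charge the $O(1)$ centers placed in the input-dependent prefix, read off the tabulated suffix contribution for $t$, and evaluate the stored affine functions at $\lambda$ to obtain the outgoing state. The dominant cost is $O(\log^3 m)=O(\log^3 r)$ per substem, as required.

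The hardest part will be establishing the correctness of this prefix/suffix split in the presence of twigs. Without twigs, once a forced center is placed the demand restarts and the history below becomes irrelevant, so ``the first reset vertex'' is well defined and binary-searchable; twigs, however, inject supplies in the interior of the backbone and may themselves trigger or absorb center placements, so the claim that the suffix is genuinely input-independent — and that a single binary search pinpoints the split vertex in both the incoming-supply and incoming-demand modes — must be verified case by case against the exact update rules of \ftest0. A secondary technical point is the consistent treatment of the backbone vertex shared between a substem and its parent in the stem-partition, so that each vertex and each center is counted exactly once over the post-order traversal of $T_c$; I would resolve this by letting each substem own all of its backbone except its top vertex, whose processing is charged to the parent.
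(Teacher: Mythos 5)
Your overall architecture coincides with the paper's: a post-order simulation over the stem tree $T_c$ with $O(n/r)$ nodes; per-substem precomputation of the greedy's ``reset chain'' (your $\mathrm{next}(t)$ tabulation is the paper's $ncen(i)$ and $v(i)$ of Lemma~\ref{lem:130}, built with the 2D sublist LP structure of Lemmas~\ref{lem:sublist} and~\ref{lem:matrixform}); the use of non-activeness of the matrix elements over $(\lambda_1,\lambda_2)$ to make this precomputation $\lambda$-invariant; and a binary search at query time to absorb the input-dependent prefix (the paper's Lemmas~\ref{lem:140} and~\ref{lem:150}), giving $O(\log^3 r)$ per substem. Your reset argument for the twig-free case is sound: once a center is forced, its supply dominates any incoming supply, so the suffix is input-independent.

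However, the point you explicitly defer --- ``the correctness of this prefix/suffix split in the presence of twigs \ldots must be verified case by case'' --- is precisely where the lemma's real content lies, and your proposal does not contain the idea that resolves it. The paper resolves it with the \emph{cleanup procedure}: using the global order of the lines of $L$ at height $\lambda^*$ (fixed for all $\lambda\in(\lambda_1,\lambda_2)$ by the arrangement computation, via the ranks $rank'(\cdot)$), Lemma~\ref{lem:110} reduces ``does the twig center $q_i$ cover $v_j$'' to a rank comparison, and two linear scans then mark \emph{every} backbone and thorn vertex covered by \emph{any} twig center, in $O(m)$ time and $\lambda$-invariantly. Only after deleting these marked vertices is the remaining set $V$ such that the greedy's coverage groups are contiguous intervals of $V$ served by backbone centers, which is what your reset-chain tabulation tacitly assumes; without cleanup, twig centers cover scattered vertices interleaved with the groups and the tabulated suffix is not valid. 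A second missing ingredient is the reduction of the residual boundary interaction of twig supplies to just two precomputed representatives: the twig $a$ whose center best absorbs the incoming demand from child stems, and the twig $b$ whose center provides the best outgoing supply --- again justified by rank comparisons. Your twig-inclusive arrays $A_i^r,A_i^l$ address which vertices can share a center \emph{with} a twig vertex, which is a different question from which vertices are covered by the \emph{fixed} centers placed on twigs, so they do not substitute for the cleanup. Without these two ideas your $D_P$ cannot be built as described, so as it stands the proof has a genuine gap at its acknowledged hardest step.
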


We first discuss the preprocessing and then present the algorithm \ftestnew.

\subsubsection{The Preprocessing for \ftestnew}

Consider a substem $P$ of $S$. Let $v_1,v_2,\ldots,v_m$ be the backbone vertices of $P$ sorted from left to right, with $v_m$ as the top vertex. Each vertex $v_i$ may have a twig $e(w_i,v_i)$ and a thorn $e(u_i,v_i)$.
Let $\lambda$ be an arbitrary value in $(\lambda_1,\lambda_2)$. In the
following, all statements made to $\lambda$ is applicable to
any $\lambda\in (\lambda_1,\lambda_2)$, and this is due to that
none of the elements in the matrices produced by $P$ is active.

By the definition of twigs, if we run \ftest0\ with $\lambda$, the
algorithm will place a center, denoted by $q_i$, on each
twig $e(w_i,v_i)$ at distance $\lambda/w(w_i)$ from $w_i$
We first run the following {\em cleanup procedure} to remove all vertices of $P$
that can be covered by the centers on the twigs under $\lambda$.

\paragraph{The cleanup procedure.}
We first compute a rank $rank'(l)$ for each line $l$ of $L$, as follows.

Let $L'$ be the sequence of the lines of $L$ sorted by their intersections with the horizontal line $y=\lambda$ from left to right. By the definitions of $\lambda_1$ and $\lambda_2$, $L'$ is also the sequence of the lines of $L$ sorted by their intersections with the horizontal line $y=\lambda^*$. In fact, the sequence $L'$ is unique for any $\lambda\in (\lambda_1,\lambda_2)$. For any line $l\in L$, if there are $i-1$ lines before $l$ in $L'$, then we define $rank'(l)$ to be $i$. Clearly, $rank'(l)$ for all lines $l\in L$ can be computed in $O(n)$ time.

Consider a twig $e(w_i,v_i)$ and a backbone vertex $v_j$ with $j\geq i$. Recall that $w_i$ defines a line $l^+(w_i)$ of slope $w(w_i)$ and $v_j$ defines a line $l^-(v_j)$ of slope $-w(v_j)$ in $L_P$ (and thus are in $L$). We have the following lemma.

\begin{figure}[t]
\begin{minipage}[t]{\linewidth}
\begin{center}
\includegraphics[totalheight=1.3in]{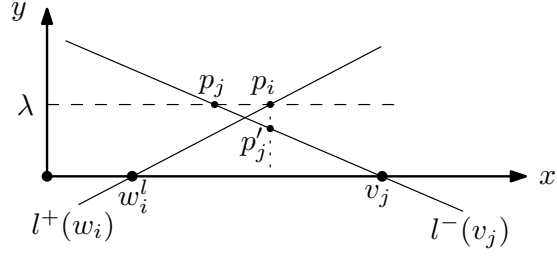}
\caption{\footnotesize Illustrating the proof of Lemma~\ref{lem:110}. Note that $w_i^l$ is the point defined by $w_i$, as discussed in Section~\ref{sec:first}.}
\label{fig:rank2}
\end{center}
\end{minipage}
\vspace*{-0.15in}
\end{figure}

\begin{lemma}\label{lem:110}
For any $\lambda\in (\lambda_1,\lambda_2)$, the center $q_i$ on $e(w_i,v_i)$  covers $v_j$ if and only if $rank'(l^+(w_i))>rank'(l^-(v_j))$.
\end{lemma}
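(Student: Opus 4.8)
The plan is to translate the covering condition into a statement about the intersection points of the two relevant lines with the horizontal line $y=\lambda$, and then invoke the definition of $rank'$. First I would recall the geometric meaning of the lines established in Section~\ref{sec:first}: the center $q_i$ is placed on the twig $e(w_i,v_i)$ at distance $\lambda/w(w_i)$ from $w_i$, so that $q_i$ corresponds to the point on the $x$-axis whose $x$-coordinate equals $d(\gamma,q_i)$, which is exactly $x$-coordinate of the intersection of $l^+(w_i)$ with the line $y=\lambda$. This is because $l^+(w_i)$ passes through $w_i^l$ with slope $w(w_i)$, so the point on $l^+(w_i)$ at height $\lambda$ is at $x$-coordinate $x(w_i^l)+\lambda/w(w_i)$, which is the position of $q_i$ projected onto the backbone (see Fig.~\ref{fig:rank2}).

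**Expressing the covering condition.** Next I would express ``$q_i$ covers $v_j$'' as $w(v_j)\cdot d(v_j,q_i)\le \lambda$. Since $j\ge i$ and $q_i$ lies on the twig at $v_i$ (hence to the left of $v_j$ on the backbone, or at least $v_j$ is reached by going rightward/through the backbone from $q_i$), the weighted distance $w(v_j)\cdot d(v_j,q_i)$ equals the $y$-coordinate of the intersection of the line $l^-(v_j)$ (the line of slope $-w(v_j)$ through $v_j$, which governs weighted distances to points on the left of $v_j$) with the vertical line through $q_i$. Therefore $q_i$ covers $v_j$ if and only if, at the $x$-coordinate of $q_i$, the line $l^-(v_j)$ is at height at most $\lambda$. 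Because $l^-(v_j)$ has negative slope, its height is at most $\lambda$ precisely when its $x$-coordinate at height $y=\lambda$ is to the left of (or equal to) $x(q_i)$; that is, the intersection of $l^-(v_j)$ with $y=\lambda$ has $x$-coordinate at most that of the intersection of $l^+(w_i)$ with $y=\lambda$.

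**Finishing via ranks.** Finally I would convert this comparison of $x$-coordinates at height $y=\lambda$ into the rank inequality. By definition of $rank'$, the lines of $L$ are ordered by their intersections with $y=\lambda$ from left to right, and $rank'(l^+(w_i))>rank'(l^-(v_j))$ says precisely that $l^+(w_i)$ intersects $y=\lambda$ to the right of where $l^-(v_j)$ does. Combining with the previous step, $q_i$ covers $v_j$ iff $x$-coordinate of the intersection of $l^-(v_j)$ with $y=\lambda$ is at most that of $l^+(w_i)$, i.e. iff $rank'(l^-(v_j))\le rank'(l^+(w_i))$, which (since the two lines are distinct and ranks are distinct integers) is equivalent to $rank'(l^+(w_i))>rank'(l^-(v_j))$.

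**Main obstacle.** The step I expect to require the most care is verifying that the weighted distance $w(v_j)\cdot d(v_j,q_i)$ is indeed read off from $l^-(v_j)$ rather than $l^+(v_j)$, i.e. confirming that $q_i$ lies on the correct (left) side of $v_j$ so that the path $\pi(v_j,q_i)$ leaves $v_j$ toward smaller $x$-coordinates. Since $q_i$ sits on the twig at $v_i$ and $j\ge i$, the path from $v_j$ to $q_i$ runs leftward along the backbone to $v_i$ and then out onto the twig, so the relevant line is $l^-(v_j)$; I would make this orientation argument explicit, together with the remark that the ordering $L'$ and hence $rank'$ is independent of the particular $\lambda\in(\lambda_1,\lambda_2)$ because no matrix element of $P$ is active, ensuring the stated equivalence holds uniformly for all such $\lambda$.
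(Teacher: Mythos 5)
Your proof is correct and follows essentially the same route as the paper's: both compare the intersections $p_i$ of $l^+(w_i)$ and $p_j$ of $l^-(v_j)$ with the line $y=\lambda$, observe that the vertical line through $p_i$ marks the (unfolded) position of $q_i$ so that $w(v_j)\cdot d(q_i,v_j)$ is read off from $l^-(v_j)$ at that abscissa, and conclude that coverage holds iff $p_i$ lies to the right of $p_j$, i.e., $rank'(l^+(w_i))>rank'(l^-(v_j))$. One small correction to your closing remark: the $\lambda$-independence of the sorted order $L'$ (and hence of $rank'$) follows from the updates $\lambda_1=\max\{\lambda_1,y(v_2(L))\}$ and $\lambda_2=\min\{\lambda_2,y(v_1(L))\}$, which guarantee that no vertex of the full arrangement $\calA(L)$ has $y$-coordinate in $(\lambda_1,\lambda_2)$, rather than from the inactivity of the matrix elements alone.
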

\begin{proof}
Let $p_i$ and $p_j$ be the intersections of the horizontal line $y=\lambda$ with $l^+(w_i)$ and $l^-(v_j)$, respectively. Refer to Fig.~\ref{fig:rank2}. Let $p_j'$ denote the intersection of $l^-(v_j)$ with the vertical line through $p_i$. Since $q_i$ is located on $e(w_i,v_i)$, according to the definitions of $l^+(w_i)$ and $l^-(v_j)$, $y(p_j')$ is exactly equal to $w(v_j)\cdot d(q_i,v_j)$. Hence, $q_i$ covers $v_j$ if and only if $p_j'$ is below the line $y=\lambda$. On the other hand, $p_j'$ is below the line $y=\lambda$ if and only if $p_i$ is to the right of $p_j$, i.e., $rank'(l^+(w_i))>rank'(l^-(v_j))$. The lemma thus follows. \qed
\end{proof}

Consider a thorn vertex $u_j$ with $j\geq i$. Recall that $u_j$ defines a line $l^-(u_j)$ in $L_P$ with slope $-w(u_j)$. Similarly as above, $q_i$ covers $u_j$ if and only if $rank'(l^+(w_i))>rank'(l^-(u_j))$.

Based on the above observations, we use the following algorithm to find all backbone and thorn vertices of $P$ that can be covered by the centers on the twigs to their left sides. Let $i'$ be the smallest index such that $v_{i'}$ has a twig $w_{i'}$. The algorithm maintains an index $t$. Initially, $t=i'$. For each $i$ incrementally from $i'$ to $m$, we do the following. If $v_i$ has a twig-vertex $w_i$, then we reset $t$ to $i$ if $rank'(l^+(w_i))>rank'(l^+(w_t))$. The reason we do so is that if $rank'(l^+(w_i))>rank'(l^+(w_t))$, then for any $j\geq i$ such that $v_j$ (resp., $u_j$) is covered by the center $q_t$ on the twig $e(w_t,v_t)$, $v_j$ (resp., $u_j$) is also covered by the center $q_i$ on the twig $e(w_i,v_i)$, and thus it is sufficient to maintain the twig $e(w_i,v_i)$. Next, if $rank'(l^+(w_t))>rank'(l^-(v_i))$, then we mark $v_i$ as ``covered''. If $u_i$ exits and $rank'(l^+(w_t))>rank'(l^-(u_i))$, then we mark $u_i$ as ``covered''.

The above algorithm runs in $O(m)$ time and marks all vertices $v_j$ (reps., $u_j$) such that there exists a twig $e(w_i,v_i)$ with $i\leq j$ whose center $q_i$ covers $v_j$ (reps., $u_j$). In a symmetric way by scanning the vertices from $v_m$ to $v_1$, we can mark in $O(m)$ time all vertices $v_j$ (reps., $u_j$) such that there exits a twig $e(w_i,v_i)$ with $i\geq j$ whose center $q_i$ covers $v_j$ (reps., $u_j$). We omit the details. This marks all vertices that are covered by centers on twigs.

Let $V$ be the set of backbone and thorn vertices of $P$ that are not
marked, which are vertices of $P$ that need to be covered by placing
centers on the backbone of $P$ or outside $P$. If a thorn vertex $u_i$
is in $V$ but its connected backbone vertex $v_i$ is not in $V$, this
means that $v_i$ is covered by a center on a twig while $u_i$ is not
covered by any such center. Observe that any center on the backbone of $P$ or outside $P$ that covers $u_i$ will cover $v_i$ as well. For convenience of discussion, we include such $v_i$ into $V$ as well. Let $v_1',v_2',\ldots,v_t'$ be the backbone vertices of $V$ sorted from left to right (i.e., $v_t'$ is closer to the root of $T$). Note that $v_1'$ may not be $v_1$ and $v_t'$ may not be $v_m$. If $v'_i$ has a thorn vertex in $V$, then we use $u_i'$ to denote it.

This finishes the cleanup procedure.
\vspace{0.15in}

Next, we compute a data structure for $P$
to maintain some information for faster feasibility tests.

First of all, we maintain the index $a$ of the twig vertex $w_{a}$
such that $rank'(l^-(w_a))<rank'(l^-(w_i))$ for any other twig vertex
$w_i$ of $P$. The reason we keep $a$ is the following. Observe that for any
vertex $v$ that is a descent vertex of $v_1$ in $T$ (so $v$ is in
another substem that is a descent substem of $P$ in the stem tree
$T_c$), if $v$ can be covered by the center $q_i$ on a twig
$e(v_i,w_i)$ under any $\lambda\in (\lambda_1,\lambda_2)$, then $v$
can also be covered by the center $q_a$ on the twig $e(v_a,w_a)$ under
$\lambda$.
Symmetrically, we maintain the index $b$ of the twig vertex $w_{b}$
such that $rank'(l^+(w_b))>rank'(l^+(w_i))$ for any other twig vertex
$w_i$. Similarly, this is because for any vertex $v$ that is not in
any substem of the subtree rooted at $P$ in $T_c$, if $v$ can be
covered by the center on the twig $e(v_i,w_i)$ under $\lambda$, then
$v$ can also be covered by the center on the twig $e(v_b,w_b)$ under
$\lambda$.
Both $a$ and $b$ can be computed in $O(m)$ time.

For any two indices $i$ and $j$ with $1\leq i\leq j\leq t$, we use $V[i,j]$ to denote the set of all backbone vertices $v'_l$ and thorn vertices $u'_l$ with $l\in [i,j]$.

For each index $i\in [1,t]$, we maintain an integer $ncen(i)$ and a vertex $v(i)$ of $V$, which we define below.
Roughly speaking, $ncen(i)$ is the minimum number of centers that are
needed to cover all vertices of $V[i,t]$, minus one, and if we use
$ncen(i)$ centers to cover as many vertices of $V[i,t]$ as possible
from left to right,
$v(i)$ is the vertex that is not covered but ``dominates'' all other
uncovered vertices under $\lambda$.
Their detailed definitions are given below.

Let $j_i$ be the smallest index in $[i,t]$ such that it is not possible to cover all vertices in $V[i,j_i]$ by one center under $\lambda$.
If such a index $j_i$ does not exist, we let $j_i=t+1$.

If $j_i=t+1$, then we define $ncen(i)=0$. We define $v(i)$ as the vertex $v$ in $V[i,t]$ such that $rank'(l^+(v))<rank'(l^+(v'))$ for any other vertex $v'\in V[i,t]$. The reason we maintain such $v(i)$ is as follows.
Suppose during a feasibility test with $\lambda$, all vertices of $V[1,i-1]$ have been covered and we need to place a new center to cover those in $V[i,t]$. According to the greedy strategy of \ftest0, we want to place a center as close to the root $\gamma$ as possible. There are two cases.

In the first case, $w(v(i))\cdot d(v(i),v_m)< \lambda$, and one can verify that
if we place a center at the top vertex $v_m$, it can cover all vertices of $V[i,t]$ under $\lambda$. In this case, we do not place a center on the backbone of $P$ but will use a center outside $P$ to cover them (more precisely, this center is outside the subtree of $T_c$ rooted at substem $P$). We maintain $v(i)$ because any center outside $P$ covering $v(i)$ will cover all other vertices of $V[i,t]$ as well.

In the second case, $w(v(i))\cdot d(v(i),v_m)\geq \lambda$, and we need to place a center on the backbone of $P$. Again, according to the greedy strategy of \ftest0, we want to place this center close to $v_m$ as much as possible, and we use $q^*$ to denote such a center. The following lemma shows that $q^*$ is determined by $v(i)$.

\begin{lemma}\label{lem:120}
$q^*$ is on the path $\pi(v(i),v_m)$ of distance $\frac{\lambda}{w(v(i))}$ from $v(i)$.
\end{lemma}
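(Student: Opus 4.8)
The plan is to recast the placement of $q^*$ as a one-dimensional interval-intersection problem on the $x$-axis via the line representation of Section~\ref{sec:first}, and then read off the answer directly from the definition of $rank'$ and $v(i)$. First I would invoke the key property established in Section~\ref{sec:first}, which holds for backbone and thorn vertices alike: if a center is placed on the backbone at a point $p$, then for any vertex $v\in V[i,t]$ the weighted distance $w(v)\cdot d(v,p)$ equals the $y$-coordinate of the intersection of the vertical line through $p$ with $l^+(v)$ when $p$ lies to the right of $v$, and with $l^-(v)$ when $p$ lies to the left of $v$. Writing $x^+(v)$ (resp.\ $x^-(v)$) for the $x$-coordinate of the intersection of $l^+(v)$ (resp.\ $l^-(v)$) with the horizontal line $y=\lambda$, this means a backbone point $p$ covers $v$ under $\lambda$ if and only if $x^-(v)\le x(p)\le x^+(v)$. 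Hence a single backbone center covers all of $V[i,t]$ if and only if its $x$-coordinate lies in the interval $[\max_{v\in V[i,t]}x^-(v),\ \min_{v\in V[i,t]}x^+(v)]$, which is nonempty precisely because we are in the case $ncen(i)=0$, i.e.\ $j_i=t+1$, so that all of $V[i,t]$ is coverable by one center.

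Next I would identify the rightmost valid position, which is exactly $q^*$: by the greedy rule of \ftest0, when a new center must be placed it is pushed as close to the root as possible, and here the root of the substem is the top vertex $v_m$, so the greedy center is the one with the largest $x$-coordinate that still covers all of $V[i,t]$. By the interval characterization this $x$-coordinate is $\min_{v\in V[i,t]}x^+(v)$. Since $rank'$ orders the lines of $L$ by their intersection with $y=\lambda$ from left to right, $x^+(v)$ is monotone in $rank'(l^+(v))$, so $\min_{v}x^+(v)$ is attained at the vertex minimizing $rank'(l^+(\cdot))$, which is $v(i)$ by its definition. Therefore $x(q^*)=x^+(v(i))$, and by the definition of $x^+(v(i))$ we get $w(v(i))\cdot d(v(i),q^*)=\lambda$, i.e.\ $d(v(i),q^*)=\frac{\lambda}{w(v(i))}$; moreover $q^*$ lies to the right of $v(i)$, since $l^+(v(i))$ has positive slope and passes through $(x(v(i)),0)$ and hence meets $y=\lambda>0$ at an $x$-coordinate larger than $x(v(i))$.

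Finally I would confirm that $q^*$ lands on $\pi(v(i),v_m)$ rather than past $v_m$, which is exactly where the second-case hypothesis enters. There $w(v(i))\cdot d(v(i),v_m)\ge\lambda$, i.e.\ $d(v(i),v_m)\ge \frac{\lambda}{w(v(i))}=d(v(i),q^*)$; as $v_m$ is the rightmost backbone vertex and $q^*$ lies to the right of $v(i)$, this yields $x(v(i))<x(q^*)\le x(v_m)$, so $q^*$ sits on the backbone subpath $\pi(v(i),v_m)$ at distance $\frac{\lambda}{w(v(i))}$ from $v(i)$, as claimed. The step requiring the most care is the uniform interval characterization that treats thorn vertices on the same footing as backbone vertices, together with the verification that the binding (leftmost) endpoint $\min_v x^+(v)$ is precisely the quantity singled out by the $rank'$-minimizing definition of $v(i)$; the remainder is a short monotonicity-and-nonemptiness argument, and the second-case inequality provides a clean cross-check (in the complementary first case $\min_v x^+(v)$ would exceed $x(v_m)$, consistent with the text placing the covering center outside $P$).
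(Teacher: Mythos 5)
Your proof is correct in substance and rests on the same machinery as the paper's: the line representation of Section~\ref{sec:first}, the left-to-right ordering of the lines along $y=\lambda$ encoded by $rank'$, the single-center coverability of $V[i,t]$ coming from $j_i=t+1$, and the case hypothesis $w(v(i))\cdot d(v(i),v_m)\geq \lambda$ to keep the center on the backbone. What you do differently is the packaging. The paper fixes the candidate point $p$ at distance $\frac{\lambda}{w(v(i))}$ from $v(i)$ and then verifies coverage of every other $v'\in V[i,t]$ by a two-case analysis: a rank comparison when $v'$ lies in the subtree rooted at $p$, and an exchange/contradiction argument (moving $p$ towards $v'$ would lose $v(i)$, so no backbone point could cover both, contradicting $j_i=t+1$) when it does not. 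You instead characterize the feasible backbone positions as the interval $\left[\max_{v\in V[i,t]} x^-(v),\ \min_{v\in V[i,t]} x^+(v)\right]$ and read off $q^*$ as its right endpoint, with nonemptiness (from $j_i=t+1$) absorbing the paper's exchange argument and the $rank'$-minimality in the definition of $v(i)$ identifying the binding constraint $x^+(v(i))=\min_v x^+(v)$. This is a cleaner, more uniform organization: one monotonicity-and-nonemptiness argument replaces the paper's case analysis, and your observation that the actual weighted distance from a backbone point to any vertex (backbone or thorn) is the upper envelope of its two lines makes the interval characterization legitimate for thorn vertices as well.

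One spot needs repair. When $v(i)$ is a thorn vertex attached at a backbone vertex $v$, the line $l^+(v(i))$ passes through the point on the $x$-axis at coordinate $x(v)-d(v(i),v)$, not through a point at the position of $v(i)$ itself; so your justification that ``$q^*$ lies to the right of $v(i)$ since $l^+(v(i))$ has positive slope and passes through $(x(v(i)),0)$'' does not by itself show $x(q^*)\geq x(v)$. Without that, the backbone point at coordinate $x^+(v(i))$ could a priori lie to the left of $v$, hence off the path $\pi(v(i),v_m)$, and the identity $d(v(i),q^*)=\frac{\lambda}{w(v(i))}$ measured along the tree path would fail. The fix is one line, and it is exactly the step the paper makes explicit in its proof: by the definition of a thorn, $w(v(i))\cdot d(v(i),v)\leq \lambda_1<\lambda$, hence $\frac{\lambda}{w(v(i))}>d(v(i),v)$ and $x^+(v(i))>x(v)$, so $q^*$ indeed lies on the backbone portion of $\pi(v(i),v_m)$ and the claimed distance follows. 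With this inserted, your argument is complete.
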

\begin{proof}
The proof is somewhat similar to that of Lemma~\ref{lem:60}.

Let $p$ be the point on $\pi(v(i),v_m)$ of distance
$\frac{\lambda}{w(v(i))}$ from $v(i)$. Note that since $w(v(i))\cdot
d(v(i),v_m)\geq \lambda$, such a point $p$ must exist on
$\pi(v(i),v_m)$. By definition, $p$ is the point on $\pi(v(i),v_m)$
closest to $v_m$ that can cover $v(i)$. If $v(i)$ is a backbone
vertex, then $p$ is on the backbone of $P$. Otherwise, $v(i)$ is a
thorn vertex and $w(v(i))\cdot d(v(i), v)\leq \lambda_1$, where $v$ is the backbone
vertex that connects $v(i)$. Since $\lambda_1<\lambda$, we obtain
$w(v(i))\cdot d(v(i), v)< \lambda$, and thus $p$ must
be on the backbone of $P$. Hence, in either case, $p$ is on the
backbone of $P$. Consider any vertex $v'\in V[i,t]$ with $v'\neq
v(i)$. In the following, we show that $v'$ is covered by $p$.

If $v'$ is in the subtree rooted at $p$ (i.e., $\pi(v',v_m)$ contains
$p$), then since $rank'(l^+(v(i)))<rank'(l^+(v'))$, one can verify
that $v'$ is covered by $p$. Otherwise, assume to the contrary that
$p$ does not cover $v'$. Then, we would need to move $p$ towards $v'$ in order to cover $v'$. However,
since $p$ is the point on $\pi(v(i),v_m)$ closest to $v_m$ that can
cover $v(i)$, $p$ is also the point on $\pi(v(i),v')$ closest to $v'$
that can cover $v(i)$. Hence, moving $p$ towards $v'$ will make $p$
not cover $v(i)$ any more, which implies that no point on the backbone
of $P$ can cover both $v'$ and $v(i)$. This contradicts with the fact
that it is possible to place a center on the backbone of $P$ to cover
all vertices in $V[i,t]$. Therefore, $p$ covers $v'$.

The above shows that $p$ is the  point  on the backbone of $P$ closest
to $v_m$ that can cover all vertices of $V[i,t]$. Thus, $p$ is $q^*$,
and the lemma follows.
\qed
\end{proof}

The above defines $ncen(i)$ and $v(i)$ for the case where $j_i=t+1$. If $j_i\leq t$, we define $ncen(i)$ and $v(i)$ recursively as $ncen(i)=ncen(j_i)+1$ and $v(i)=v(j_i)$. Note that $i<j_i$, and thus this recursive definition is valid.

In the following, we present an algorithm to compute $ncen(i)$ and $v(i)$  for all $i\in [1,t]$. In fact, the above recursive definition implies a dynamic programming approach to scan the vertices $v'_i$ backward from $t$ to $1$. The details are given in the following lemma.

\begin{lemma}\label{lem:130}
$ncen(i)$ and $v(i)$ for all $i\in [1,t]$ can be computed in $O(m\log^2 m)$ time.
\end{lemma}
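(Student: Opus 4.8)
The plan is to reduce the whole computation to one-center coverability queries and to exploit a monotonicity property so that only $O(m)$ such queries are issued in total. First I would build, for the vertices of $V$, a 2D sublist LP data structure of Lemma~\ref{lem:sublist}. I collect the four upper half-planes associated with each pair $(v'_l,u'_l)$ (those bounded by $l^+(v'_l),l^-(v'_l),l^+(u'_l),l^-(u'_l)$, using the $x$-axis when $u'_l$ is absent, exactly as in Section~\ref{sec:second}) and list them in the order $l=1,2,\ldots,t$. With this ordering the set $V[i,j]$ corresponds to the contiguous block of half-planes from index $4(i-1)+1$ to $4j$, so a single sublist LP query returns the $y$-coordinate of the lowest point in their common intersection, which (as established for $\alpha(\cdot,\cdot)$ in Section~\ref{sec:second}) equals the optimal one-center objective for covering $V[i,j]$. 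Hence one query of cost $O(\log^2 m)$ decides whether $V[i,j]$ can be covered by a single center under $\lambda$, and building this structure costs $O(m\log m)$.

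Next I would compute $j_i$ for all $i$. Let $J_i=\max\{j\in[i,t]:V[i,j]\text{ is coverable by one center under }\lambda\}$, so that $j_i=J_i+1$, with $J_i=t$ exactly when $j_i=t+1$; note $J_i\ge i$ since $V[i,i]$ is always coverable (a center at $v'_i$ covers its thorn because $w(u'_i)\cdot d(u'_i,v'_i)\le\lambda_1<\lambda$). The key observation is that $J_i$ is nondecreasing in $i$: for $i'>i$ the set $V[i',j]$ is a subset of $V[i,j]$, so coverability of $V[i,j]$ implies that of $V[i',j]$. This monotonicity lets me use a two-pointer scan: process $i$ from $1$ to $t$ while maintaining a pointer $j$ that is never decreased; for each $i$ I advance $j$ by one sublist query per step as long as $V[i,j+1]$ remains coverable, then record $J_i=j$. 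Because the invariant ``$V[i,j]$ coverable'' is preserved when the left endpoint increases, $j$ moves only forward, giving $O(t)$ successful advances in total plus one failed boundary check per $i$. Thus only $O(t)$ queries, i.e. $O(m\log^2 m)$ time, are needed to obtain all $j_i$.

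Finally I would fill in $ncen(i)$ and $v(i)$ by a single backward sweep $i=t,t-1,\ldots,1$, directly following the recursive definition. The monotonicity of $J_i$ shows that $\{i:j_i=t+1\}$ is a suffix $[i_0,t]$; for these base-case indices $v(i)$ is the vertex of $V[i,t]$ minimizing $rank'(l^+(\cdot))$, and since $V[i,t]$ grows by $v'_i$ (and $u'_i$) as $i$ decreases, I maintain a running minimum in $O(1)$ amortized time per index. For $i<i_0$ I set $ncen(i)=ncen(j_i)+1$ and $v(i)=v(j_i)$ in $O(1)$ time, which is valid because $i<j_i$ guarantees the referenced values are already available. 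This sweep costs $O(t)=O(m)$ time, so the total is $O(m\log m)+O(m\log^2 m)+O(m)=O(m\log^2 m)$.

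The hard part is shaving the seemingly natural $O(\log m)$-query binary search for each $j_i$ (which would yield only $O(m\log^3 m)$) down to $O(1)$ amortized queries per index; this is exactly where the monotonicity of $J_i$ and the two-pointer scan are essential. A secondary point requiring care is that the two-pointer works only because the sublist LP structure answers an \emph{arbitrary} contiguous range directly, so advancing the left endpoint never forces us to ``delete'' a half-plane from a maintained intersection — we simply re-query the current window.
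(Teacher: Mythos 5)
Your proposal is correct and matches the paper's proof in all essentials: the same 2D sublist LP structure of Lemma~\ref{lem:sublist} to test one-center coverability of $V[i,j]$ in $O(\log^2 m)$ time, the same monotonicity of $j_i$ exploited via an amortized two-pointer scan issuing $O(t)$ queries in total, and the same recursive fill-in of $ncen(i)$ and $v(i)$ with a running $rank'(l^+(\cdot))$-minimum for the base-case suffix. The only cosmetic difference is that you compute all $j_i$ with a forward scan and then do a separate backward sweep, whereas the paper merges the pointer decrements and the dynamic-programming updates into a single backward pass over $i=t,\ldots,1$.
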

\begin{proof}
For any $i$ and $j$ with $1\leq i\leq j\leq t$, consider the following one-center problem: find a center to cover all backbone and thorn vertices of $V[i,j]$. As discussed in Section~\ref{sec:pre}, each backbone or thorn vertex defines two upper half-planes such that the optimal objective value for the above one-center problem is equal to the $y$-coordinate of the lowest point in the common intersection of the at most $4(j-i+1)$ half-planes defined by the backbone and thorn vertices of $V[i,j]$.
As in Section~\ref{sec:pre}, as preprocessing, we first compute the upper half-planes defined by all vertices of $V$ and order them by the indices of their corresponding vertices in $V$, and then compute the 2D sublist LP query data structure of Lemma~\ref{lem:sublist} in $O(t\log t)$ time. As in Section~\ref{sec:pre}, the lowest point of the common intersection of the upper half-planes defined by vertices of $V[i,j]$ can be computed by a 2D sublist LP query in $O(\log^2 t)$ time. We use $\alpha(i,j)$ to denote the optimal objective value of the above one-center problem for $V[i,j]$. With the above preprocessing, given $i$ and $j$, $\alpha(i,j)$ can be computed in $O(\log^2 t)$ time.

We proceed to compute $ncen(i)$ and $v(i)$ for all $i\in [1,t]$.

For each $i$ from $t$ downto $1$, we do the following. We maintain the index $j_i$. Initially when $i=t$, we set $j_i=t+1$, $ncen(i)=0$, and $v(i)=v'_t$. 
We process index $i$ as follows. We first compute $\alpha(i,j_i-1)$ in $O(\log^2 t)$ time. Depending on whether $\alpha(i,j_i-1)\leq \lambda$, there are two cases.

If $\alpha(i,j_i-1)\leq \lambda$, then depending on whether $j_i\neq t+1$, there are two subcases.

If $j_i\neq t+1$, then $ncen(i)=ncen(j_i)+1$ and $v(i)=v(j_i)$. Otherwise, we first set $ncen(i)=0$ and $v(i)=v(i+1)$. If $rank'(l^+(v'_i))<rank'(l^+(v(i)))$, then we reset $v(i)$ to $v'_i$. Further, if $v'_i$ has a thorn $u'_i$ and $rank'(l^+(u'_i))<rank'(l^+(v(i)))$, then we reset $v(i)$ to $u'_i$.

If $\alpha(i,j_i-1)> \lambda$, we keep decrementing $j_i$ by one until $\alpha(i,j_i-1)\leq \lambda$. Then, we reset $ncen(i)=ncen(j_i)+1$ and $v(i)$ to $v(j_i)$.

It is not difficult to see that the above algorithm runs in $O(t\log^2
t)$ time, which is $O(m\log^2 m)$ time since $t\leq m$. 
\qed
\end{proof}

Since $m\leq r$ and $r=\log^2 n$, we can compute the data structure
for the substem $P$ in $O(r(\log\log n)^2)$ time. The total time for
computing the data structure for all substems of $S$ is $O(n(\log\log
n)^2)$. With these data structures, we show that a feasibility
test can be done in $O(n/r \log^3 r)$ time.

\subsubsection{The Faster Feasibility Test \ftestnew}

Given any $\lambda\in (\lambda_1,\lambda_2)$, the goal is to determine
whether $\lambda$ is feasible. We will work on the stem tree $T_c$, where each node represents a stem of $S$.

Initially, we set $sup(P)=\infty$ and $dem(P)=sup(P)-1$ (so that $dem(P)$ is an infinitely large value but still smaller than $sup(P)$) for every stem $P$ of $T_c$.
We perform a post-order traversal on $T_c$ and maintain a variable $count$, which is the number of centers that have been placed so far. Suppose we are processing a stem $P$. For each child stem $P'$ of $P$, we reset $sup(P)=\min\{sup(P),sup(P')\}$ and $dem(P)=\min\{dem(P),dem(P')\}$. After handling all children of $P$ as above, we process $P$ as follows.

First of all, we increase $count$ by the number of twigs in $P$. Let $V$ be the uncovered vertices of $P$ as defined before, and $v_1',v_2',\ldots,v_t'$ are backbone vertices of $V$.
Note that $t\leq r$. Recall that we have maintained two twig indices $a$ and $b$ for $P$. Depending on whether $sup(P)\leq dem(P)$, there are two main cases.

\paragraph{The case $sup(P)\leq dem(P)$.}
If $sup(P)\leq dem(P)$, then the uncovered vertices in the children of
$P$ can be covered by the center $q$ that determines the value
$sup(P)$ (i.e., $q$ is in a child stem of $P$ and $d(q,v_1)=sup(P)$, where $v_1$ is the lowest backbone vertex of $P$). Note that we do not need
to compute $q$ and we use it only for the discussion.  We do binary search
on the list of $V$ to find the largest index $i\in [1,t]$ such that
$q$ can cover the vertices $V[1,i]$. If no such $i$ exists in $[1,t]$,
then let $i=0$. Such an index $i$ can be found in $O(\log^2r)$
time using the line-constrained 2D sublist LP queries of
Lemma~\ref{lem:sublist}, as shown in the following lemma.

\begin{lemma}\label{lem:140}
Such an index $i$ (i.e., the largest index $i\in [1,t]$ such that
$q$ can cover the vertices $V[1,i]$) can be found in $O(\log^2 r)$ time.
\end{lemma}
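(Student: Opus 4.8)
The plan is to reduce the covering condition for the fixed center $q$ to a membership test of a single point in an intersection of upper half-planes, to establish a monotonicity property so that binary search applies, and then to realize each binary-search step with a line-constrained 2D sublist LP query of Lemma~\ref{lem:sublist}.

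First I would locate $q$ in the coordinate system of Section~\ref{sec:first}. Since $q$ lies in a child stem of $P$ with $d(q,v_1)=sup(P)$, where $v_1$ is the lowest backbone vertex, the path from $q$ to every vertex of $P$ passes through $v_1$; hence, placing $v_1$ at the origin, $q$ corresponds to the known $x$-coordinate $x(q)=-sup(P)$, strictly to the left of all backbone and thorn vertices of $P$. Invoking the observation of Section~\ref{sec:first} that, for a point to the left of a vertex $v$, the quantity $w(v)\cdot d(v,q)$ equals the $y$-value of $l^-(v)$ at $x=x(q)$, I would show that $q$ covers a vertex $v\in V$ exactly when the point $(x(q),\lambda)$ lies in the upper half-plane $h^-(v)$.

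Next I would note that the $l^+$ half-planes are vacuous here: because $x(q)$ is to the left of every vertex, each line $l^+(v)$ takes a negative value at $x(q)$, so $(x(q),\lambda)$ with $\lambda>0$ is automatically above it. Therefore $q$ covers all of $V[1,i]$ if and only if $(x(q),\lambda)$ lies in the common intersection of all upper half-planes associated with the vertices of $V[1,i]$ --- precisely the contiguous prefix of half-planes indexed by $V[1,i]$ in the data structure built in the proof of Lemma~\ref{lem:130}. Since the intersection of upper half-planes meets the vertical line $x=x(q)$ in an upward ray, the membership of $(x(q),\lambda)$ is equivalent to $\lambda$ being no smaller than the $y$-coordinate of the lowest point of that intersection on this line; this is exactly the output of a line-constrained 2D sublist LP query, computable in $O(\log r)$ time because the number of half-planes involved is $O(t)=O(r)$.

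Finally I would record the monotonicity that powers the binary search: enlarging $i$ only adds constraints, so ``$q$ covers all of $V[1,i]$'' holds on an initial run of indices and fails thereafter; the sought largest index (or $i=0$) is therefore the last index for which the query answer is at most $\lambda$, and can be found by binary search over $[1,t]$. With $O(\log t)=O(\log r)$ steps, each performing one line-constrained query in $O(\log r)$ time, the total is $O(\log^2 r)$. I expect the main difficulty to be the geometric reduction itself --- confirming that coverage by the left-lying $q$ is governed by $l^-(v)$ (with $l^+(v)$ free) and that the lowest-point answer of the line-constrained query faithfully decides membership of $(x(q),\lambda)$; once this correspondence is settled, the monotonicity and the time bound are routine.
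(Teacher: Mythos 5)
Your proposal is correct and takes essentially the same route as the paper: binary search over $i$, where each step places the point at $x$-coordinate $-sup(P)$ on the vertical line $l$ and tests, via a line-constrained 2D sublist LP query on the contiguous block of half-planes for $V[1,i]$, whether the lowest point on $l$ has $y$-coordinate at most $\lambda$ --- your write-up merely makes explicit the geometric ``observation'' the paper states without proof. One micro-correction that does not affect validity: for a thorn vertex $u_j$ with $d(u_j,v_j)>sup(P)+d(v_1,v_j)$, the line $l^+(u_j)$ can take a \emph{positive} value at $x=-sup(P)$, so the $l^+$ constraints are not always vacuous there; however, at any $x$ left of $v_j$ one has $l^+(u_j)(x)\leq l^-(u_j)(x)$, so these constraints are dominated by the $l^-$ ones and your membership equivalence still holds.
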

\begin{proof}
Given an index $i\in [1,t]$, we show below that we can determine whether $q$
can cover all vertices of $V[1,i]$ in $O(\log r)$ time.

Recall that in our preprocessing, each vertex of $V$ defines two upper
half-planes in $\bbR^2$, and we have built a 2D sublist LP query data
structure on all upper half-planes defined by the vertices of $V$.
Let $q'$ be the point on the $x$-axis of $\bbR^2$ with $x$-coordinate
equal to $-sup(P)$. Let $l$
be the vertical line through $q'$ and let $p$ be the lowest point on
$l$ that is in the common intersection of all upper half-planes
defined by the vertices of $V[1,i]$. An observation is that $q$ can
cover all vertices of $V[1,i]$ if and only if the $y$-coordinate of
$p$ is at most $\lambda$, which can be determined in $O(\log r)$ time
by a line-constrained 2D sublist LP query.

If $q$ can cover all vertices of $V[1,i]$, then we continue the search
on the indices larger than $i$; otherwise, we continue the search on
the indices smaller than $i$. If $q$ cannot cover the vertices of
$V[1,i]$ for $i=1$, then we return $i=0$.
The total time is $O(\log^2 r)$.
\qed
\end{proof}

If $i=t$, then all vertices of $V$ can be covered by $q$. In this case, we reset $sup(P)=\min\{sup(P)+d(v_1,v_m),d(w_b,v_m)-\lambda/w(w_b)\}$, where the latter value is the distance from $v_m$ to the center at the twig $e(v_b,w_g)$.

If $i<t$ (this includes the case $i=0$), then we  increase $i$ by one and reset $count=count+ncen(i)$. If $w(v(i))\cdot d(v(i),v_m)< \lambda$, then we reset $dem(P)=\lambda/w(v(i))-d(v(i),v_m)$ and $sup(P)=d(w_b,v_m)-\lambda/w(w_b)$. Otherwise, we need to place an additional center on $P$ to cover the uncovered vertices of $P$ including $v(i)$, and thus we increase $count$ by one and reset $sup(P)=\min\{d(v(i),v_m)-\lambda/w(v(i)),d(w_b,v_m)-\lambda/w(w_b)\}$ and $dem(P)=\infty$.

\paragraph{The case $sup(P)> dem(P)$.}
In this case, we need to first deal with $dem(P)$, i.e., covering the vertices in the children stems of $P$ that are not covered.

If $dem(P)\geq d(v_1,w_a)-\lambda/w(w_a)$, then the center at the twig $e(v_a,w_a)$ can cover the uncovered vertices in the children stems of $P$. In this case, we increase $count$ by $ncen(1)$. If $w(v(1))\cdot d(v(1),v_m)< \lambda$, then we postpone placing centers to the next stem and reset $dem(P)=\lambda/w(v(1))-d(v(1),v_m)$ and $sup(P)=d(w_b,v_m)-\lambda/w(w_b)$. Otherwise,  we increase $count$ by one and reset $sup(P)=\min\{d(v(1),v_m)-\lambda/w(v(1)),d(w_b,v_m)-\lambda/w(w_b)\}$ and $dem(P)=\infty$.

If $dem(P)< d(v_1,w_a)-\lambda/w(w_a)$, then we do binary search to find the largest index $i\in [1,t]$ such that we can find a center $q$ on the backbone of $P$ to cover all vertices of $V[1,i]$ with $d(v_1,q)\leq dem(P)$. If such an index $i$ does not exit, then we let $i=0$. The following lemma shows that such an index $i$ can be found in $O(\log^3 r)$ time.

\begin{lemma}\label{lem:150}
Such an index $i$ can be found in $O(\log^3 r)$ time.
\end{lemma}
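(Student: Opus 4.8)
The goal is to find, via binary search on $i \in [1,t]$, the largest index such that some center $q$ on the backbone of $P$ covers all of $V[1,i]$ subject to the positional constraint $d(v_1,q) \le dem(P)$. Since the binary search drives $O(\log r)$ rounds and I want the total to be $O(\log^3 r)$, the plan is to answer each decision query (``given $i$, does such a $q$ exist?'') in $O(\log^2 r)$ time. First I would translate the covering requirement into a geometric condition in the $\bbR^2$ coordinate system of Section~\ref{sec:first}. A center $q$ on the backbone corresponds to a point on the $x$-axis, and covering all vertices of $V[1,i]$ under $\lambda$ is exactly the condition that the vertical line through $q$ meets the common intersection of the $O(i)$ upper half-planes defined by the backbone and thorn vertices of $V[1,i]$ at height at most $\lambda$. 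Thus the set of feasible $x$-coordinates for $q$ is an interval $[x_{\ell}, x_r]$ determined by where the lower boundary of that common intersection dips below $y=\lambda$.

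\textbf{Key steps.} The decision procedure for a fixed $i$ would compute the lowest point $p^*$ of the common intersection of the upper half-planes defined by $V[1,i]$ using one 2D sublist LP query of Lemma~\ref{lem:sublist} in $O(\log^2 r)$ time. If $y(p^*) > \lambda$, then no single center covers $V[1,i]$ at all and the answer is negative. Otherwise the feasible $x$-range is nonempty, and I must check whether it is compatible with the constraint $d(v_1,q) \le dem(P)$, i.e.\ whether the leftmost feasible $x$-coordinate is at most the $x$-coordinate corresponding to $d(v_1,q)=dem(P)$. The leftmost feasible $x$-coordinate is governed by the upward-sloping bounding lines (those of positive slope $w(\cdot)$) among the half-planes of $V[1,i]$; its value is where the upper envelope of these lines crosses $y=\lambda$. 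This leftmost crossing can be extracted from the same sublist LP structure, again in $O(\log^2 r)$ time, by a line-constrained query along the appropriate vertical line (analogous to the argument in Lemma~\ref{lem:140}). Comparing this $x$-coordinate against the threshold $x$-coordinate associated with $dem(P)$ gives the decision in $O(\log^2 r)$ time total. Wrapping this decision procedure in the outer binary search over $i$ then yields $O(\log r \cdot \log^2 r) = O(\log^3 r)$.

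\textbf{The main obstacle.} The delicate part is not the binary search bookkeeping but correctly characterizing the \emph{feasible $x$-interval} for $q$ and verifying monotonicity so that binary search on $i$ is valid. Monotonicity holds because $V[1,i] \subseteq V[1,i+1]$, so enlarging $i$ shrinks the common intersection and hence shrinks (or empties) the feasible interval for $q$; thus the predicate ``some valid $q$ covers $V[1,i]$'' is monotone decreasing in $i$, justifying the search. The subtlety I expect to fight with is that the positional constraint $d(v_1,q)\le dem(P)$ interacts with the covering constraint in a one-sided way: I only need the \emph{leftmost} feasible $q$ to satisfy the bound, since moving $q$ left (toward $v_1$) only helps the $dem(P)$ constraint while possibly violating coverage of left vertices, so the correct center to test is the leftmost one still covering $V[1,i]$. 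I would argue that this leftmost covering position is exactly determined by the positive-slope bounding lines of the half-planes in $V[1,i]$, and that a single line-constrained sublist LP query recovers it. Adapting the query algorithm of Lemma~\ref{lem:sublist} to return this leftmost feasible $x$-coordinate—rather than merely the lowest point—is the one place where I would need to reach into the internals of the $O(\log^2 m)$ query of~\cite{ref:ChenAp13}, much as was done for the arrays $A_i^l$ and $A_i^r$ in the proof of Lemma~\ref{lem:matrixform}.
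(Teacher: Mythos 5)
Your overall skeleton is exactly the paper's: binary search over $i$ (justified by the monotonicity you note, which the paper leaves implicit), each decision answered via the 2D sublist LP structure in $O(\log^2 r)$ time, for $O(\log^3 r)$ total; and your first step --- compute the lowest point $p$ of the common intersection of the half-planes of $V[1,i]$ and answer no if $y(p)>\lambda$ --- is precisely the paper's first step.

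The divergence is in the second half of the decision procedure, and there your plan has a concrete gap. You propose to explicitly compute the leftmost feasible $x$-coordinate $x_\ell$ (where the boundary of the common intersection crosses $y=\lambda$) and compare it with $dem(P)$. But Lemma~\ref{lem:sublist} supports only two query types --- the lowest point, and the lowest point on a \emph{given} vertical line --- and your recipe for extracting $x_\ell$ (``a line-constrained query along the appropriate vertical line'') is circular: the appropriate vertical line is exactly the unknown $x_\ell$. You concede you would have to reach into the internals of the query of~\cite{ref:ChenAp13}; that might be made to work, but it is unnecessary. The paper sidesteps computing $x_\ell$ entirely by convexity of the boundary of the common intersection: if $x(p)\le dem(P)$ the answer is immediately yes; otherwise the minimum lies strictly to the right of $dem(P)$, so the feasible interval meets $\{x \le dem(P)\}$ if and only if the envelope height at the \emph{known} vertical line $x=dem(P)$ is at most $\lambda$ --- a single line-constrained query costing $O(\log r)$. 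Note that your own observation that only the leftmost covering position matters already contains this idea; you never need the value of $x_\ell$, only the outcome of its comparison with $dem(P)$, which one query at $x=dem(P)$ decides. Separately, a consistent directional slip: the left end of the feasible interval is governed by the negative-slope lines $l^-(\cdot)$, not the positive-slope ones, since moving the center left \emph{increases} weighted distances to vertices on its right; correspondingly, moving $q$ left risks uncovering the rightward vertices of $V[1,i]$, not the left ones. Neither slip damages the binary-search framework, but as written your decision subroutine is not implementable with the stated tools without either the paper's convexity shortcut or extra data-structure surgery.
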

\begin{proof}
Given any index $i$, we first show that we can determine in $O(\log^2 r)$ time the answer to the following question: whether there exists a center $q$ on the backbone of $P$ that can cover all vertices of $V[1,i]$ with  $d(v_1,q)\leq dem(P)$?

By a 2D sublist LP query on the upper half-planes defined by the vertices of $V[1,i]$, we compute the lowest point $p$ in the common intersection of these half-planes. If $y(p)>\lambda$, then the answer to the question is no.
Otherwise, if $x(p)\leq dem(P)$, then the answer to the question is yes.
If $x(p)> dem(P)$, then let $l$ be the vertical line whose $x$-coordinate is $dem(P)$. By a line-constrained 2D sublist LP query, we can compute the lowest point $p'$ on $l$ in the above common intersection of upper half-planes in $O(\log r)$ time. If $y(p')\leq \lambda$, then the answer to the above question is yes; otherwise the answer is no.
The total time to determine the answer to the question is $O(\log^2 r)$.

If the answer is yes, then we continue the search on indices larger than $i$; otherwise we continue on indices smaller than $i$. If the answer to the question is no for $i=1$, then we return $i=0$. The total running time is $O(\log^3 r)$.
\qed
\end{proof}

If $i=t$, then there are two subcases. If $w(v(1))\cdot d(v(1),v_m)< \lambda$ and $dem(P)> d(v_1,v_m)$, then we postpone placing centers to the next stem by resetting $dem(P)=\min\{dem(P)-d(v_1,v_m), \lambda/w(v(1))-d(v(1),v_m)\}$ and $sup(P)=d(w_b,v_m)-\lambda/w(w_b)$. Otherwise, we place a center on the backbone of $P$ of distance $\delta=\max\{d(v(1),v_m)-dem(P),d(v(1),v_m)-\lambda/w(v(1))\}$ from $v_m$. Then, we increase $count$ by one, and reset $sup(P)=\min\{\delta, d(w_b,v_m)-\lambda/w(w_b)\}$ and $dem(P)=\infty$.

If $i\neq t$ (this includes the case $i=0$), then we place a center (at a location determined by the algorithm for Lemma~\ref{lem:150}) to cover $dem(P)$ as well as the vertices of $V[1,i]$ and increase $count$ by one. Next, we increment $i$ by one and increase $count$ by $ncen(i)$. If $w(v(i))\cdot d(v(i),v_m)< \lambda$, then we reset $dem(P)=\lambda/w(v(i))-d(v(i),v_m)$ and $sup(P)=d(w_b,v_m)-\lambda/w(w_b)$. Otherwise, we increase $count$ by one and reset $sup(P)=\min\{d(v(i),v_m)-\lambda/w(v(i)),d(w_b,v_m)-\lambda/w(w_b)\}$ and $dem(P)=\infty$.

This finishes the processing of the stem $P$. After the stem $P_{\gamma}$ that contains the root $\gamma$ is processed, if $sup(P_{\gamma})> dem(P_{\gamma})$, then
we place a center at the root $\gamma$ to cover the uncovered vertices and increase $count$ by one. The value $\lambda$ is feasible if and only if $count\leq k$.
Since we spend $O(\log^3r)$ time on each stem of $T_c$ and $T_c$ has $O(n/r)$ stems, \ftestnew\ runs in $O(n/r\cdot \log^3 r)$ time. Refer to Algorithm~\ref{algo:ftestnew} for the pseudocode of \ftestnew.


\begin{algorithm}
\caption{The faster feasibility test algorithm \ftestnew}
\label{algo:ftestnew}
{\small
\KwIn{The stem-tree $T_c$, the original $k$ from the input, and $\lambda\in (\lambda_1,\lambda_2)$}
\KwOut{Determine whether $\lambda$ is feasible}
\BlankLine
$count\leftarrow 0$\;
\For{\em each stem $P$ of the stem tree $T_c$}
{
  $sup(P)\leftarrow \infty$, $dem(P)\leftarrow sup(P)-1$\;
}
\For{\em each stem $P$ in the post-order traversal of $T_c$}
{
    \For{\em each child $P'$ of $P$}
	{
     $sup(P)=\min\{sup(P),sup(P')\}$,
     $dem(P)=\min\{dem(P),dem(P')\}$\;
	}
    Increase $count$ by the number of twigs of $P$\;
    Let $V$ be the set of the uncovered vertices of $P$, and $v_1',v_2',\ldots,v_t'$ are the backbone vertices of $V$\;
    \eIf{$sup(P)\leq dem(P)$}
    {
      Let $q$ be the center in a child stem of $P$ that gives the value $sup(P)$      \;
      Do binary search to find the largest index $i\in [1,t]$ such that $q$ can cover all vertices of $V[1,i]$\;
      \eIf{$i=t$}
      {
        $sup(P)=\min\{sup(P)+d(v_1,v_m),d(w_b,v_m)-\lambda/w(w_b)\}$\;
      }
      {
         $i++$, $count=count+ncen(i)$\;
         \eIf{$w(v(i))\cdot d(v(i),v_m)< \lambda$}
          {
            $dem(P)=\lambda/w(v(i))-d(v(i),v_m)$, $sup(P)=d(w_b,v_m)-\lambda/w(w_b)$\;
          }
          {
             $count++$, $sup(P)=\min\{d(v(i),v_m)-\lambda/w(v(i)),d(w_b,v_m)-\lambda/w(w_b)\}$, $dem(P)=\infty$\;
          }
      }
    }
    {
      \eIf{$dem(P)\geq d(v_1,w_a)-\lambda/w(w_a)$}
      {
         $count=count+ncen(1)$\;
         \eIf{$w(v(1))\cdot d(v(1),v_m)< \lambda$}
         {
           $dem(P)=\lambda/w(v(1))-d(v(1),v_m)$, $sup(P)=d(w_b,v_m)-\lambda/w(w_b)$\;
         }
         {
           $count++$, $sup(P)=\min\{d(v(1),v_m)-\lambda/w(v(1)),d(w_b,v_m)-\lambda/w(w_b)\}$, $dem(P)=\infty$\;
         }
      }
      {
         Do binary search to find the largest $i\in [1,t]$ such that there exists a center $q$ on the backbone of $P$ to cover all vertices of $V[1,i]$ with $d(v_1,q)\leq dem(P)$\;
         \eIf{$i=t$}
         {
           \eIf{\em $w(v(1))\cdot d(v(1),v_m)< \lambda$ and $dem(P)> d(v_1,v_m)$}
           {
              $dem(P)=\min\{dem(P)-d(v_1,v_m), \lambda/w(v(1))-d(v(1),v_m)\}$,  $sup(P)=d(w_b,v_m)-\lambda/w(w_b)$\;
           }
           {
              $count++$,
              $\delta=\max\{d(v(1),v_m)-dem(P),d(v(1),v_m)-\lambda/w(v(1))\}$, $sup(P)=\min\{\delta, d(w_b,v_m)-\lambda/w(w_b)\}$, $dem(P)=\infty$\;
           }
         }
         {
           $i++$, $count=count+1+ncen(i)$\;
           \eIf{$w(v(i))\cdot d(v(i),v_m)< \lambda$}
           {
             $dem(P)=\lambda/w(v(i))-d(v(i),v_m)$, $sup(P)=d(w_b,v_m)-\lambda/w(w_b)$\;
           }
           {
             $count++$,  $sup(P)=\min\{d(v(i),v_m)-\lambda/w(v(i)),d(w_b,v_m)-\lambda/w(w_b)\}$, $dem(P)=\infty$\;
           }
         }
      }
    }

}
\If{$sup(P_{\gamma})>dem(P_{\gamma})$}
{
        $count++$\;
}
Return true if and only if $count\leq k$\;
}
\end{algorithm}

\subsection{Phase 2}
In this phase, we will finally compute the optimal objective value $\lambda^*$, using the faster feasibility test \ftestnew. Recall that we have computed a range $(\lambda_1,\lambda_2]$ that contains $\lambda^*$ after Phase 1.

We first form a stem-partition of $T$. While there is more than one leaf-stem, we do the following.
Let $S$ be the set of all leaf-stems.
For each stem $P\in S$, we compute the set of lines as in Section~\ref{sec:first}, and let $L$ be the set of the lines for all stems of $S$.
With Lemma~\ref{lem:arrangement} and \ftestnew, we compute the two vertices $v_1(L)$ and $v_2(L)$ of the arrangement $\calA(L)$ as defined in Section~\ref{sec:pre}. We update $\lambda_1=\max\{\lambda_1,y(v_2(L))\}$ and $\lambda_2=\min\{\lambda_2,y(v_1(L))\}$. As discussed in Phase 1, each stem $P$ of $S$ does not have any active values (in the matrices defined by $P$).
Next, for each stem $P$ of $S$, we perform the post-processing procedure as in Section~\ref{sec:phase0}, i.e., place centers on $P$, subtract their number from $k$, and replace $P$ by attaching a twig or a thorn to its top vertex. Let $T$ be the modified tree.


After the while loop, $T$ is a single stem. Then, we apply above algorithm on the only stem $T$, and the obtained value $\lambda_2$ is $\lambda^*$.
The running time of Phase 2 is bounded by $O(n\log n)$, which is analyzed in the following theorem.

\begin{theorem}\label{theo:kcenter}
The $k$-center problem on $T$ can be solved in $O(n\log n)$ time.
\end{theorem}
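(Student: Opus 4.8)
The plan is to prove the theorem by adding up the costs of the four stages of the algorithm---the preprocessing of Section~\ref{sec:preprocess}, Phase~0, Phase~1, and Phase~2---and showing each is $O(n\log n)$. The preprocessing computes all root-distances in $O(n)$ time and the ranks by a single application of Lemma~\ref{lem:arrangement}, for $O(n\log n)$ total. Phase~0 is $O(n(\log\log n)^3)$ by Lemma~\ref{lem:90}. For Phase~1, the one arrangement computation (using \ftest0) costs $O(n\log n)$ by Lemma~\ref{lem:arrangement}, and building the per-substem data structures (Lemma~\ref{lem:130} together with the cleanup procedure) over all $O(n/r)$ substems costs $O(n(\log\log n)^2)$, since each substem has length $m\le r$ and $r=\log^2 n$. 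Thus the only remaining task is to bound Phase~2 by $O(n\log n)$.

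For Phase~2 I would first count the iterations of the while loop. Each iteration processes \emph{all} current leaf-stems and, via the post-processing of Section~\ref{sec:phase0}, replaces every one of them by a single thorn or twig on its top vertex, so all current leaf-stems are removed in that iteration. As in the path-partition analysis of~\cite{ref:FredericksonOp91,ref:FredericksonPa91}, every internal node of the contraction (path) tree has at least two children---a branch vertex of $T'$ has degree $\ge 3$ and hence $\ge 2$ child paths---so peeling all leaf-stems at most halves the number of leaf-stems. Starting from at most $2n/r$ leaves this gives $O(\log(n/r))=O(\log n)$ iterations.

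Next I would bound the per-iteration cost. Let $n_j$ be the total backbone length of the leaf-stems processed in iteration $j$, so the line set $L$ built by the first algorithm of Section~\ref{sec:first} has $|L|=O(n_j)$. Applying Lemma~\ref{lem:arrangement} with the feasibility test \ftestnew, whose cost is $\tau=O(\tfrac{n}{r}\log^3 r)$, computing $v_1(L)$ and $v_2(L)$ takes $O((n_j+\tau)\log n_j)$ time, i.e.\ $O(n_j\log n)$ exclusive of feasibility tests plus $O(\log n)$ feasibility tests; forming the lines and post-processing each leaf-stem add $O(n_j)$. The key claim is that $\sum_j n_j=O(n)$: each maximal degree-two path of the shrinking tree $T'$ becomes a leaf-stem in exactly one iteration, so its interior and leaf vertices are counted once and then deleted, while a branch vertex is charged only as the top vertex of the $O(\deg)$ stems incident to it, and $\sum_v \deg(v)=O(n)$. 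Hence the arrangement cost exclusive of tests totals $\sum_j O(n_j\log n)=O(n\log n)$, and post-processing totals $O(n)$. The feasibility tests number $O(\log n)$ per iteration over $O(\log n)$ iterations, i.e.\ $O(\log^2 n)$ tests, costing $O(\log^2 n\cdot\tfrac{n}{r}\log^3 r)=O(n(\log\log n)^3)$ since $r=\log^2 n$. Therefore Phase~2 is $O(n\log n)$, which dominates and yields the overall $O(n\log n)$ bound; correctness of each post-processing replacement follows from Lemmas~\ref{lem:60},~\ref{lem:70}, and~\ref{lem:80}, and the final single stem pins down $\lambda^*$ exactly because $\lambda^*$ is always a vertex of the relevant arrangement (Section~\ref{sec:first}). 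Finally, the $k$ centers are recovered by one call to \ftest0\ with $\lambda=\lambda^*$ in $O(n)$ time.

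I expect the main obstacle to be the amortized bound $\sum_j n_j=O(n)$. The subtlety is that a top (branch) vertex is \emph{not} deleted when its leaf-stem is processed: it persists and can reappear as a backbone vertex of a higher leaf-stem in a later iteration, so the naive ``each vertex counted once'' argument fails. The fix I outlined is to charge the repeated counting of a branch vertex to its degree and use $\sum_v\deg(v)=O(n)$; care is also needed to verify that attaching thorns and twigs never enlarges $T'$ and that the merging of paths at a depleted branch vertex is consistent with this charging.
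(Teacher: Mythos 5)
Your proposal is correct and takes essentially the same route as the paper's proof: sum the costs of the preprocessing and Phases 0--1 (each $O(n\log n)$ or better), bound Phase 2 by $O(\log n)$ iterations of the arrangement search of Lemma~\ref{lem:arrangement} with \ftestnew\ (giving $O(n\log n + \tfrac{n}{r}\log^3 r\log^2 n)=O(n\log n)$), and amortize the total backbone length over all iterations to $O(n)$. Your degree-charging argument for $\sum_j n_j=O(n)$ is in fact more careful than the paper, which simply asserts this from the removal of all processed leaf-stems without addressing the persistence of top vertices.
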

\begin{proof}
As discussed before, Phases 0 and 1 run in $O(n\log n)$ time. Below we focus on Phase 2.

First of all, as in~\cite{ref:FredericksonPa91}, the number of iterations of the while loop is $O(\log n)$ because the number of leaf-stems is halved after each iteration. In each iteration, let $n'$ denote the total number of backbone vertices of all leaf-stems in $S$. Hence, $|L|=O(n')$. Thus, the call to Lemma~\ref{lem:arrangement} with \ftestnew\ takes $O((n'+n/r\cdot \log^3 r)\log n')$ time. The total time of the post-processing procedure for all leaf-stems of $S$ is $O(n')$. Since all leaf-stems of $S$ will be removed in the iteration,
the total sum of all such $n'$ is $O(n)$ in Phase 2.
Therefore, the total time of the algorithm in Lemma~\ref{lem:arrangement} in Phase 2 is $O(n\log n + n/r\cdot \log^3 r\log^2 n)$, which is $O(n\log n)$ since $r=\log^2 n$. Also, the overall time for the post-processing procedure in Phase 2 is $O(n)$.
Therefore, the total time of Phase 2 is $O(n\log n)$. This proves the theorem.
\qed
\end{proof}

The pseudocode in Algorithm~\ref{algo:kcenter} summarizes the overall algorithm.

\begin{algorithm}
\caption{The $k$-center algorithm}
\label{algo:kcenter}
\KwIn{A tree $T$ and an integer $k$}
\KwOut{The optimal objective values $\lambda^*$ and $k$ centers in $T$}
\BlankLine
Perform the preprocessing in Section~\ref{sec:preprocess} and compute the ``ranks'' for all vertices of $T$\;
\tcc{Phase 0}
$r\leftarrow \log^2 n$\;
Form a stem-partition of $T$\;
\While{\em there are more than $2n/r$ leaves in $T$}
{
   Let $S$ be the set of all leaf-stems of lengths at most $r$\;
   Form the set $\calM$ of matrices for all leaf-stems of $S$ by Lemma~\ref{lem:matrixform}\;
   Let $n'$ be the total number of all backbone vertices of the leaf-stems of $S$\;
   Call \msearch\ on $\calM$ with stopping count $c=n'/(2r)$, using \ftest0\;
   \For{\em each leaf-stem $P$ of $S$ with no active values}
   {
     Perform the post-processing on $P$, i.e., place centers on $P$, subtract their number from $k$, replace $P$ by a thorn or a twig, and modify the stem-partition of $T$\;
   }
}
\tcc{Phase 1}
For a stem-partition of $T$, and for each stem, partition it into substems of lengths at most $r$\;
Let $S$ be the set of all substems, and form the stem-tree $T_c$\;
Compute the set $L$ of lines for all stems of $S$ in the way discussed in Section~\ref{sec:first}\;
Compute the two vertices $v_1(L)$ and $v_2(L)$ of $\calA(L)$ by Lemma~\ref{lem:arrangement} and \ftest0, and update $\lambda_1$ and $\lambda_2$\;
\For{\em each substem $P$ of $S$}
{
  Compute the data structure for the faster feasibility test \ftestnew\;
}
\tcc{Phase 2}
Form a stem-partition of $T$\;
\While{\em there is more than one leaf-stem in $T$}
{
  Compute the set $L$ of lines for all stems of $S$ in the way discussed in Section~\ref{sec:first}\; \label{ln:lineset10}
Compute the two vertices $v_1(L)$ and $v_2(L)$ of $\calA(L)$ by Lemma~\ref{lem:arrangement} and \ftestnew, and update $\lambda_1$ and $\lambda_2$\;
\label{ln:vertex10}
  \For{\em each leaf-stem of $S$}
  {
      Perform the post-processing on $P$, i.e., place centers on $P$, subtract their number from $k$, replace $P$ by a thorn or a twig, and modify the stem-partition of $T$\;
  }
}
Compute the set $L$ of the lines for the only leaf-stem $T$\; \label{ln:lineset20}
Compute the two vertices $v_1(L)$ and $v_2(L)$ of $\calA(L)$ by Lemma~\ref{lem:arrangement} and \ftestnew, and update $\lambda_1$ and $\lambda_2$\;
\label{ln:vertex20}
$\lambda^*=\lambda_2$\;
Apply \ftest0\ on $\lambda=\lambda^*$ to find $k$ centers in the original tree $T$\;
\end{algorithm}

\section{The Discrete $k$-Center Problem}
\label{sec:discrete}

In this section, we extend our techniques to solve in $O(n\log n)$ time the discrete $k$-center problem on $T$ where centers must be located at the vertices of $T$. In fact, the problem becomes easier due to the following observation.

\begin{observation}\label{obser:discrete}
The optimal objective value $\lambda^*$ is equal to $w(v)\cdot d(v,u)$ for two vertices $u$ and $v$ of $T$ (i.e., a center is placed at $u$ to cover $v$).
\end{observation}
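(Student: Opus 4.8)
The plan is to argue directly from the definition of $\lambda^*$ together with the discreteness constraint, avoiding entirely the perturbation/tightness argument used for the non-discrete case (cf.\ the observation opening Section~\ref{sec:first}). First I would fix an optimal solution, i.e., a set $Q\subseteq V(T)$ of at most $k$ vertices realizing $\lambda^*=\max_{v\in V(T)}\min_{q\in Q}\{w(v)\cdot d(v,q)\}$. Because $V(T)$ is a finite set, the outer maximum is attained at some concrete vertex $v\in V(T)$; fix such a $v$.

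Next I would examine the inner minimization for this particular $v$. Since $Q$ is finite, the value $\min_{q\in Q}\{w(v)\cdot d(v,q)\}$ is attained at some center $u\in Q$. The crucial point is that in the discrete setting every center lies at a vertex of $T$, so $u\in V(T)$. By the choice of $v$ and $u$ we then have $\lambda^*=w(v)\cdot d(v,u)$ with both $u$ and $v$ vertices of $T$, which is exactly the claimed statement, and the pair $(u,v)$ carries the stated interpretation that the center placed at $u$ is the one covering $v$ at the optimal value.

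I expect no genuine obstacle here: the entire content is the observation that both the demand set $V(T)$ and the admissible center set $V(T)$ are finite, so $\lambda^*$ is forced to equal one of the finitely many quantities $w(v)\cdot d(v,u)$ ranging over ordered pairs of vertices $(u,v)$. The only mild point worth a single remark is degeneracy: if several centers of $Q$ are equidistant from $v$, or if $w(v)=0$, the argument is unaffected, since we only need one attaining pair $(u,v)$ to exist. This finiteness is precisely what makes the discrete case simpler than the non-discrete one, where a center may lie in the interior of an edge and $\lambda^*$ must instead be characterized as the $y$-coordinate of an arrangement vertex determined by two lines.
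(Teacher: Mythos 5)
Your proof is correct, and it matches the paper's (implicit) reasoning: the paper states this observation without proof, treating it as immediate, and your argument---that both the demand set and the admissible center set are the finite set $V(T)$, so fixing a maximizing vertex $v$ and then a minimizing center $u\in Q\subseteq V(T)$ yields $\lambda^*=w(v)\cdot d(v,u)$---is exactly the standard justification being taken for granted. Your remark on degeneracy is fine but unnecessary, since only one attaining pair is needed.
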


The previous $O(n\log^2 n)$ time algorithm in~\cite{ref:MegiddoAn81}  relies on this observation. Megiddo et al.~\cite{ref:MegiddoAn81} first computed in $O(n\log^2 n)$ time a collection of $O(n\log n)$ sorted subsets that contain the intervertex distances $d(u,v)$ for all pairs $(u,v)$ of vertices of $T$. By multiplying the weight $w(v)$ by the elements in the subsets corresponding to each vertex $v$, $\lambda^*$ is contained in these new sorted subsets. Then, $\lambda^*$ can be computed in $O(n\log^2 n)$ time by searching these sorted subsets, e.g., using \msearch. Frederickson and Johnson~\cite{ref:FredericksonFi83} later proposed an $O(n\log n)$-time algorithm that computes a succinct representation of all intervertex distances of $T$ by using sorted Cartesian matrices. With \msearch, their algorithm solves the unweighted case of the problem in $O(n\log n)$ time. However, their techniques may not be generalized to solving the weighted case because once we multiply the vertex weights by the elements of those Cartesian matrices, the new matrices are not sorted any more (i.e., we cannot guarantee that both columns and rows are sorted because different rows or columns are multiplied by weights of different vertices).

Our algorithm uses similar techniques as those for the previous non-direcrete $k$-center problem. In the following we briefly discuss it and mainly focus on pointing out the differences.

First of all, we need to modify the feasibility test algorithm \ftest0.
The only difference is that when $sup(u)\leq dem(u)$ and $dem(u)<d(u,v)$, instead of placing a center in the interior of the edge $e(u,v)$, we place a center at $u$ and update $sup(v)=\min\{sup(v),d(u,v)\}$ (i.e., use this to replace Line~\ref{ln:ftest0} in the pseudocode of Algorithm~\ref{algo:ftest0}). The running time is still $O(n)$. We use \dftest0\ to denote the new algorithm.


\subsection{The Algorithm for Stems}
\label{sec:dstem}


The stem is defined slightly differently than before. Suppose we have a range $(\lambda_1,\lambda_2]$ that contains $\lambda^*$.
Each backbone vertex $v_i$ of a stem $P$ still has at most one thorn and one twig. The thorn $e(u_i,v_i)$ is defined in the same way as before. However, a twig now consists of two edges $e(v_i,b_i)$ and $e(b_i,w_i)$ such that $w(w_i)\cdot d(w_i,v_i)\geq \lambda_2$ and $w(w_i)\cdot d(w_i,b_i)\leq \lambda_1$, which means that we have to place a center at $b_i$ to cover $w_i$ under any $\lambda\in (\lambda_1,\lambda_2)$. We still call $w_i$ a twig vertex, and following the terminology in~\cite{ref:FredericksonPa91}, we call $b_i$ a {\em bud}.

Next we give an algorithm to solve the $k$-center problem on a stem $P$ of $m$ backbone vertices. The algorithm is similar to that in Section~\ref{sec:second} and uses \msearch, but we use a different way to form matrices based on Observation~\ref{obser:discrete}. (Note that we will not need a similar algorithm as that in Section~\ref{sec:first}.) Let $\lambda^*$ temporarily refer to the optimal objective value for the $k$-center problem on $P$ in this subsection, and we assume $\lambda^*\in (\lambda_1,\lambda_2]$.

Let $v_1,v_2,\ldots,v_m$ be the backbone vertices of $P$. We again assume that all backbone vertices are in the $x$-axis such that $v_1$ is at the origin and $v_i$ has $x$-coordinate $d(v_1,v_i)$. As in Section~\ref{sec:first}, for each thorn vertex $u_i$, we define two points $u_i^l$ and $u_i^r$ on the $x$-axis (whose weights are equal to $w(u_i)$), and we do the same for each bud and each twig vertex. Let $\calP$ be the set of all vertices on the $x$-axis. Hence, $|\calP|\leq 7m$.

We sort all vertices of $\calP$ from left to right, and let the sorted list be $z_1,z_2,\ldots,z_t$ for $t\leq 7m$. For any $z_i$, we use $x(z_i)$ to denote its $x$-coordinate.
For each vertex $z_i$, we define two sorted arrays $A_i^l$ and $A_i^r$ of lengths at most $t$ as follows. For each $j\in [1,t-i+1]$, define $A_i^r[j]=w(z_i)\cdot (x(z_{t+1-j})-x(z_i))$. For each $j\in [1,i]$, define $A_i^l[j]=w(z_i)\cdot (x(z_i)-x(z_j))$. Both arrays are sorted.

Let $\calM$ denote the set of all $O(m)$ sorted arrays defined above. By Observation~\ref{obser:discrete}, $\lambda^*$ must be an element of an array in $\calM$. With $O(m\log m)$ time preprocessing, each array element of $\calM$ can be computed in $O(1)$ time. By applying \msearch\ on $\calM$ with stopping count $c=0$ and using \dftest0, we can compute $\lambda^*$ in $O(m\log m)$ time (i.e., \msearch\ produces $O(\log m)$ values for feasibility tests in $O(m\log m)$ time).

\subsection{Solving the Problem in $T$}

In the sequel, we solve the discrete $k$-center problem in $T$. First, we do the same preprocessing as in Section~\ref{sec:preprocess}. Then, we have three phases as before. Let $r=\log^2 n$.

\subsubsection{Phase 0}

We assume that $T$ has more than $2n/r$ leaves since otherwise we could skip this phase. Phase 0 is the same as before except the following changes. First, we use \dftest0\ to replace \ftest0. Second, we form the matrix set $\calM$ in the way discussed in Section~\ref{sec:dstem}. Third, for each leaf-stem $P$ without active values, we modify the post-processing procedure as follows.

Let $z$ be the top vertex of $P$. We run \dftest0\ on $P$ with $z$ as the root and $\lambda=\lambda'$ as any value in $(\lambda_1,\lambda_2)$. As before, after $z$ is processed, depending on whether $sup(z)\leq dem(z)$, there are two main cases.

\paragraph{The case $sup(z)\leq dem(z)$.}
If $sup(z)\leq dem(z)$, we define $q$ and $V(q)$ in the same way as before but $V(q)$ should not include bud vertices. A difference is that $q$ now is a vertex of $P$. Note that $q\neq z$ because in this case we do not need to place a center at $z$.

%

Let $w$ be the vertex that makes $q$ as a center. Specifically, refer to the pseudocode in Algorithm~\ref{algo:ftest0}, where we place a center $q$ at vertex $u$ at Line 10. Let $w$ be the vertex that determines the value $dem(u)$, i.e., $dem(u)=\lambda'/w(w)-d(w,u)$. Note that $w$ must be a descendent of $u$ ($=q$).
In this case, we replace $P$ by a twig consisting of two edges $e(z,q)$ and $e(q,w)$ with lengths equal to $d(z,q)$ and $d(q,w)$, respectively.
To find the vertex $w$, one way is to modify \dftest0\ so that the vertex that determines the value $dem(v)$ for each vertex $v$ of $T$ is also maintained.
Another way is that $w$ is in fact the vertex of $V(q)$ with the largest rank, which is proved in the following lemma (whose proof is similar to that of Lemma~\ref{lem:60}). Note that this is actually consistent with our  way for creating twigs in the previous non-discrete case.

\begin{lemma}\label{lem:160}
$w$ is the vertex of $V(q)$ with the largest rank.
\end{lemma}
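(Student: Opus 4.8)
The plan is to connect the vertex $w$ that determines $dem(u)$ to the rank order from Section~\ref{sec:preprocess}, mirroring the proof of Lemma~\ref{lem:60}. Recall that each vertex $v$ defines the line $l(v)$ through $(d(\gamma,v),0)$ with slope $-w(v)$, so the intersection of $l(v)$ with the horizontal line $y=\lambda$ has $x$-coordinate $d(\gamma,v)-\lambda/w(v)$. Since no vertex of the arrangement has $y$-coordinate in $(\lambda_1,\lambda_2)$, for every $\lambda\in(\lambda_1,\lambda_2)$ these intersections induce the same left-to-right order as the ranks; consequently a larger value of $d(\gamma,v)-\lambda/w(v)$ is equivalent to a larger rank.

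First I would unwind the meaning of $w$. By the definition of $dem$ in \dftest0, $dem(u)$ is the minimum over all currently uncovered vertices $x$ of $T(u)$ of the quantity $\lambda'/w(x)-d(x,u)$, and $w$ is the minimizer, i.e.\ $dem(u)=\lambda'/w(w)-d(w,u)$. When the center $q$ is placed at $u$ it covers precisely those uncovered vertices of $T(u)$, so the set over which this minimum is taken is exactly $V(q)$; moreover every vertex of $V(q)$ is a descendant of $u$, since the center at $u$ only serves uncovered vertices of $T(u)$.

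Next I would use the ancestor relation to rewrite the minimized quantity. Because each $x\in V(q)$ is a descendant of $u$, we have $d(x,u)=d(\gamma,x)-d(\gamma,u)$, and therefore
\[
\lambda'/w(x)-d(x,u)=d(\gamma,u)-\bigl(d(\gamma,x)-\lambda'/w(x)\bigr).
\]
Since $d(\gamma,u)$ is constant over all $x\in V(q)$, minimizing the left-hand side over $V(q)$ is the same as maximizing $d(\gamma,x)-\lambda'/w(x)$ over $V(q)$. By the rank characterization above, the maximizer is exactly the vertex of $V(q)$ of largest rank, so $w$ is that vertex, as claimed.

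The only delicate point, and the main thing to pin down, is the identification of the set governing $dem(u)$ with $V(q)$ together with the fact that all its vertices are descendants of $u$; once this is established from the bottom-up propagation of $dem$ in \dftest0, the remainder is the same line-arrangement/rank argument as in Lemma~\ref{lem:60}. One should also remark that this is consistent with the non-discrete twig construction, where the largest-rank vertex was likewise the one retained.
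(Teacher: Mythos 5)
Your reduction of the lemma to maximizing $d(\gamma,x)-\lambda'/w(x)$ over $V(q)$ rests on the assertion that the minimum defining $dem(u)$ is taken over exactly $V(q)$ and that every vertex of $V(q)$ is a descendant of $u$. That assertion is false, and it is where the proof breaks. In the partition induced by \dftest0, $V(q)$ is the whole connected subtree of vertices covered by the center $q$, and in this post-processing context $q$ covers the top vertex $z$ (we are in the case $sup(z)\leq dem(z)$). So $V(q)$ contains, besides the vertices of $T(u)$ that were uncovered when $q$ was placed, also backbone vertices on the path from $q$ up to $z$ and thorn vertices hanging off them --- vertices covered \emph{later}, via the upward propagation of $sup$. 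These are precisely the set $V_2(q)$ in the paper's proof (and in the analogous Lemma~\ref{lem:60}). For such a vertex $x$ the identity $d(x,u)=d(\gamma,x)-d(\gamma,u)$ fails ($x$ is an ancestor of $u$ or hangs off one), and more importantly $x$ does not participate in the minimization defining $dem(u)$ at all, so your argument says nothing about $rank(w)$ versus $rank(x)$. The lemma, however, claims $w$ has the largest rank in all of $V(q)$, and this is exactly what the subsequent twig construction needs.

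What you prove correctly is the restricted statement that $w$ has the largest rank among the vertices of $V(q)$ lying in the subtree rooted at $q$ (the paper's $V_1(q)$); there your minimizer-equals-largest-rank equivalence is in fact a cleaner, more direct version of the paper's contradiction argument for that case. But the paper must then spend separate arguments on $V_2(q)$: if $v$ is a backbone ancestor of $q$, it uses that $w$ determines $q$ to get $w(w)\cdot d(w,v)<\lambda'$ and deduces $rank(v)<rank(w)$ from the line picture; if $v$ is a thorn vertex, it derives a contradiction by showing $q$ could not cover $v$ if $rank(v)>rank(w)$; and it must additionally treat the case that $q$ is a \emph{bud} rather than a backbone vertex (then $w$ is the twig vertex and the above/below-$v'$ case analysis applies), a case your write-up does not address either. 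To repair your proof you would need to keep your argument for $V_1(q)$ and supply these missing cases for $V_2(q)$ and for buds; the descendant-only rewriting cannot be made to cover them.
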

\begin{proof}
Let $v$ be any vertex of $V(q)\setminus\{w\}$. Our goal is to show that $rank(w)>rank(v)$.
Note that $q$ is either a backbone vertex or a bud.

We first discuss the case where $q$ is a backbone vertex.
We define $V_1(q)$ to be the set of vertices of $V(q)$ in the subtree rooted at $q$ and let  $V_2(q)=V(q)\setminus V_1(q)$. Depending on whether $v$ is in $V_1(q)$ or $V_2(q)$, there are two subcases.

\begin{figure}[t]
\begin{minipage}[t]{\linewidth}
\begin{center}
\includegraphics[totalheight=1.3in]{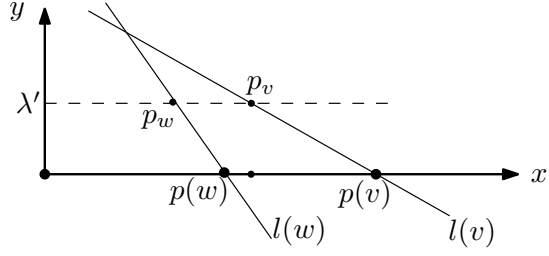}
\caption{\footnotesize Illustrating the proof of Lemma~\ref{lem:160}. Note that $p(v)$ and $p(w)$ are the points defined respectively by $v$ and $w$ in the preprocessing of Section~\ref{sec:preprocess}.}
\label{fig:rank25}
\end{center}
\end{minipage}
\vspace*{-0.15in}
\end{figure}

If $v\in V_1(q)$, assume to the contrary that $rank(w)<rank(v)$. Recall that $v$ defines a line $l(v)$ and $w$ defines a line $l(w)$ in $\bbR^2$ in the preprocessing (see Section~\ref{sec:preprocess}). Refer to Fig.~\ref{fig:rank25}. Let $p_v$ and $p_w$ denote the intersections of the horizontal line $y=\lambda'$ with $l(v)$ and $l(w)$, respectively. Since $\lambda'\in(\lambda_1,\lambda_2)$, by the definition of ranks, $x(p_w)<x(p_v)$. Note that $x(p_v)$ corresponds to a point $q_v$ in $\pi(v,\gamma)$ in the sense that $x(p_v)=d(\gamma,q_v)$, and $q_v$ is actually the point on $\pi(v,\gamma)$ closest to $\gamma$ that can cover $v$. Similarly,  $x(p_w)$ corresponds to a point $q_w$ in $\pi(w,\gamma)$. Since $x(p_w)<x(p_v)$, $d(\gamma,q_w)<d(\gamma,q_v)$. One can verify that this contradicts with that $w$ is the vertex that makes $q$ as a center (i.e., $w$ determines the value $dem(u)$), because both $v$ and $w$ are descendants of $q$.

If $v\in V_2(q)$, first note that $v$ cannot be a twig vertex since otherwise $q$ would need to be a bud in order to cover $v$. Depending on whether $v$ is a backbone vertex or a thorn vertex, there are two subcases.

\begin{figure}[h]
\begin{minipage}[t]{0.49\linewidth}
\begin{center}
\includegraphics[totalheight=1.3in]{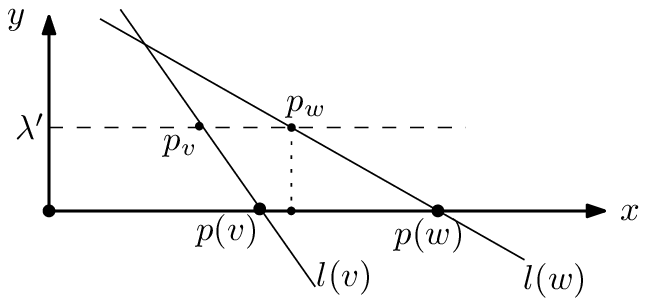}
\caption{\footnotesize Illustrating the proof of Lemma~\ref{lem:160}. }
\label{fig:rank3}
\end{center}
\end{minipage}
\hspace{0.02in}
\begin{minipage}[t]{0.49\linewidth}
\begin{center}
\includegraphics[totalheight=1.3in]{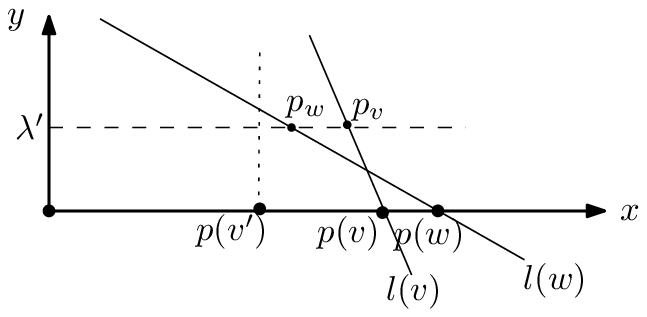}
\caption{\footnotesize Illustrating the proof of Lemma~\ref{lem:160}.}
\label{fig:rank4}
\end{center}
\end{minipage}
\vspace*{-0.15in}
\end{figure}

\begin{enumerate}
\item
If $v$ is a backbone vertex, then $v$ is an ancestor of $q$.
Let $p_v$ and $p_w$ be the intersections of the horizontal line $y=\lambda'$ with $l(v)$ and $l_w$, respectively (e.g., see Fig.~\ref{fig:rank3}).
Since $w$ determines the center $q$ and $v$ is an ancestor of $q$, according to \dftest0, it holds that $w(w)\cdot d(w,v)<\lambda'$.
Since $q$ is an ancestor of $w$, $v$ is also an ancestor of $w$.
Therefore, the point $p(v)$ (i.e., the intersection of $l(v)$ with the $x$-axis) must be to the left of the vertical line through $p_w$. Since the slope of $l(v)$ is not positive, $x(p_v)<x(p_w)$. Because $\lambda'\in (\lambda_1,\lambda_2)$, we obtain that $rank(v)<rank(w)$.

\item
If $v$ is a thorn vertex, assume to the contrary that $rank(w)<rank(v)$. Let $v'$ be the backbone vertex that connects $v$. Since $v\in V_2(q)$ and $q$ is a backbone vertex, $v'$ must be an ancestor of $q$. Because $rank(w)<rank(v)$, in the following we will prove $w(v)\cdot d(v,q)>\lambda'$, which implies that $q$ cannot cover $v$ and thus incurs contradiction.

By our preprocessing in Section~\ref{sec:preprocess}, the vertex $v'$ defines a point $p(v')$ in the $x$-axis of $\bbR^2$ (e.g., see Fig.~\ref{fig:rank4}). With a little abuse of notation, we use $x(v')$ to denote the $x$-coordinate of $p(v')$.  We define $p_v$ and $p_w$ in the same way as before.

Since $w$ determines $q$ and $v'$ is an ancestor of $q$,
according to \dftest0, it holds that $w(w)\cdot d(w,v')<\lambda'$. This implies that $x(v')<x(p_w)$. Since $rank(w)<rank(v)$ and $\lambda'\in (\lambda_1,\lambda_2)$, $x(p_w)<x(p_v)$. Thus, $x(v')<x(p_v)$ and the $y$-coordinate of the intersection of the vertical line through $p(v')$ and $l(v)$ is larger than $\lambda'$. Note that the above $y$-coordinate is equal to $w(v)\cdot d(v,v')$. Therefore, we obtain $w(v)\cdot d(v,v')>\lambda'$. Since the path $\pi(q,v)$ contains the thorn $e(v,v')$, we have $d(v,q)\geq d(v,v')$. Thus, it holds that $w(v)\cdot d(v,q)>\lambda'$.
\end{enumerate}

The above proves that $rank(w)>rank(v)$ for the case where $q$ is a backbone vertex.

Next we discuss the case where $q$ is a bud. Since $w$ is a descendent of $q$, $w$ must be a twig vertex on the same twig as $q$. Let $v'$ be the backbone vertex that connects $q$. We say that $v$ is {\em above} $v'$ if $\pi(z,v')$ either contains $v$ (when $v$ is a backbone vertex) or contains the backbone vertex that connects $v$ (when $v$ is a thorn vertex); otherwise, $v$ is {\em below} $v'$.
If $v$ is above $v'$, then the analysis is similar to the above subcase $v\in V_2(q)$. We omit the details. In the following, we analyze the case where $v$ is below $v'$. Although $v$ may be either a backbone vertex or a thorn vertex, we prove $rank(w)>rank(v)$ in a uniform way.

Assume to the contrary that $rank(w)<rank(v)$. Again, refer to Fig.~\ref{fig:rank4}. We define the points in the figure in the same way as before except that $v'$ is now the backbone vertex that connects $q$. Our goal is to show that $w(v)\cdot d(v,q)>\lambda'$, which incurs contradiction since $q$ covers $v$. To this end, since $d(v,v')\leq d(v,q)$, it is sufficient to show that $w(v)\cdot d(v,v')>\lambda'$.
The proof is similar to the above subcase where $v$ is a thorn vertex, and we briefly discuss it below.


Since $w$ determines $q$ and $v'$ is an ancestor of $q$,
it holds that $w(w)\cdot d(w,v')<\lambda'$. This implies that $x(v')<x(p_w)$. Since $rank(w)<rank(v)$ and $\lambda'\in (\lambda_1,\lambda_2)$, $x(p_w)<x(p_v)$. Thus, $x(v')<x(p_v)$ and the $y$-coordinate of the intersection of the vertical line through $p(v')$ and $l(v)$ is larger than $\lambda'$. Again, the above $y$-coordinate is equal to $w(v)\cdot d(v,v')$. Therefore, $w(v)\cdot d(v,v')>\lambda'$.

This proves the lemma.
\qed
\end{proof}


The above replaces $P$ by attaching a twig $e(z,q)\cup e(q,w)$ to $z$.
In addition, if $z$ already has another twig with bud $q'$, then we discard the new twig if $d(q,z)\geq d(q',z)$ and discard the old twig otherwise. This guarantees that $z$ has at most one twig.


\paragraph{The case $sup(z)>dem(z)$.}
If $sup(z)> dem(z)$, then we define $V$ in the same way as before but excluding the buds. Let $u$ be the vertex of $V$ with the largest rank. As before in Lemma~\ref{lem:70}, $u$ dominates all other vertices of $V$ and thus we replace $P$ by a thorn $e(u,z)$ whose length is equal to $d(u,z)$.
In addition, if $z$ already has another thorn $e(z,u')$, then as before (and for the same reason), we discard one of $u$ and $u'$ whose rank is smaller.

This finishes the post-processing procedure on $P$. The running time is still $O(m)$.

By similar analysis as in Lemma~\ref{lem:90}, Phase 0 still runs in $O(n\log n)$ time and we omit the details.

\subsubsection{Phase 1}

First of all, we still form a stem-tree $T_c$ as before and each node represents a substem of length at most $r$. Instead of using the line arrangement searching technique, we now resort to \msearch. Let $S$ be the set of all substems. Let $\calM$ be the set of matrices of all these substems formed in the way described in Section~\ref{sec:dstem}. We apply \msearch\ on $\calM$ with stopping count $c=0$ and using \dftest0. Since $\calM$ has $O(n)$ arrays of lengths $O(r)$, \msearch\ will produce $O(\log n)$ values for feasibility tests in $O(n\log r)$ time. The total feasibility test time is $O(n\log n)$. Since $c=0$, after \msearch\ stops, we have an updated range $(\lambda_1,\lambda_2)$ and no matrix element of $\calM$ is active.  Let $\lambda$ be an arbitrary value in $(\lambda_1,\lambda_2)$.

We will compute a data structure for each stem $P$ of $S$ so that the feasibility test can be made in sublinear time. We will show that after $O(n\log n)$ time preprocessing, each feasibility test can be done in $O(n/r\log^3 r)$ time.
Let $P$ be a substem with backbone vertices $v_1,v_2,\ldots,v_m$, with $v_m$ as the top vertex.
The preprocessing algorithm works in a similar way as before.

\paragraph{The cleanup procedure.}
Again, we first perform a {\em cleanup procedure} to remove all vertices of $P$ that can be covered by the centers at the buds of the twigs.
Note that all buds and twig vertices can be automatically covered by the centers at buds. So we only need to find those  backbone and thorn vertices that can be covered by buds.
This is easier than before because the locations of the centers are now fixed at buds.

We use the following algorithm to find all backbone and thorn vertices that can be covered by buds to their left sides. Let $i'$ be the smallest index such that $v_{i'}$ has a bud $b_{i'}$. The algorithm maintains an index $t$. Initially, $t=i'$. For each $i$ incrementally from $i'$ to $m$, we do the following. If $v_i$ has a bud $b_i$, then we reset $t$ to $i$ if $d(b_i,v_i)\leq d(b_t,v_i)$, because for any $j\geq i$ such that $v_j$ (resp., $u_j$) is covered by $b_t$, it is also covered by $b_i$. Next, if $w(v_i)\cdot d(v_i,b_t)\leq \lambda$, then we mark $v_i$ as ``covered''. If $u_i$ exits and $w(u_i)\cdot d(u_i,b_t)\leq \lambda$, then we mark $u_i$ as ``covered''. Note that although $u_i$ is not an ancestor or a descendent of $b_t$,  we can still compute $d(u_i,b_t)$ in $O(1)$ time because $d(u_i,b_t)=d(u_i,v_i)+d(v_i,b_t)$, and the latter two distances can be computed in $O(1)$ time due to the preprocessing in Section~\ref{sec:preprocess}.

The above algorithm runs in $O(m)$ time. In a symmetric way by scanning $P$ from right to left, we can also mark all backbone and thorn vertices that are covered by buds to their right sides. This finishes the cleanup procedure. Again, we define $V$ in the same way as before, and let $v_1',v_2',\ldots,v_t'$ be the backbone vertices in $V$. Also, define $V[i,j]$ in the same way as before.

\paragraph{Computing the data structure.}
In the sequel, we compute a data structure for $P$.

As before, we maintain an index $a$ of a twig such that $d(v_1,b_a)\leq d(v_1,b_i)$ for any other twig index $i$, and maintain an index $b$ of a twig such that $d(v_m,b_b)\leq d(v_m,b_i)$ for any other twig index $i$. Both $a$ and $b$ can be found in $O(m)$ time.

For each $i\in [1,t]$, we will also compute an integer $ncen(i)$ and a vertex $v(i)$ of $V$, whose definitions are similar as before. In addition, we maintain another vertex $q(i)$. The details are given below.
Define $j_i$ similarly as before, i.e., it is the smallest index in $[i,t]$ such that it is not possible to cover all vertices in $V[i,j_i]$ by a center located at a vertex of $P$ under $\lambda$. Again, if such an index does not exist in $[i,t]$, then let $j_i=t+1$.

If $j_i=t+1$, then we define $ncen(i)=0$, and define $v(i)$ as the vertex in $V[i,t]$ with the largest rank. We should point out our definition on $v(i)$ is consistent with before, which was based on $rank'(l^+(v))$ for $v\in V$, because for any two vertex $v$ and $v'$ in $V$, $rank(v)>rank(v')$ if and only if $rank'(l^+(v))<rank'(l^+(v'))$.
With the same analysis as before, $v(i)$ dominates all other vertices of $V[i,t]$.
We define $q(i)$ as follows.
If $w(v(i))\cdot d(v(i),v_m)<\lambda$, then $q(i)$ is undefined. Otherwise, $q(i)$ is the largest index $j\in [i,t]$ such that $w(v(i))\cdot d(v(i),v_j)\leq \lambda$. Therefore, $q(i)$ refers to the index of the backbone vertex closest to $v_m$ that can cover $v(i)$, and by the definition of $v(i)$, $q(i)$ can also cover all other vertices of $V[i,t]$ (the proof is similar to Lemma~\ref{lem:120} and we omit it).

If $j_i<t+1$, then define $ncen(i)=ncen(j_i)+1$, $v(i)=v(j_i)$, and $q(i)=q(j_i)$.

To compute $ncen(i)$, $v(i)$, and $q(i)$ for all $i\in [1,t]$, observe that once $v(i)$ is known, $q(i)$ can be computed in additional $O(\log m)$ time by binary search on the backbone vertices of $P$. Therefore, we will focus on computing $ncen(i)$ and $v(i)$. We use a similar algorithm as that in Lemma~\ref{lem:130}. To this end, we need to solve the following subproblem: Given any two indices $i\leq j$ in $[1,t]$, we want to compute the optimal objective value, denoted by $\alpha'(i,j)$, of the discrete one-center problem on the vertices of $V[i,j]$. This can be done in $O(\log^2 m)$ time by using the 2D sublist LP query data structure, as shown in the following lemma.

\begin{lemma}\label{lem:160}
With $O(m\log m)$ time preprocessing, we can compute $\alpha'(i,j)$ in $O(\log^2 m)$ time for any two indices $i\leq j$ in $[1,t]$.
\end{lemma}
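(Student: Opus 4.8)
The plan is to mirror the construction used for the continuous analog in Lemma~\ref{lem:130}, modifying only the place where the continuity of the center location was exploited. First I would set up the same geometry as before: each backbone vertex $v'_l$ and thorn vertex $u'_l$ of $V$ contributes (up to) two lines $l^+(\cdot)$ and $l^-(\cdot)$, hence two upper half-planes, and I order all these half-planes by the index of their defining vertex in $V$ so that the half-planes arising from $V[i,j]$ occupy a contiguous sublist. As preprocessing I would build the 2D sublist LP query structure of Lemma~\ref{lem:sublist} on these $O(m)$ half-planes in $O(m\log m)$ time, and additionally store the sorted backbone abscissae $x(v_1)<\cdots<x(v_m)$.

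The key structural fact is that an optimal discrete center for covering $V[i,j]$ can always be taken at a backbone vertex of $P$. Indeed, if a center $c$ were placed at a thorn, twig, or bud vertex hanging off a backbone vertex $v_k$, then every $v\in V[i,j]$ (a backbone or thorn vertex) has its path to $c$ passing through $v_k$, so $d(c,v)=d(c,v_k)+d(v_k,v)\ge d(v_k,v)$; relocating the center to $v_k$ does not increase the weighted distance to any vertex of $V[i,j]$. Thus it suffices to minimize over backbone positions. For a center at abscissa $x$ on the backbone, the weighted distance to a vertex $v$ equals the upper envelope of the two lines of $v$ at $x$, so the objective $f(x)=\max_{v\in V[i,j]}(\cdots)$ is exactly the lower boundary of the common intersection of the upper half-planes of $V[i,j]$. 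This $f$ is convex, its unconstrained minimum is the continuous value $\alpha(i,j)$ attained at some abscissa $x^*$, and $f$ is nonincreasing on $(-\infty,x^*]$ and nondecreasing on $[x^*,\infty)$.

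Given these facts the query is immediate. First I would run one 2D sublist LP query on the indices corresponding to $V[i,j]$ to obtain the lowest point of the common intersection and read off its abscissa $x^*$, at cost $O(\log^2 m)$. By convexity, the best backbone position is one of the two vertices flanking $x^*$, so I would binary-search the stored backbone abscissae to find $v_L$ (largest with $x(v_L)\le x^*$) and $v_R$ (smallest with $x(v_R)\ge x^*$), treating the boundary cases in which only one of them exists. Finally I would evaluate $f(x(v_L))$ and $f(x(v_R))$ by two line-constrained 2D sublist LP queries, each returning the lowest point on a fixed vertical line in the common intersection in $O(\log m)$ time, and set $\alpha'(i,j)=\min\{f(x(v_L)),f(x(v_R))\}$. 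The running time is dominated by the single 2D sublist LP query, namely $O(\log^2 m)$, matching the claim; consistency with Observation~\ref{obser:discrete} is automatic, since both evaluated quantities have the form $w(v)\cdot d(v,v_L)$ or $w(v)\cdot d(v,v_R)$ for the maximizing $v$.

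The step I expect to require the most care is the reduction to backbone positions together with the convexity argument: I must verify that after discarding the off-backbone candidates the discrete minimum is genuinely realized at one of the two vertices flanking the continuous optimum, including the degenerate situations where $x^*$ coincides with a backbone vertex or lies entirely to one side of the backbone. Everything else is a direct reuse of the 2D sublist LP machinery already employed in Lemma~\ref{lem:130}.
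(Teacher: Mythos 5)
Your proof follows the paper's argument essentially step for step: build the 2D sublist LP structure of Lemma~\ref{lem:sublist} on the half-planes of $V$ in $O(m\log m)$ time; for a query, find the lowest point of the common intersection of the half-planes of $V[i,j]$ with one 2D sublist LP query in $O(\log^2 m)$ time; use convexity of the boundary of that intersection to conclude that the best discrete location is one of the two backbone vertices flanking the continuous optimum's abscissa; locate these by binary search and evaluate each with a line-constrained query in $O(\log m)$ time. That is exactly the paper's proof, including the degenerate case where the continuous optimum already sits at a vertex.

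The one place where you go beyond the paper is the explicit reduction to backbone centers, and there your argument is stated too strongly. Relocating a center from a thorn vertex $u_k$ to its attachment $v_k$ decreases the distance to every \emph{other} vertex, but if $u_k\in V[i,j]$ the weighted distance to $u_k$ itself increases from $0$ to $w(u_k)\cdot d(u_k,v_k)$, so the relocation is not monotone and the backbone-restricted optimum can strictly exceed the true discrete optimum over all vertices of $P$ (take $V[i,j]=\{u_k,v_k\}$ with $w(u_k)>w(v_k)$: then $u_k$ is the unique optimal center). The issue is confined to thorn vertices, since twig vertices and buds never belong to $V$ after the cleanup. The paper silently restricts candidate centers to backbone vertices; the clean justification is that the only term that can increase under the relocation is $w(u_k)\cdot d(u_k,v_k)\le\lambda_1$ by the definition of a thorn, while $\alpha'(i,j)$ is only ever compared against values $\lambda\in(\lambda_1,\lambda_2)$, so restricting to backbone centers never changes the outcome of any comparison the algorithm makes (equivalently, a thorn center covering $V[i,j]$ under such a $\lambda$ implies its backbone neighbor does too). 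With that patch, or by simply adopting the backbone-restricted definition of $\alpha'(i,j)$ as the paper implicitly does, your proof is complete.
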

\begin{proof}
As preprocessing, in $O(t\log t)$ time we build the 2D sublist LP query data structure on the upper half-planes defined by the vertices of $V$ in the same way as before.

Given any indices $i\leq j$ in $[1,t]$, by a 2D sublist LP query, we compute in $O(\log^2 t)$ time the lowest point $p$ in the common intersection $C$ of the upper half-planes defined by the vertices of $V[i,j]$. Let $q$ be the point on the backbone of $P$ corresponding to the $x$-coordinate of $p$ (i.e., $d(v_1,q)=x(p)$). Note that $q$ is essentially the optimal center for the non-discrete one-center problem on the vertices of $V[i,j]$. But since we are considering the discrete case, the optimal center $q'$ for the discrete problem can be found as follows. If $q$ is located at a vertex of $P$, then $q'=q$ and $\alpha'(i,j)$ is equal to the $y$-coordinate of $p$. Otherwise, let $v$ and $v'$ be the two backbone vertices of $P$ immediately on the left and right sides of $q$, respectively. Then, $q'$ is either $v$ or $v'$, and this is because the boundary of $C$, which is the upper envelope of the bounding lines of the upper half-planes defined by  the vertices of $V[i,j]$, is convex. To compute $\alpha'(i,j)$, we do the following. Let $l_v$ be the line in $\bbR^2$ whose $x$-coordinate is equal to $d(v_1,v)$. Defined $l_{v'}$ similarly. Let $p_v$ be the lowest point of $l_v$ in $C$. Define $p_{v'}$ similarly. Then, $\alpha'(i,j)$ is equal to the $y$-coordinate of the lower point of $p_v$ and $p_{v'}$, and the center $q'$ can also be determined correspondingly. Both $v$ and $v'$ can be found by binary search on the backbone vertices of $V$ in $O(\log m)$ time. The point $p_v$ (resp., $p_{v'}$) can be found by a line-constrained 2D sublist LP query in $O(\log t)$ time.

Since $t\leq m$, we can compute $\alpha'(i,j)$ in $O(\log^2 m)$ time.
\qed
\end{proof}

With the preceding lemma, we can use a similar algorithm as in Lemma~\ref{lem:130} to compute $ncen(i)$ and $v(i)$ for all $i\in [1,t]$ in $O(m\log^2 m)$ time. Again, $q(i)$ for all $i\in [1,t]$ can be computed in additional $O(m\log m)$ time  by binary search.

Recall that $m\leq r$.
Hence, the preprocessing time for $P$ is $O(r\log^2 r)$, which is $O(r(\log\log n)^2)$ time since $r=\log^2 n$. The total time for computing the data structure for all substems of $S$ is $O(n(\log\log n)^2)$. With these data structures, we show that a feasibility test can be done in $O(n/r \log^3 r)$ time.

\paragraph{The faster feasibility test {\em \dftestnew}.}
The algorithm is similar as before, and we only explain the differences by referring to the pseudocode given in Algorithm~\ref{algo:dftestnew} for \dftestnew.

\begin{algorithm}
\caption{The faster feasibility test algorithm \dftestnew}
\label{algo:dftestnew}
\KwIn{The stem-tree $T_c$, the original $k$ from the input, and $\lambda\in (\lambda_1,\lambda_2)$}
\KwOut{Determine whether $\lambda$ is feasible}
\BlankLine
$count\leftarrow 0$\;
\For{\em each stem $P$ of the stem tree $T_c$}
{
  $sup(P)\leftarrow \infty$, $dem(P)\leftarrow sup(P)-1$\;
}
\For{\em each stem $P$ in the post-order traversal of $T_c$}
{
    \For{\em each child $P'$ of $P$}
	{
     $sup(P)=\min\{sup(P),sup(P')\}$,
     $dem(P)=\min\{dem(P),dem(P')\}$\;
	}
    Increase $count$ by the number of twigs of $P$\;
    Let $V$ be the set of the uncovered vertices of $P$, and $v_1',v_2',\ldots,v_t'$ are the backbone vertices of $V$\;
    \eIf{$sup(P)\leq dem(P)$}
    {
      Let $q$ be the center in a child stem of $P$ that gives the value $sup(P)$      \;
      Do binary search to find the largest index $i\in [1,t]$ such that $q$ can cover all vertices of $V[1,i]$\;
      \label{ln:binary10}
      \eIf{$i=t$}
      {
        $sup(P)=\min\{sup(P)+d(v_1,v_m),d(b_b,v_m)\} $\tcc*{$b_b$ is the bud maintained for $P$}
      }
      {
         $i++$, $count=count+ncen(i)$\;
         \eIf{$w(v(i))\cdot d(v(i),v_m)< \lambda$}
          {
            $dem(P)=\lambda/w(v(i))-d(v(i),v_m)$, $sup(P)=d(b_b,v_m)$\;
          }
          {
             $count++$, $sup(P)=\min\{d(q(i),v_m),d(b_b,v_m)\}$, $dem(P)=\infty$\;
          }
      }
    }
    {
      \eIf(\tcc*[f]{$b_a$ is the bud maintained for $P$}){$dem(P)\geq d(v_1,b_a)$}
      {
         $count=count+ncen(1)$\;
         \eIf{$w(v(1))\cdot d(v(1),v_m)< \lambda$}
         {
           $dem(P)=\lambda/w(v(1))-d(v(1),v_m)$, $sup(P)=d(b_b,v_m)$\;
         }
         {
           $count++$, $sup(P)=\min\{d(q(1),v_m),d(b_b,v_m)\}$, $dem(P)=\infty$\;
         }
      }
      {
         Do binary search to find the largest $i\in [1,t]$ such that there exists a center $q$ at the backbone vertex of $P$ to cover all vertices of $V[1,i]$ with $d(v_1,q)\leq dem(P)$\;
         \label{ln:binary20}
         \eIf{$i=t$}
         {
           \eIf{\em $w(v(1))\cdot d(v(1),v_m)< \lambda$ and $dem(P)> d(v_1,v_m)$}
           {
              $dem(P)=\min\{dem(P)-d(v_1,v_m), \lambda/w(v(1))-d(v(1),v_m)\}$,  $sup(P)=d(b_b,v_m)$\;
           }
           {
              Do binary search to find the largest $j\in [1,m]$ such that $dem(P)\geq d(v_1,v_j)$\; \label{ln:binary30}
              $count++$,
              $\delta=\max\{d(v_j,v_m),d(q(1),v_m)\}$, $sup(P)=\min\{\delta, d(b_b,v_m)\}$, $dem(P)=\infty$\;
           }
         }
         {
           $i++$, $count=count+1+ncen(i)$\;
           \eIf{$w(v(i))\cdot d(v(i),v_m)< \lambda$}
           {
             $dem(P)=\lambda/w(v(i))-d(v(i),v_m)$, $sup(P)=d(b_b,v_m)$\;
           }
           {
             $count++$, $sup(P)=\min\{d(q(i),v_m),d(b_b,v_m)\}$, $dem(P)=\infty$\;
           }
         }
      }
    }

}
\If{$sup(P_{\gamma})>dem(P_{\gamma})$}
{
        $count++$\;
}
Return true if and only if $count\leq k$\;
\end{algorithm}

First of all, we can still implement Line~\ref{ln:binary10} in $O(\log^2r)$ time by exactly the same algorithm in Lemma~\ref{lem:140}. For the binary search in Line~\ref{ln:binary20}, since now we have a constraint that $q$ must be at a backbone vertex, we need to modify the algorithm in Lemma~\ref{lem:150}, as follows.

We show that given any index $i\in [1,t]$, we can determine in $O(\log^2 r)$ time whether there is a center at a backbone vertex of $P$ that can cover all vertices of $V[1,i]$ with $d(v_1,q)\leq dem(P)$. We first compute the center $q$ and its optimal objective value $\alpha'(1,i)$ for the discrete one-center problem on the vertices of $V[1,i]$, which can be done in $O(\log^2 r)$ time as shown the proof of Lemma~\ref{lem:160}. If $\alpha'(1,i)>\lambda$, then the answer is no. Otherwise, if $d(v_1,q)\leq dem(P)$, then the answer is yes. If $d(v_1,q)>dem(P)$, then let $j$ be the largest index of $[1,m]$ such that $dem(P)\geq d(v_1,v_j)$, and $j$ can be found in $O(\log r)$ time by binary search on the backbone vertices of $P$. Let $l$ be the vertical line of $\bbR^2$ whose $x$-coordinate is equal to $d(v_1,v_j)$. By a line-constrained 2D sublist LP query, we compute the lowest point $p'$ on $l$ in the common intersection of the upper half-planes defined by the vertices of $V[1,i]$. The answer is yes if and only if the $y$-coordinate of $p'$ is at most $\lambda$. Hence, the time to determine the answer to the above question is $O(\log^2 r)$. Therefore, the time for implementing Line~\ref{ln:binary20} is $O(\log^3 r)$.

In addition, it is easy to see that the time of the binary search in Line~\ref{ln:binary30} is $O(\log r)$.

Therefore, processing $P$ takes $O(\log^3 r)$ time, and the total time of \dftestnew\ is $O(n/r \log^3 r)$, the same as before.

\subsubsection{Phase 2}

This phase is the similar as before with the following changes. First, we use \dftestnew\ to replace \ftestnew. Second, we use the new post-processing procedure. Third, instead of using the line arrangement searching technique, we use \msearch. Specifically, in the pseudocode of Algorithm~\ref{algo:kcenter}, we replace Lines~\ref{ln:lineset10} and \ref{ln:vertex10} (and also Lines~\ref{ln:lineset20} and \ref{ln:vertex20} ) by the following. For each leaf-stem of $S$, we form the matrices for $P$ in the way discussed in Section~\ref{sec:dstem}, and let $\calM$ denote the set of matrices for all leaf-stems of $S$. Then, we call \msearch\ on $\calM$ with stopping count $c=0$ and \dftestnew.

The running time of all three phases is still $O(n\log n)$, as shown in Theorem~\ref{theo:dkcenter}.

\begin{theorem}\label{theo:dkcenter}
The discrete $k$-center problem for $T$ can be solved in $O(n\log n)$ time.
\end{theorem}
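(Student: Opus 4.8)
The plan is to prove the bound by summing the running times of the three phases, following the analysis of Theorem~\ref{theo:kcenter} for the non-discrete case but substituting the discrete components, namely \dftest0, \dftestnew, and the array-based set $\calM$ of Section~\ref{sec:dstem}. Since the correctness of the centers produced is already guaranteed by the constructions in this section (the discrete post-processing, the cleanup procedure, and the \dftestnew\ pseudocode), the theorem reduces to a running-time analysis, and that is what I would carry out phase by phase.

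\textbf{Phases 0 and 1.} First I would repeat the amortized argument of Lemma~\ref{lem:90}: the while loop of Phase~0 executes $O(\log r)$ iterations since each iteration removes a constant fraction of the leaf-stems. In the discrete case each array element of $\calM$ is evaluated in $O(1)$ time, so $\kappa=O(1)$ in Lemma~\ref{lem:msearch}; hence the non-test work of \msearch\ is $O(n'\log r)$ per iteration and $O(n\log r)$ overall because $\sum n' = O(n)$, while the $O(\log^2 r)$ feasibility tests run by \dftest0\ in $O(n)$ time each contribute $O(n\log^2 r)$, and the post-processing contributes $O(n)$; thus Phase~0 takes $O(n\log^2 r)=O(n(\log\log n)^2)$ time, comfortably within $O(n\log n)$. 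For Phase~1, \msearch\ is invoked once with stopping count $c=0$ on the $O(n)$ arrays of length $O(r)$; by Lemma~\ref{lem:msearch} it issues $O(\log n)$ test values and does $O(n\log r)$ non-test work, so the $O(\log n)$ calls to \dftest0\ cost $O(n\log n)$ and dominate. Building the per-substem data structure, i.e.\ the indices $a,b$ together with $ncen(i)$, $v(i)$, $q(i)$ via the dynamic program analogous to Lemma~\ref{lem:130} using the $O(\log^2 m)$-time discrete one-center query established in this section, takes $O(r\log^2 r)$ per substem and $O(n(\log\log n)^2)$ in total, so Phase~1 runs in $O(n\log n)$ time.

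\textbf{Phase 2.} The while loop runs $O(\log n)$ times because the number of leaf-stems halves each iteration, so writing $n'$ for the total number of backbone vertices handled in one iteration we again have $\sum n' = O(n)$ over the whole phase. Replacing the line-arrangement searching of the non-discrete case by \msearch\ (with $c=0$, driven by \dftestnew), the non-test work of \msearch\ is $O(n'\log n')$ per iteration and sums to $O(n\log n)$, which is the bottleneck; the number of feasibility tests is $O(\log n')$ per iteration, hence $O(\log^2 n)$ in total, and since each \dftestnew\ call costs $O((n/r)\log^3 r)$ the tests together cost $O(\log^2 n\cdot (n/r)\log^3 r)=O(n(\log\log n)^3)$ for $r=\log^2 n$, which is subsumed. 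Adding the $O(n)$ post-processing and the $O(n\log n)$ preprocessing of Section~\ref{sec:preprocess}, the total over all three phases is $O(n\log n)$.

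\textbf{Main obstacle.} The delicate part is the Phase~2 amortization: one must simultaneously exploit that every leaf-stem of $S$ is deleted in its own iteration (giving $\sum n'=O(n)$) and verify that neither the summed \msearch\ work $\sum O(n'\log n')$ nor the product of the $O(\log^2 n)$ test count with the sublinear \dftestnew\ cost exceeds $O(n\log n)$; the latter is precisely where the choice $r=\log^2 n$ must be pinned down, so that $(n/r)\log^3 r\cdot\log^2 n$ collapses to a lower-order term. Everything else is a routine transcription of the non-discrete analysis, with the only real substitutions being $\kappa=O(1)$ for the arrays of Section~\ref{sec:dstem} and \msearch\ in place of Lemma~\ref{lem:arrangement} in Phases~1 and~2.
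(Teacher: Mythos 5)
Your proposal is correct and takes essentially the same route as the paper's own proof: a phase-by-phase running-time account that uses the $O(\log n)$ bound on while-loop iterations, the amortization $\sum n'=O(n)$ over deleted leaf-stems, the $O(1)$-time evaluation of the array elements of Section~\ref{sec:dstem} (so each \msearch\ call costs $O(n'\log n')$ exclusive of tests), and the $O(n/r\cdot\log^3 r)$ cost of \dftestnew\ multiplied by the $O(\log^2 n)$ total feasibility tests, which collapses to a lower-order term for $r=\log^2 n$. Your explicit bounds for Phases 0 and 1 are consistent with (indeed slightly sharper than) the paper's stated ones, and the overall conclusion matches.
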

\begin{proof}
The analysis is similar to that in Theorem~\ref{theo:kcenter}, we briefly discuss it below. Since Phase 0 and Phase 1 run in $O(n\log n)$ time, we only discuss Phase 2.

Again, the number of iterations of the while loop is $O(\log n)$. Hence, there are $O(\log n)$ calls to \msearch. Each call to \msearch\ produces $O(\log n)$ values for feasibility tests. Therefore, the total number of feasibility tests is $O(\log^2 n)$. With \dftestnew, the total time on feasibility tests is $O(n/r\cdot \log^3 r\cdot \log^2 n)=O(n\log n)$. In each iteration, let $n'$ denote the total number of backbone vertices of all leaf-stems. According to our discussion in Section~\ref{sec:dstem}, the call to \msearch\ takes $O(n'\log n')$ time (excluding the time for feasibility tests) since each matrix element of $\calM$ can be obtained in $O(1)$ time. As the total sum of all such $n'$ is $O(n)$ in Phase 2, the overall time of \msearch\ in Phase 2 is $O(n\log n)$. Also, the overall time for the post-processing procedure in Phase 2 is $O(n)$.
Therefore, the total time of Phase 2 is $O(n\log n)$. This proves the theorem.
\qed
\end{proof}


\bibliography{reference}
\bibliographystyle{plain}

\end{document}